\theoremstyle{definition}
\newtheorem{definition}{Definition}
\newtheorem{theorem}{Theorem}
\newtheorem{lemma}{Lemma}
\newtheorem{example}{Example}
\begin{document}
\title{Linear Time Iterative Decoders for Hypergraph-Product and Lifted-Product Codes}

\author{%
  \IEEEauthorblockN{Asit Kumar Pradhan, Nithin Raveendran, Narayanan Rengaswamy, and Bane Vasi\'c}
  \IEEEauthorblockA{\textit{Department of Electrical and Computer Engineering, The University of Arizona, Tucson, AZ, 85721 USA}\\ 
 Email: \{asitpradhan, nithin , narayananr , vasic\}@arizona.edu}
 }
\maketitle

\begin{abstract}
Quantum low-density parity-check (QLDPC) codes with asymptotically non-zero rates are prominent candidates for achieving fault-tolerant quantum computation, primarily due to their syndrome-measurement circuit's low operational depth. Numerous studies advocate for the necessity of fast decoders to fully harness the capabilities of QLDPC codes, thus driving the focus towards designing low-complexity iterative decoders. However, empirical investigations indicate that such iterative decoders are susceptible to having a high error floor while decoding QLDPC codes.
The main objective of this paper is to analyze the decoding failures of the \emph{hypergraph-product} and \emph{lifted-product} codes and to design decoders that mitigate these failures, thus achieving a reduced error floor. The suboptimal performance of these codes can predominantly be ascribed to two structural phenomena: (1) stabilizer-induced trapping sets, which are subgraphs formed by stabilizers, and (2) classical trapping sets, which originate from the classical codes utilized in the construction of hypergraph-product and lifted-product codes.
The dynamics of stabilizer-induced trapping sets is examined and a straightforward modification of iterative decoders is proposed to circumvent these trapping sets.
Moreover, this work proposes a systematic methodology for designing decoders that can circumvent classical trapping sets in both hypergraph product and lifted product codes, from decoders capable of avoiding their trapping set in the parent classical LDPC code.
When decoders that can avoid stabilizer-induced trapping sets are run in parallel with those that can mitigate the effect of classical TS, the logical error rate improves significantly in the error-floor region.

\end{abstract}


\section{Introduction}

Quantum low-density parity-check (QLDPC) codes are emerging as strong contenders for both quantum computing and communications. These codes build upon the success of classical LDPC codes, known for facilitating low-complexity decoding and approaching capacity performance. As discussed by Gottesman~\cite{gottesman_fault_tolerant_ldpc} and Kovalev, Pryadko~\cite{pryadko_fault_tolerant_ldpc}, QLDPC codes enable fault-tolerant error correction with an asymptotically nonzero rate among quantum error correction (QEC) codes. In their work~\cite{LP_codes}, Panteleev and Kalachev introduce a family of QLDPC codes called \emph{lifted-product} (LP) codes, which possess an almost linear minimum distance and constant rate. 
In addition to their exceptional distance characteristics, QLDPC codes~\cite{mackay_quantum} have low-weight stabilizer generators that result in shallow syndrome-extraction circuits, enhancing their appeal for fault-tolerant quantum computations. 
\begin{figure}
\[
    \begin{array}{c}
   \Qcircuit @C=0.6em @R=1.25em {
& \lstick{\ket{\psi}} &\qw& \ctrl{1} &\qw &\qw &\qw  &\gate{S} & \qw & \ustick{T\ket{\psi}}\qw  \\
& \lstick{\ket{T}} &\qw&  \targ &\qw  &\meter &\cw & \cctrlo{-1} \\
}
\end{array}
\]
\caption{The figure shows the implementation of $T$ gate using measurement and Clifford circuit given a $T$ state given by $\frac{1}{2}\left(\ket{0}+\exp{i\frac{\pi}{4}}\ket{1}\right)$.} 
\label{fig:T_state_injection}
\end{figure}
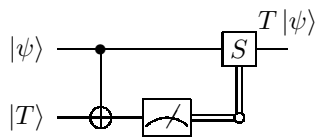

For fault-tolerant computation, in addition to having good codes, designing low-complexity decoders is paramount. 
Consider the state injection circuit depicted in Figure~\ref{fig:T_state_injection}. To implement a $T$ gate on the first qubit, the circuit initially performs a CNOT operation with the first qubit as the control and the second qubit, which is set to $\ket{T}=\frac{1}{2}\left(\ket{0}+\exp{i\frac{\pi}{4}}\ket{1}\right)$, as the target. Subsequently, the second qubit undergoes measurement, and depending on the result, a $s$ gate is applied to the first qubit to achieve complete execution of the $T$ gate.
When the $T$ state corresponds to a logical state, decoding is essential to determine the outcome of the measurement. 
Consequently, if the decoder is not fast enough, it increases the logical clock cycle, thereby diminishing the acceleration provided by the quantum algorithms\cite{Preskill_megaquop, delfosse2023choosedecoderfaulttolerantquantum}. This advantage could be entirely negated if the $T$ depth exceeds a certain number.

Although several families of QLDPC codes exhibit exceptional distance scaling, there remains uncertainty about their feasibility for hardware implementation. This is primarily attributable to the requirement for long-range connections and concerns surrounding the fault-tolerant execution of gates necessary to achieve universal computation with these codes. Recent advances suggest that \emph{hypergraph-product} (HP) and \emph{lifted-product} (LP) codes are particularly compatible with neutral atom platforms \cite{xu2024constant}, while bivariate bicycle codes are suitable for superconducting platforms \cite{Bravyi_2024}. In \cite{patra2024targetedcliffordlogicalgates}, it has been demonstrated that Clifford gates can be enacted within HP codes, which can then be extended to a universal gate set through the injection of distilled magic states into the circuit. These promising developments underscore the necessity of designing low-complexity decoders for QLDPC codes.

Iterative message-passing decoders are considered promising candidates for the decoding of QLDPC codes, primarily due to their low complexity in implementation and their ability to attain low error rates when used for the decoding of classical LDPC codes.
However, message-passing decoders do not perform well when used to decode QLDPC codes, unlike their classical counterparts. This is mainly attributable to two different types of trapping sets (TSs) encountered in QLDPC codes. 
First, since HP and LP codes are constructed by graph products of the Tanner graphs of two classical LDPC codes, the Tanner graph corresponding to the resulting product code inherits the TSs present in the constituent classical LDPC codes. 
In fact, a TS of the constituent classical LDPC appears multiple times as isomorphic copies in the Tanner graph of the product codes.
These TSs are referred to as classical TSs since they are inherited from the constituent classical LDPC codes.
 Second, QLDPC codes can be thought of as two dual-containing classical LDPC codes. This dual-containing property leads to a special type of trapping sets known as \emph{stabilizer-induced} TSs or quantum TSs (QTSs)\cite{QTS}. 
 Several decoders have been proposed to address decoder convergence due to the issues mentioned above~\cite{Poulin_2008,Nithin_stochastic_GallagerB_Quantum,inactivation_decoder,Refined_BP,pruned_neural_BP,Dimitri_two_bit_flipping,Dimitri2025enhancedminsumdecodingquantum,pradhanISTC,iolius2024almostlineartimedecodingalgorithm,yao2024beliefpropagationdecodingquantum}. 
In \cite{BP_OSD}, the authors employ a post-processing technique known as an ordered statistics decoder (OSD) when the message-passing decoder encounters TS. 
Another study \cite{inactivation_decoder} applies the message-passing decoder to parts of the Tanner graph in which TSs are absent, and uses a post-decoding step to address errors in TSs. 
Both methods described in \cite{BP_OSD,inactivation_decoder} require matrix inversion during the post-processing phase, leading to an increase in the decoding complexity. 
According to~\cite{Refined_BP}, a serially scheduled normalized belief propagation decoder can bypass TS if the normalization constant is selected carefully. 
However, the improvement in error suppression comes at the cost of latency due to the serial schedule.
The work in~\cite{Dimitri_two_bit_flipping} introduces a two-bit flipping decoder optimized to bypass TSs in both HP and generalized HP (GHP) codes, specifically those with variable nodes of degree three and check nodes of degree six. Despite its low logical error rate, the applicability of this method does not extend beyond codes with these specific node degrees. 
In~\cite{pradhanISTC}, a modified min-sum decoder is proposed, which adjusts the bias in message-passing rules to avoid TS. 
The decoder in~\cite{Dimitri2025enhancedminsumdecodingquantum} adopts a similar strategy to~\cite{pradhanISTC} prevent TSs in two block QLDPC codes. 
Both techniques depend on parameter optimization for a particular code, indicating that parameters that are fine-tuned for one code may not be compatible with others.
In contrast to the majority of existing research focused on the development of low-complexity decoders from the beginning, we propose a methodology that leverages the design of decoders for HP and LP codes by utilizing the decoders of their underlying classical LDPC codes. This approach is promising due to the existence of numerous strategies for constructing low-complexity iterative decoders that exhibit outstanding performance\cite{henery_neural,FAID_diversity}.


\subsection{Contributions}
This paper focuses on designing parallel, TS-aware, message-passing decoders for HP and LP codes that generalize to codes with arbitrary node degree and do not require a post-processing step. 
To this end, we introduce a methodology to examine the decoding dynamics in TSs induced by either a stabilizer generator or a linear combination of stabilizer generators.
To the best of our knowledge, this is the first work to consider the study of TSs induced by a stabilizer that is a linear combination of two or more stabilizer generators.
Using TS dynamics, we introduce a straightforward modification to the bit-flipping decoder to prevent stabilizer-induced TSs. Further, we characterize how isomorphic copies of a classical TS from constituent codes manifest in the Tanner graph of HP or LP codes. 
This facilitates the design of decoders that circumvent classical trapping sets in HP codes, contingent upon the existence of a suitable decoder for classical LDPC codes that is capable of avoiding such TSs. 
The proposed method thus streamlines the decoder design process, since there exist algorithms in the literature to enumerate TSs of the classical LDPC code\cite{Expansion_contraction} and decoders capable of avoiding them\cite{henery_neural}.
\subsection{Organization of the paper}
Section~\ref{sec:prelim} provides a brief description of the classical LDPC codes and the methodology to construct HP and LP codes based on them, including the necessary notational framework.
In Section~\ref{sec:TS_QLDPC}, we study the decoding dynamics within the subgraphs induced by the stabilizers and show that they constitute trapping sets.
In Section~\ref{sec:TS_aware_BF_decoder}, we propose a decoder that avoids trapping sets induced by stabilizers.
In Section~\ref{sec:QTS_LP_codes}, we show that the stabilizer-induced subgraphs present in the base graph retain their structure in the corresponding Tanner graphs of LP codes, implying that the conclusions regarding stabilizer-induced TSs of the HP codes carry over to the LP codes.
In Section~\ref{sec:classical_TS}, we characterize the trapping sets inherited by the HP and LP codes from their constituent classical codes.
In Section~\ref{sec:min_sum_decoder}, we propose an approach to design decoders based on decoder diversity to avoid both classical and stabilizer-induced trapping sets.
In Section~\ref{sec:simulation_results}, we present numerical results from simulation of decoders proposed in Section~\ref{sec:min_sum_decoder}
\section{Preliminaries}
\label{sec:prelim}
In this section, we set the notation and briefly recall the definitions related to the classical LDPC codes, the depolarizing channel, the stabilizer formalism, and quantum LDPC codes.
\subsection{Notations}
We use bold-face capital letters to denote matrices and bold-face small letters to denote vector variables. 
We use $\mathbf{M}(i,j)$ to denote the $(i,j)-th$ entry of matrix $\mathbf{M}$.
We use $\mathbf{B}_{ij}$to denote the $(i,j)$-th block of block matrix $\mathbf{B}$.
We use $[n]$ to denote the natural numbers from $1$ to $n.$
We use $\mathbf{I}_n$ to denote the identity matrix of dimension $n \times n$.
We denote cardinality of set $\mathcal{A}$ by $|\mathcal{A}|$. We will assume that vectors without transposes are row vectors unless otherwise stated. We represent the absolute value of a scalar variable by $|\cdot|.$
The symbol $\oplus$ denotes modulo two addition, while symbol $\otimes$ denotes the tensor product.

\subsection{Classical LDPC Codes}
\label{sec:classical_LDPC}
In this article, our focus is on binary LDPC codes. These codes can be represented using a bipartite graph, commonly referred to as a Tanner or Factor graph, denoted by $\mathscr{G}$. This graph includes $m$ right nodes (referred to as check nodes), $n$ left nodes (referred to as variable nodes), and $\mathcal{O}(n)$ edges. The collection of variable nodes, check nodes, and the edges that connect variable nodes and check nodes are denoted by $\mathcal{V}, \mathcal{C},$ and $\mathcal{E}$, respectively. 
The variable and check nodes connected by edge $e_i \in \mathcal{E}$ are designated by $v(e_i)$ and $c(e_i)$, respectively. 
The set of variable nodes (or check nodes) that are connected to a particular check node $c \in \mathcal{C}$ (or variable node $v \in \mathcal{V}$) is denoted by $\mathcal{N}_{c}$ ($\mathcal{N}_v$), and given by $$\mathcal{N}_{c}=\{v \in \mathcal{V}: \exists e \in \mathcal{E},  c(e)=c, \text{and } v(e)=v\}, \text{ and }$$$$ \mathcal{N}_{v}=\{c \in \mathcal{C}: \exists e \in \mathcal{E},  v(e)=v, \text{and } c(e)=c\}.$$ 
The bi-adjacency matrix $\mathbf{H} \in \mathbb{F}_2^{m \times n}$ of the Tanner graph $\mathscr{G}$ serves as the parity-check matrix for the corresponding code. Based on the parity-check matrix $\mathbf{H}$, the associated codebook is defined by \begin{equation}
    \label{eq:codebook}
   \mathcal{U} = \{\mathbf{u} \in \mathbb{F}_2^n:\mathbf{H}\mathbf{u}^{\mathsf{T}}=\mathbf{0}\},
\end{equation} where the operation of matrix-vector multiplication $\mathbf{H}\mathbf{u}^{\mathsf{T}}$ is computed in the binary field. 
The number of non-zero elements in a codeword $\mathbf{u}$ is denoted by $\text{wt}(\mathbf{u}),$ a metric that is often referred to as the Hamming weight of the codeword. For a linear code, the minimum distance, symbolized by $d_{\mathrm{min}},$ is defined by \begin{equation*}
   d_{\mathrm{min}} = \min_{\mathbf{u} \in \mathcal{U}} \text{wt}(\mathbf{u}). 
\end{equation*}
Analyzing a single code's performance is challenging, so it is standard to analyze an ensemble of codes, and then argue that the performance of a randomly chosen code from the ensemble converges to the ensemble average. 
Common ensembles include standard, multi-edge, and protograph ensembles\cite{mct}. We will briefly describe the protograph ensemble as it is used in the construction of \emph{lifted-product} codes, the main focus of this article.
\subsubsection{Protograph LDPC codes}
\label{sec:protograph-ldpc-code}
A protograph encapsulates the local structures of a collection of Tanner graphs, the collection of which is referred to as the code ensemble defined by the protograph.
A protograph $\mathscr{\underline{G}}=(\mathcal{\underline{\mathcal{V}}} \cup \underline{\mathcal{C}},\underline{\mathcal{E}})$ is a bipartite graph, where $\underline{\mathcal{V}}$ (or $\underline{\mathcal{C}}$) is the set of different types of variable nodes (or, respectively, check nodes), and $\mathcal{E}'$ is the set of different types of edges in the codes belonging to the ensemble defined by $\underline{\mathscr{G}}$.
 The nodes and edges in the protograph are ordered, and the $i$-th variable node, the check node, and the edge type are denoted, respectively, by $\underline{v}_i$, $\underline{c}_i$, and $\underline{e}_i$. 
 
A protograph is represented by a base matrix $\mathbf{B}$ of dimension $m' \times n' $, whose $(i,j)$-th element $\mathbf{B}(i,j)$ is the number of edges between $\underline{c}_i$ and $\underline{v}_j$. 
A $m \times n$ parity-check matrix from the ensemble defined by base matrix $\mathbf{B}$ is obtained by replacing $\mathbf{B}(i,j)$ with a $l \times l $ matrix $\mathbf{M}_{i,j}$ such that the columns and rows of $\mathbf{M}_{i,j}$ sum to $\mathbf{B}(i,j),$ where $\gamma=\frac{m}{m'}$
In addition to the above constraints, if the matrices $\mathbf{M}_{i,j},$ for $i \in [m']$ and $j \in [n'],$ are circulants, then the resulting parity-check matrix $\mathbf{H}$ will have quasi-cyclic structure, and hence, the corresponding code is called \emph{quasi-cyclic} LDPC code.
The set of binary circulant matrices $l \times l$ is isomorphic to the quotient polynomial ring $\mathcal{R}_\gamma = \mathbb{F}_2[x]/(x^\gamma-1).$
So $\mathbf{M}_{i,j}$ can be represented by the corresponding element from $\mathcal{R}_\gamma$.
With these definitions, the corresponding parity-check matrix $\mathbf{H}$ can be compactly represented by $\mathbf{W} \in \mathcal{R}_\gamma^{ m' \times n'},$ where the matrix representation of $\mathbf{W}(i,j) $ is $\mathbf{M}_{i,j}.$
The above process of obtaining a parity-check matrix from the base matrix is called lifting.
The lifting process can also be described using a copy and permute operation.
To obtain a Tanner graph $\mathscr{G}$ from a protograph $\underline{\mathscr{G}},$ the protograph is first copied, say $\gamma$, times. 
Let $(\underline{v},t), (\underline{c},t)$, and $(\underline{e},t)$ denote the $t$-th copy of variable node $\underline{v} \in \underline{\mathcal{V}}$, check node $\underline{c} \in \underline{\mathcal{C}}$ and edge $\underline{e} \in \underline{\mathcal{E}}$, respectively. 
Then, a permutation map $\pi_{\underline{e}}$ which maps $[\gamma]$ to $[\gamma]$ is assigned to each edge type $\underline{e} \in \underline{\mathcal{E}}$.
If an edge $\underline{e}\in \underline{\mathcal{E}}$ connects variable node $\underline{v}$ to check node $\underline{c}$ in the protograph, then after applying the permutation, the lifted edge $(\underline{e},\pi_{\underline{e}}(t))$ connects variable node $(\underline{v},t)$ to check node $(\underline{c},\pi_{\underline{e}}(t)).$
In the special case of \emph{quasi-cyclic} LDPC codes, if $\pi_{\underline{e}}(t)=a,$ then $\pi_{\underline{e}}(t+1 \quad (\text{mod }   \gamma) )=a+1 \quad  (\text{mod } \gamma).$
\begin{example}
For example, consider a base matrix 

\begin{equation}
\begin{split}
\mathbf{B}=\begin{bmatrix}
1&1&1\\
1&1&1
\end{bmatrix}
\end{split}
\quad
\begin{split}
    \mathbf{H} = 
    \begin{array}{|c|c|c|}
    \begin{matrix}
        1 & 0\\
        0 & 1\\
    \end{matrix} &
    \begin{matrix}
        0 & 1\\
        1 & 0\\
    \end{matrix} &
    \begin{matrix}
        1 & 0\\
        0 & 1\\
    \end{matrix}\\
    \hline
     \begin{matrix}
        0 & 1\\
        1 & 0\\
    \end{matrix} &
    \begin{matrix}
        1 & 0\\
        0 & 1\\
    \end{matrix} &
    \begin{matrix}
        0 & 1\\
        1 & 0\\
    \end{matrix}\\
    \end{array}
\end{split}
\label{eq:proto}
\end{equation}
The protograph corresponding to $\mathbf{B}$ is shown in Fig.~\ref{fig:protex}.
The Tanner graphs lifted from $\mathbf{B}$ will have three groups of variable nodes, two groups of check nodes, and six groups of edges, respectively, corresponding to the three columns, two rows and six non-zero entries in $\mathbf{B}.$ 
Parity-check matrix $\mathbf{H}$ in \eqref{eq:proto} is obtained by choosing
\begin{align*}
  \mathbf{M}_{11}=  \begin{bmatrix}
        1 & 0\\
        0 & 1
    \end{bmatrix},
    \quad
     \mathbf{M}_{12}=  \begin{bmatrix}
        0 & 1\\
        1 & 0
    \end{bmatrix},
    \quad
    \mathbf{M}_{13} =  \begin{bmatrix}
        1 & 0\\
        0 & 1
    \end{bmatrix},\\
      \mathbf{M}_{21}=  \begin{bmatrix}
        0& 1\\
        1 & 0
    \end{bmatrix},
    \quad
     \mathbf{M}_{22}=  \begin{bmatrix}
        1 & 0\\
        0 & 1
    \end{bmatrix},
    \quad
    \mathbf{M}_{23} =  \begin{bmatrix}
        0 & 1\\
        1 & 0
    \end{bmatrix}.
\end{align*}
The Tanner graph corresponding to  $\mathbf{H}$ is shown in Fig.~\ref{fig:Tanner_graph}. 
Since the parity-check matrix in \eqref{eq:proto} has a quasi-cyclic structure, it can be compactly written as 
\begin{equation*}
    \begin{bmatrix}
        x^0 & x^1 & x^0\\
        x^1 & x^0 & x^1
    \end{bmatrix}.
\end{equation*}
In terms of the copy and permutation, the protograph corresponding to base matrix $\mathbf{B}$ is copied two times. The permutation map assigned to different edge types are given by
\begin{equation*}
\begin{array}{rr}
\pi_{\underline{e}_1}(1)=1, \pi_{\underline{e}_1}(2)=2 &  \pi_{\underline{e}_2}(1)=2,  \pi_{\underline{e}_2}(2)=1\\
    \pi_{\underline{e}_3}(1)=1,  \pi_{\underline{e}_3}(2)=2 & \pi_{\underline{e}_4}(1)=2,  \pi_{\underline{e}_4}(2)=1\\
        \pi_{\underline{e}_5}(1)=1,  \pi_{\underline{e}_5}(2)=2 & \pi_{\underline{e}_6}(1)=2,  \pi_{\underline{e}_6}(2)=1.\\
\end{array}
\end{equation*}
\begin{figure*}
\begin{subfigure}[b]{0.4\textwidth}
    \centering
    \input{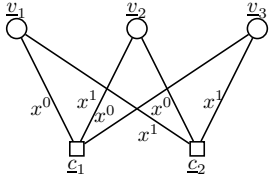}
    \caption{Protograph for base matrix in \eqref{eq:proto}.}
    \label{fig:protex}
\end{subfigure}
\begin{subfigure}[b]{0.6\textwidth}
    \centering
    \input{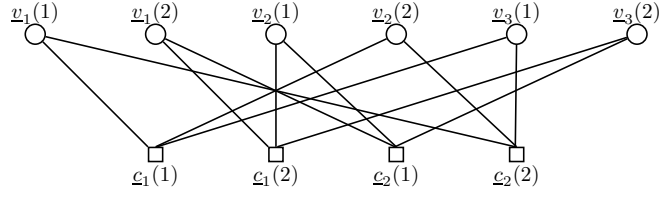}
    \caption{Tanner graph corresponding to parity-check matrix $\mathbf{H}$ in \eqref{eq:proto}.}
    \label{fig:Tanner_graph}
\end{subfigure}
\caption{The figure illustrates lifting of the protograph to a code.}
\end{figure*}
\label{ex:lifting_illustration}
\end{example}

\subsection{Stabilizer Formalism and CSS Codes}
\label{stabilizer_formalism}
Let us denote the $n$-qubit Pauli group by $\mathcal{P}_n=i^l \{I,X,Y,Z\}^{\otimes n}$, $0 \le l \le 3$, where $\otimes n$ is the $n$-fold tensor product, $X$, $Y$, and $Z$ are Pauli matrices, $I$ is the $2 \times 2$ identity matrix, and $i^l$ is the phase factor. 
Let $\mathcal{S}= \langle S_1, S_2, \cdots, S_m \rangle$, $-{I} \notin \mathcal{S}$, be an Abelian subgroup of $\mathcal{P}_n$ with generators ${S}_i$, $1 \leq i \leq m$. 
A $(n,k)$ quantum stabilizer code \cite{Gottesman96} is  a $2^k$-dimensional subspace $\mathcal{C}$ of the Hilbert space $(\mathbb{C}^2)^{\otimes n}$ given by the common $+1$ eigenspace of stabilizer group $\mathcal{S}$:
\begin{equation*}
\label{eq:StabilizerCode}
\mathcal{C} = \{\ket{\psi}, ~~ \text{s.t. } S_i\ket{\psi}= \ket{\psi}, \forall i \}.
\end{equation*}
Every element of the stabilizer group $\mathcal{S}$ is mapped to a binary tuple as follows: $I \rightarrow (0,0),\; X \rightarrow (1,0),\;Z \rightarrow (0,1),\;Y \rightarrow (1,1)$. 
This mapping gives a matrix representation of the stabilizer generators called the parity-check matrix, denoted by $\mathbf{H}$, which is given by
$\mathbf{H}=\begin{bmatrix}
\mathbf{H}_{\mathrm{X}}~|~\mathbf{H}_{\mathrm{Z}}
\end{bmatrix},$
where $\mathbf{H}_{\mathrm{X}}$ and $\mathbf{H}_{\mathrm{Z}}$ represent binary matrices for bit flip and phase flip operators, respectively. 
Note that $\mathbf{H}$ is a $m \times 2n$ matrix.
Similar to the Pauli representation, the stabilizers also commute with respect to \emph{symplectic inner product} in the binary representation \cite{Nielsen}. 
If the stabilizers $S_i$ for all $i \in [m]$ are the $n$-fold tensor product of the operators of the set $\{I, X\}$ or of the set $\{I,Z\},$ then the corresponding code is referred to as the \emph{Calderbank-Shor-Steane} (CSS) code.
Therefore, the parity-check matrices of CSS codes should have the following form 
\begin{equation}
    \label{eq:CSS}
    \mathbf{H} = \begin{bmatrix}
        \mathbf{H}_{\mathrm{X}} & \mathbf{0}\\
        \mathbf{0} & \mathbf{H}_{\mathrm{Z}}
    \end{bmatrix},
\end{equation}
such that $\mathbf{H}_{\mathrm{X}}\mathbf{H}_{\mathrm{Z}}^{\mathsf{T}}=\mathbf{0}.$
 We denote the codebook corresponding to $\mathbf{H}_{\mathrm{X}}$  $(\text{ or }\mathbf{H}_{\mathrm{Z}})$ and its dual code by  $\mathcal{U}_{\mathrm{X}}$ and $\mathcal{U}_{\mathrm{X}^{\perp}}$ (or $\mathcal{U}_{\mathrm{Z}}$ and $\mathcal{U}_{\mathrm{Z}}^{\perp}$), respectively. 
From the CSS condition given in \eqref{eq:CSS}, it can be deduced that $\mathcal{U}_{\mathrm{X}}^{\perp} \subseteq \mathcal{U}_{\mathrm{Z}}$ and $\mathcal{U}_{\mathrm{Z}}^{\perp} \subseteq \mathcal{U}_{\mathrm{X}},$ which means that two dual-containing linear codes define a CSS quantum code.
Elements of the set of operators that map the codespace to itself are called logical operators.
The binary representation of the set of $X$ (or $Z$)-logical operators is given by codebook $\mathcal{U}_{\mathrm{x}}$ corresponding to $\mathbf{H}_{\mathrm{X}}$ $(\text{ or }\mathbf{H}_{\mathrm{Z}}).$
Since $\mathcal{U}_{\mathrm{X}}^{\perp} \subseteq \mathcal{U}_{\mathrm{Z}}$ and $\mathcal{U}_{\mathrm{Z}}^{\perp} \subseteq \mathcal{U}_{\mathrm{X}}$ are sets of stabilizers, they would act on the state of the quantum codeword trivially and are called \emph{degenerate} logicals. 
The minimum distance of a CSS quantum code is the minimum-weight logicals that act non trivially on quantum codeword state, and given by
\begin{equation*}
 d_{\min} = \min\left( d_{\mathrm{X}}, d_{\mathrm{Z}}\right),
    \label{eq:min_distance_CSS}
\end{equation*}
where 
\begin{equation*}
d_{\mathrm{X}} = \min_{\mathbf{u} \in \mathcal{U}_{\mathrm{X}}\setminus \mathcal{U}_{\mathrm{Z}}^{\perp} }\text{wt}(\mathbf{u}), 
\text{ and }
d_{\mathrm{Z}} = \min_{\mathbf{u} \in \mathcal{U}_{\mathrm{Z}}\setminus \mathcal{U}_{\mathrm{X}}^{\perp} }\text{wt}(\mathbf{u}). 
    \label{eq:min_distance_cx_cz}
\end{equation*}
\subsection{Depolarizing Channel}
\label{sec:Depolarizingchannel}
In this work, we consider the depolarizing channel (memoryless Pauli channel), characterized by the depolarizing probability $p$ in which the error $E$ on each qubit is a Pauli operator, and the error on a qubit is independent of the error on other qubits. 
The set of Pauli operators is given by $\mathcal{P} = \{I,X,Y,Z\}$.
In particular, $\Pr(E=X)=\Pr(E=Y)=\Pr(E=Z)=p/3,\Pr(E=I)=1-p$. Similarly to the stabilizers, a Pauli error vector on the $n$ qubits can be expressed as a binary error vector of length $2n$ by mapping the Pauli operators to binary tuples. 
Let $\mathbf{e}=\begin{bmatrix}
    \mathbf{e}_{\mathrm{X}} & \mathbf{e}_{\mathrm{Z}}
\end{bmatrix}$ be the binary representation of Pauli error acting on the $n$ qubits.
The corresponding syndrome, denoted by $\boldsymbol{\sigma}$, is given by 

\begin{equation*}
\boldsymbol{\sigma} = \begin{bmatrix}
    \boldsymbol{\sigma}_{\mathrm{X}} & \boldsymbol{\sigma}_{\mathrm{Z}} 
\end{bmatrix}
= 
\left(\begin{bmatrix}
    \mathbf{H}_{\mathrm{Z}}\mathbf{e}_{\mathrm{X}}^{\mathsf{T}} & \mathbf{H}_{\mathrm{X}}\mathbf{e}_{\mathrm{Z}}^{\mathsf{T}}
\end{bmatrix}\right)^{\mathsf{T}} (\text{mod }  2)
\end{equation*}
\subsection{Quantum LDPC Codes}
This section briefly describes the constructions of both HP and LP codes. 
\subsubsection{Hypergraph product codes}
\label{sec:hypergraph_product}
Given two classical LDPC codes $\mathbf{H}_1$ and $\mathbf{H}_2$, respectively, of sizes $m_1 \times n_1$ and $m_2 \times n_2$, the hypergraph product gives two binary parity-check matrices $\mathbf{H}_{\mathrm{X}}$ and $\mathbf{H}_{\mathrm{Z}}$, which are the binary representation
$X$-type and $Z$-type stabilizer generators such that they satisfy the CSS condition defined in \eqref{eq:CSS}. Algebraically, $\mathbf{H}_{\mathrm{X}}$ and $\mathbf{H}_{\mathrm{Z}}$ are given by
\begin{align}
    \nonumber \mathbf{H}_{\mathrm{X}} & = \begin{bmatrix}
        \mathbf{H}_1 \otimes \mathbf{I}_{n_2} & \mathbf{I}_{m_{1}} \otimes \mathbf{H}_2^\mathsf{T}
    \end{bmatrix},\\
         \mathbf{H}_{Z} & = \begin{bmatrix}
       \mathbf{I}_{n_{1}} \otimes \mathbf{H}_2 &   \mathbf{H}_1^{\mathsf{T}} \otimes  \mathbf{I}_{m_{2}}   
    \end{bmatrix}.
     \label{eq:hyper_graph_product}
\end{align}
Since the iterative decoders are run on the Tanner graphs of $\mathbf{H}_{\mathrm{X}}$ and $\mathbf{H}_{\mathrm{Z}},$ the graphical description of the HP construction aids the understanding of decoding failures. 
Next, we describe the hypergraph product construction as the graph product of two Tanner graphs. Let $\mathscr{G}_1(\mathcal{V}_1 \cup \mathcal{C}_1, \mathcal{E}_1)$ $\left(\text{ or }\mathscr{G}_2(\mathcal{V}_2 \cup \mathcal{C}_2, \mathcal{E}_2)\right)$ be the Tanner graph corresponding to $\mathbf{H}_1$ (or $\mathbf{H}_2$). 
Then, the Tanner graphs corresponding to the $X$ and $Z$
stabilizers are given by  $\mathscr{G}_{\mathrm{Z}}(\mathcal{Q} \cup \mathcal{C}_{\mathrm{X}})$ and $\mathscr{G}_{\mathrm{X}}(\mathcal{Q} \cup \mathcal{C}_\mathrm{Z}),$ respectively, where $\mathcal{Q}= \left(\mathcal{V}_2 \times \mathcal{V}_1 \cup  \mathcal{C}_2 \times \mathcal{C}_1\right), \mathcal{C}_{\mathrm{X}} = \mathcal{C}_2 \times \mathcal{V}_1,$ and  $\mathcal{C}_{\mathrm{Z}} = \mathcal{V}_2 \times \mathcal{C}_1.$
To distinguish variable nodes (or check nodes) from $\mathcal{V}_1$ $(\text{ or }\mathcal{C}_1)$ and $\mathcal{V}_2$ $(\text{ or }\mathcal{C}_2),$ we denote variable nodes from $\mathcal{V}_1$ $(\text{ or }\mathcal{C}_1)$ by $v^1_i$ (or $c^1_j$) and $\mathcal{V}_2$ $(\text{ or }\mathcal{C}_2)$ by $v^2_i$ (or $c^2_j$).
We refer to the nodes in sets $\mathcal{V}_2 \times \mathcal{V}_1, \mathcal{C}_2 \times \mathcal{C}_1, \mathcal{C}_2 \times \mathcal{V}_1, $ and $\mathcal{V}_2 \times \mathcal{C}_1$  as VV-type, CC-type, $X$-type, and $Z$-type, respectively.

For ease of exposition, instead of defining the edges of $\mathscr{G}_{\mathrm{X}}$ and $\mathscr{G}_{\mathrm{Z}}$ in terms of the edges of $\mathscr{G}_1$ and $\mathscr{G}_2$, we define the set of variable nodes connected to every check node in $\mathcal{C}_{\mathrm{X}}$ and $\mathcal{C}_{\mathrm{Z}}$.
Let $\mathcal{N}_{\left(c^2_jv^1_i\right)}^{\mathrm{X}}$ denotes the set of variable nodes connected to check node $\left(c^2_jv^1_i\right) \in \mathcal{C}_{\mathrm{X}},$ where $c^2_j \in \mathcal{C}_{2}$ and $v^1_i \in \mathcal{V}_{1}.$
Then $\mathcal{N}_{\left(c^2_jv^1_i\right)}^{\mathrm{X}}$ is given by
\begin{equation}
\mathcal{N}_{\left(c^2_jv^1_i\right)}^{\mathrm{X}} = \left(\mathcal{N}^2_{c^2_j} \times v^1_i \right) \cup \left( c^2_j \times \mathcal{N}^1_{v^1_i}\right),
    \label{eq:hp_cx_neighbor}
\end{equation}
where $ \mathcal{N}_{c_j^2}^2$ and $\mathcal{N}_{v_i^1}^1$ are neighbors of $ c_{j}^2$ in $\mathscr{G}_2$ and $ v_{i}^1$ in $\mathscr{G}_1$,
respectively.
Similarly $\mathcal{N}_{\left(v^2_ic^1_j\right)}^{\mathrm{Z}}$ denotes the set of variable nodes connected to check node $\left(v^2_ic^1_j\right) \in \mathcal{C}_{\mathrm{Z}},$ and is given by
\begin{equation}
\mathcal{N}_{\left(v^2_ic^1_j\right)}^{\mathrm{Z}} =  \left(\mathcal{N}^2_{v^2_i} \times c^1_j \right) \cup \left(v^2_i \times \mathcal{N}^1_{c^1_j}\right),
    \label{eq:hp_cz_neighbor}
\end{equation}
where $\mathcal{N}_{v_i^2}^2 $ and $\mathcal{N}_{c_j^1}^1$ are the neighbors of $ v_{i}^2$ in $\mathscr{G}_2$ and $c_{j}^1$ in $\mathscr{G}_1,$
respectively.
\subsubsection{Lifted-product codes}
\label{sec:LP_codes}
LP codes are the generalization of the HP codes. 
Consider the base matrices $\mathbf{B}_1 \in \mathbb{F}_2^{m_{B_1}\times n_{B_1}}$ and $\mathbf{B}_2 \in \mathbb{F}_2^{m_{B_2}\times n_{B_2}}$ corresponding to two classical quasi-cyclic LDPC codes. 
Let the non-zero entries of matrix $\mathbf{W}_1 \in \mathcal{R}_l^{m_{B_1}\times n_{B_1}}$ $(\text{ or }\mathbf{W}_2 \in \mathcal{R}_l^{m_{B_2}\times n_{B_2}})$ represent the circulants corresponding to the non-zero entries of  $\mathbf{B}_1$  $(\text{ or }\mathbf{B}_2 )$.
Given these two classical base matrices, the LP construction gives the base matrices corresponding to the $X-$stabilizers, denoted by $\mathbf{B}_{\mathrm{X}},$ and $Z-$stabilizers,  denoted by $\mathbf{B}_{\mathrm{Z}},$ which are given by
\begin{align*}
    \mathbf{B}_{\mathrm{X}} & = \begin{bmatrix}
        \mathbf{B}_1 \otimes \mathbf{I}_{n_{B_2}} & \mathbf{I}_{m_{B_1}} \otimes \mathbf{B}_2^{\mathsf{T}}
    \end{bmatrix},\\
     \mathbf{B}_{\mathrm{Z}} & = \begin{bmatrix}
       \mathbf{I}_{n_{B_1}} \otimes \mathbf{B}_2 &   \mathbf{B}_1^{\mathsf{T}} \otimes  \mathbf{I}_{m_{B_2}}   \\
    \end{bmatrix}.
    \label{eq:lifted_product_base_matrices}
\end{align*}
To obtain a CSS code, two matrices $\mathbf{W}_{\mathrm{X}}$ and $\mathbf{W}_{\mathrm{Z}}$, respectively, representing circulants of $\mathbf{B}_{\mathrm{X}}$ and $\mathbf{B}_{\mathrm{Z}}$ such that the resulting 
parity-check matrices $\mathbf{H}_{\mathrm{X}}$ and $\mathbf{H}_{\mathrm{Z}}$ 
from lifting satisfy the CSS condition given in \eqref{eq:CSS}.
The details of the lifting process are described in Section~\ref{sec:protograph-ldpc-code}.
Matrices $\mathbf{W}_{\mathrm{X}}$ and $\mathbf{W}_{\mathrm{Z}}$ are obtained from their classical counterparts $\mathbf{W}_1$ and $\mathbf{W}_2$ as 
\begin{align}
   \nonumber \mathbf{W}_{\mathrm{X}} & = \begin{bmatrix}
        \mathbf{W}_1 \otimes \mathbf{I}_{n_{B_2}} & \mathbf{I}_{m_{B_1}} \otimes \mathbf{W}_2^{*}
    \end{bmatrix},\\
     \mathbf{W}_{\mathrm{Z}} & = \begin{bmatrix}
       \mathbf{I}_{n_{B_1}} \otimes \mathbf{W}_2 &   \mathbf{W}_1^{*} \otimes  \mathbf{I}_{m_{B_2}}  \\
    \end{bmatrix},
\end{align}
where $\mathbf{W}^{*}(i,j)= \left(\mathbf{W}(j,i)\right)^{*}$ and for any $r \in \mathcal{R}$
\begin{equation}
    r^{*}=
    \begin{cases}
        r^{-1}, \text{ if } r \in \mathcal{R}_l \setminus \{0\}\\
        0, \text{ if } r=0.
    \end{cases}
    \label{eq:circulant_conjugate}
\end{equation}
If $\mathbf{H}_{\mathrm{X}}$ (or $\mathbf{H}_{\mathrm{Z}}$) is a binary matrix of size $m_{\mathrm{X}} \times n,$ $(\text{ or }m_{\mathrm{Z}} \times n),$ then $n=\gamma(n_An_B+m_Am_B)$ and $m_{\mathrm{X}}=\gamma m_An_B$ $(\text{ or }m_{\mathrm{Z}}=\gamma n_Am_B).$ 
Assuming that $\mathbf{H}_{\mathrm{X}}$ and $\mathbf{H}_{\mathrm{Z}}$ are of full rank, we have a $\left(n,n-m_{\mathrm{X}} - m_{\mathrm{Z}}\right)$ quasi-cyclic QLDPC code.
Let $\underline{\mathscr{G}}_1, \underline{\mathscr{G}}_2, \underline{\mathscr{G}}_{\mathrm{X}},$ and $\underline{\mathscr{G}}_{\mathrm{Z}}$ denote the base graphs corresponding to base matrices $\mathbf{B}_1, \mathbf{B}_2, \mathbf{B}_{\mathrm{X}}, $ and $\mathbf{B} _{ \mathrm{Z}},$ respectively.
The graphical description of obtaining the base graphs $\underline{\mathscr{G}}_{\mathrm{X}}, $ and $\underline{\mathscr{G}}_{\mathrm{Z}}$  given two classical base graphs $\underline{\mathscr{G}}_1,$ and $ \underline{\mathscr{G}}_2$ is the same as obtaining $\mathscr{G}_{\mathrm{X}}$ and $\mathscr{G}_{\mathrm{Z}}$ given two classical Tanner graphs in $\mathscr{G}_{1}$ and $\mathscr{G}_{2}$ in the HP construction.
From \eqref{eq:hp_cx_neighbor}, it is evident that $\underline{c}^2_j\underline{v}^1_i$ is connected to $\underline{v}^2_{i'}\underline{v}^1_i$s and $\underline{c}^2_{j}\underline{c}^1_{j'}$s where $\underline{v}^2_{i'} \in \mathcal{N}^2_{\underline{c}^2_j}$ and $\underline{c}^1_{j'} \in \mathcal{N}^1_{\underline{v}^1_i}.$ 
The edge that connects the $X$-type check node $\underline{c}^2_j\underline{v}^1_i$ to CC-type (or VV-type) variable node $\underline{c}^2_j\underline{c}_{j'}^1$  (or $v^2_{i'}v^1_{i}$) has the same circulant as the edge that connects  $v^1_i$ (or $c^2_{j}$) to  $\underline{c}^1_{j'}$ (or $\underline{v}^2_{i'}$) in graph $\underline{\mathscr{G}}_{1}  (or \underline{\mathscr{G}}_{2}).$
Similarly, the edge that connects the $Z$-type check node $\underline{v}^2_i\underline{c}^1_j$ to VV-type (or CC-type) variable node $\underline{v}^2_i\underline{v}^1_{i'} (\text{ or }\underline{c}^2_{j'}\underline{c}^1_{j})$ will have circulant $r^{*},$ as defined in \eqref{eq:circulant_conjugate}, if the edge that connects $\underline{c}^1_j (\text{ or }\underline{v}^2_i)$ to $\underline{v}^1_{i'} (\text{ or }\underline{c}^2_{j'})$ in graph $\mathscr{G}_1 (\text{ or }\mathscr{G}_2)$ have circulant $r$.
\begin{example}
Consider two base matrices $\mathbf{B}_1 \in \mathbb{F}_{2}^{2 \times 3}$ and $\mathbf{B}_2  \in \mathbb{F}_{2}^{2 \times 3}$
\begin{equation*}
\mathbf{B}_1=
    \begin{bmatrix}
        1 & 1 & 0\\
        0 & 1 & 1\\
    \end{bmatrix},
    \mathbf{B}_2=
    \begin{bmatrix}
        1 & 1 & 1\\
        1 & 1 & 1\\
    \end{bmatrix}.
\end{equation*}
Weight matrices $\mathbf{W}_1 \in \mathcal{R}_{2}^{2 \times 3}$ and $\mathbf{W}_2 \in \mathcal{R}_{2}^{2 \times 3},$ respectively, corresponding to $\mathbf{B}_1$ and $\mathbf{B}_2$ are given by
\begin{equation*}
\mathbf{W}_1=
    \begin{bmatrix}
        x^1 & x^0 & 0\\
        0 & x^0 & x^1\\
    \end{bmatrix},
    \mathbf{W}_2=
    \begin{bmatrix}
        x^0 & x^1 & x^0\\
        x^1 & x^0 & x^1\\
    \end{bmatrix}.
\end{equation*}
The base graph corresponding to $\mathbf{B}_1$ and $\mathbf{B}_2$ are denoted by  $\underline{\mathscr{G}}_1$ and $\underline{\mathscr{G}}_2,$ respectively. 
The product of two nodes, one from  $\underline{\mathscr{G}}_1$ and the other from  $\underline{\mathscr{G}}_2,$ are represented as shown below.

\vspace{0.1in}
\centering
\tikzstyle{cnode}=[circle,minimum size=0.55 cm,draw]
\tikzstyle{scnode}=[circle,minimum size=0.15 cm,draw]
\tikzstyle{zscnode}=[circle,minimum size=0.15 cm,draw,fill=red]
\tikzstyle{zrnode}=[rectangle,draw,minimum width=0.4 cm,minimum height=0.4cm,outer sep=0pt]
\tikzstyle{cgnode}=[circle,draw]
\tikzstyle{crnode}=[circle,draw]
\tikzstyle{conode}=[rectangle,draw]
\tikzstyle{cpnode}=[circle,draw]
\tikzstyle{rnode}=[rectangle,draw,minimum width=0.4 cm,minimum height=0.4cm,outer sep=0pt]
\tikzstyle{srnode}=[rectangle,draw,minimum width=0.1 cm,minimum height=0.1cm,outer sep=0pt]
\tikzstyle{prnode}=[rectangle,rounded corners,fill=blue!50,text width=4.5em,text centered,outer sep=0pt]
\tikzstyle{prnodebig}=[rectangle,rounded corners,fill=blue!50,text width=7em,text centered,outer sep=0pt]
\tikzstyle{prnodesimple}=[rectangle,draw,text width=4.5em,text centered,outer sep=0pt]
\tikzstyle{bigsnake}=[fill=green!50,snake=snake,segment amplitude=4mm, segment length=4mm, line after snake=1mm]
\tikzstyle{smallsnake}=[snake=snake,segment amplitude=0.7mm, segment length=4mm, line after snake=1mm]
\definecolor{color1}{rgb}{1,0.2,0.3}
\definecolor{color2}{rgb}{0.4,0.5,0.7}
\definecolor{color3}{rgb}{0.1,0.8,0.5}
\definecolor{color4}{rgb}{0.5,0.3,1}
\definecolor{color5}{rgb}{0.5,1,1}
\definecolor{color6}{rgb}{0.8,0.3,0.6}
\definecolor{color7}{rgb}{0.6,0.4,0.3}

\begin{tikzpicture}[every node/.style={scale=1}]
\begin{scope}[node distance=1.5cm,semithick]
\node[cnode] (v1) {};
\node (m1) [right of = v1,xshift=-0.75cm] {$\times$};
\node[cnode] (v2) [right of =m1,xshift=-0.75cm]{};
\node (arrow1)[right of =v2,xshift=-0.65cm]{$=$};
\node[cnode] (vv)[right of=arrow1,xshift=-0.65cm] {};
\node [scnode](vv_in)[right of=arrow1,xshift=-0.65cm]{};
\node [below of=v2,yshift=0.8 cm]{\textbf{VV-type}};
\node[rnode] (c1) [right of =vv]{};
\node (m2) [right of = c1,xshift=-0.75cm] {$\times$};
\node[rnode] (c2) [right of =m2,xshift=-0.75cm]{};
\node (arrow2)[right of =c2,xshift=-0.65cm]{$=$};
\node[rnode] (cc)[right of=arrow2,xshift=-0.65cm] {};
\node[srnode] (cc_in)[right of=arrow2,xshift=-0.65cm] {};
\node [below of=c2,yshift=0.8 cm]{\textbf{CC-type}};

\node[rnode] (c3) [below of =v1]{};
\node (m3) [right of = c3,xshift=-0.75cm] {$\times$};
\node[cnode] (v3) [right of =m3,xshift=-0.75cm]{};
\node (arrow3)[right of =v3,xshift=-0.65cm]{$=$};
\node[rnode] (cv)[right of=arrow3,xshift=-0.65cm] {};
\node [scnode](cv_in)[right of=arrow3,xshift=-0.65cm]{};
\node [below of=v3,yshift=0.8 cm]{\textbf{CV-type}};
\node[cnode] (v4) [right of =cv]{};
\node (m4) [right of = v4,xshift=-0.75cm] {$\times$};
\node[rnode] (c4) [right of =m4,xshift=-0.75cm]{};
\node (arrow4)[right of =c4,xshift=-0.65cm]{$=$};
\node[cnode] (vc)[right of=arrow4,xshift=-0.65cm] {};
\node [srnode](vc_in)[right of=arrow4,xshift=-0.65cm]{};
\node [below of=c4,yshift=0.8 cm]{\textbf{VC-type}};
\end{scope}
\end{tikzpicture}
\vspace{0.1in}

The process of obtaining the base graph corresponding to the quantum code as the graph product of base graphs corresponding to $\mathbf{B}_1$ and $\mathbf{B}_2$ is illustrated in Fig.~\ref{fig:lifted_product_illustration}. 
Even though we have not labeled the nodes in the product graph shown in Fig.~\ref{fig:base_graph_product}, their labels are evident from the labels of nodes in $\underline{\mathscr{G}}_1$ and $\underline{\mathscr{G}}_2.$
Consider the neighborhood of $X$-type or $X$-check node $\underline{c}_2^2\underline{v}_1^1$ in the product graph obtained by taking the product of check node $\underline{c}_2^2$ in $\underline{\mathscr{G}}_2$ and variable node $\underline{v}_1^1$ in $\underline{\mathscr{G}}_1$.
Observe that $\mathcal{N}_{\underline{v}_1^1}^1=\{\underline{c}_1^1\} \text{ and } \mathcal{N}_{\underline{c}_2^2}^2=\{\underline{v}_1^2,\underline{v}^1_2,\underline{v}_3^2\}, $
which implies $\mathcal{N}^\mathrm{X}_{\left(\underline{c}_2^2\underline{v}^1_1\right)} = \{\underline{v}_1^2\underline{v}_1^1,\underline{v}_2^2\underline{v}_1^1,\underline{v}_3^2\underline{v}_1^1,\underline{c}_2^2\underline{c}_1^1\},$ according to \eqref{eq:hp_cx_neighbor}.
Since the edge that connects check node $\underline{c}_2^2$ to variable node $\underline{v}^1_2$ has circulant $x^1$ in $\underline{\mathscr{G}}_2,$ the edge that connects check node $\underline{c}_2^2\underline{v}^1_1$ to variable node $\underline{v}_1^2\underline{v}_1^1$ has also circulant $x^1.$
Similarly, the edge that connects check node $\underline{v}_3^2\underline{c}^1_1$ to variable node $\underline{c}_2^2\underline{c}_1^1$ has circulant $(x^1)^*,$ since the edge that connects variable node $\underline{v}_3^2$ to check node $\underline{c}^2_2$ has circulant $x^1$ in $\underline{\mathscr{G}}_2.$
Note that in this case, $(x^1)^*=x^1$ as the lifting size $\gamma=2.$ The circulants corresponding to other edges in the product graph shown in Fig.~\ref{fig:base_graph_product} can be determined in the same way. The base matrices $\mathbf{B}_\mathrm{X}$ and $\mathbf{B}_\mathrm{Z}$ can be, respectively, lifted to $\mathbf{H}_\mathrm{X}$ and $\mathbf{H}_\mathrm{Z}$ as illustrated in Example~\ref{ex:lifting_illustration}. 
\begin{figure*}
\begin{subfigure}[b]{\textwidth}
    \centering
    \input{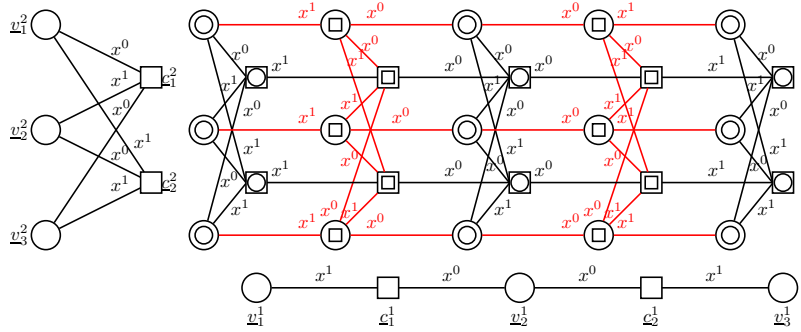}
    \caption{The figure illustrates product of two classical protographs. The edges connected to $X$-type check nodes are shown in red, whereas the edges connected to $Z$-type checks are shown in black. The edges are labeled with their corresponding circulants.}
    \label{fig:base_graph_product}
\end{subfigure}
\begin{subfigure}[b]{0.45\textwidth}
    \centering
    \input{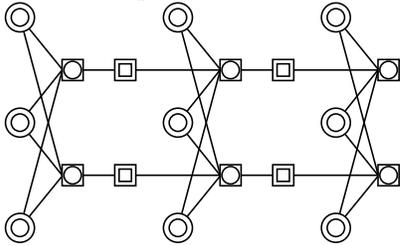}
    \caption{The figure shows the protograph corresponding to $\mathbf{B}_{\mathrm{X}}$.}
    \label{fig:X_base_graph}
\end{subfigure}
\begin{subfigure}[b]{0.45\textwidth}
    \centering
    \input{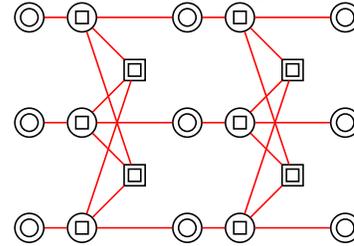}
    \caption{The figure shows the protograph corresponding to $\mathbf{B}_{\mathrm{Z}}$.}
    \label{fig:Z_base_graph}
\end{subfigure}
\caption{The figure illustrates the process of obtaining two base graphs corresponding to a lifted-product code given the base graphs of two classical protograph LDPC codes.}
\label{fig:lifted_product_illustration}
\end{figure*}
\label{ex:lifted_product_illustration}

\end{example}

\section{Trapping sets of QLDPC codes}
\label{sec:TS_QLDPC}

In this section, we introduce trapping sets (TSs) and investigate their structures in the HP and LP codes when decoded using the bit-flipping decoder. We consider the bit-flipping decoder because of its suitability for analysis.
Specifically, in Section~\ref{sec:QTS_hypergraph}, we examine the decoding dynamics within the subgraph formed by the stabilizer generators. Here, decoding dynamics refers to analyzing the convergence of the decoder when the support of the error pattern is entirely within the subgraph. In Section~\ref{sec:QTS_induced_by_combinations_of_gen}, we extend this study to stabilizer-induced subgraphs that are linear combinations of several stabilizer generators. 
Studying the decoding dynamics in these induced subgraphs becomes intractable as the sizes of the induced subgraphs grow. 
To address this challenge, we propose a concise representation of stabilizer-induced graphs that still allows us to study the decoding dynamics.
In the next section, we further elaborate on how understanding the different modes of failure of the simple bit-flipping decoder gives us insight into designing better iterative decoders.

Since we assume the independent depolarizing channel, it suffices to study the performance of the iterative decoder on either $\mathbf{H}_{\mathrm{X}}$ or $\mathbf{H}_{\mathrm{Z}}.$
In what follows, we focus on correcting $X$-type errors by decoding $\mathbf{H}_{\mathrm{Z}}.$ 
Given an error vector $\mathbf{x}$ and the measured syndrome $\boldsymbol{\sigma}^{\mathsf{T}} = \begin{bmatrix}\sigma_1 & \sigma_2 & \cdots & \sigma_{m_{\mathrm{Z}}}\end{bmatrix},$ the bit-flipping decoder outputs an estimate of the error pattern, denoted by $\hat{\mathbf{x}}.$ 
The decoder initializes the estimated error $\hat{\mathbf{x}}$ to $\boldsymbol{0}.$ In subsequent iterations, the decoder first computes the mismatched syndrome, denoted by $\hat{\boldsymbol{\sigma}},$  then flips the estimated error corresponding to a variable node if more than half of its neighboring checks have a mismatched syndrome or are unsatisfied. The decoding continues until the estimated error does not change or the iteration number reaches a predefined maximum value. 
The steps of the bit-flipping decoder are listed in Algorithm~\ref{alg:bit_flipping_decoder}.
\SetKwComment{Comment}{/* }{ */}
\begin{algorithm}
    \caption{Bit-flipping decoder}\label{alg:bit_flipping_decoder}
    \KwData{$\boldsymbol{\sigma},\mathbf{H}_{\mathrm{Z}} $}
\KwResult{$\hat{\mathbf{x}}$}
$\hat{\mathbf{x}} \gets \boldsymbol{0}$\ \Comment*[r]{0 denotes the all-zero vector}
\While{$\mathbf{H}_{\mathrm{Z}} \hat{\mathbf{x}} \neq \boldsymbol{\sigma} (\text{ mod 2 })$}{
$\hat{\boldsymbol{\sigma}} \gets \mathbf{H}_{\mathrm{Z}} \hat{\mathbf{x}} ( \text{mod 2}) $\;
$\boldsymbol{\beta} \gets \hat{\boldsymbol{\sigma}} \oplus \boldsymbol{\sigma}$\;
$\boldsymbol{\alpha} = \mathbf{H}_{\mathrm{Z}}^{\mathsf{T}}\boldsymbol{\beta}$\;
\For{$i=1$ to $n$}{
\If{$\boldsymbol{\alpha}_i > |\mathcal{N}_{v_i}|/2 $}{
$\hat{\mathbf{x}}_i=\hat{\mathbf{x}} \oplus 1 ( \text { mod 2})$\;
}
}

}
\end{algorithm}
The iteration number is not explicitly specified in the variables since it is not critical to understanding the decoder. In what follows, the superscript $t$ of any variable node indicates the iteration number. 
During the decoding process, we say that the check node $c_j$ is satisfied if there exists a positive integer $I_\mathrm{j}$ such that for all $t \geq I_\mathrm{j}$, $\hat{\sigma}^{(t)}(j) = \sigma(j)$. 
We say that the variable node $v_i$ has converged if there exists a positive integer $I_\mathrm{i}$ such that for all $t \geq I_\mathrm{i}$, $\hat{{x}}^{(t)}(i)= \hat{{x}}^{(t-1)}(i)$. 
Note that $\hat{{x}}^{(t)}(i)$ is not necessarily the correct estimate of the error on the $i-\text{th}$ variable node. 
With these definitions, we next formally define TSs. 
\begin{definition}
\label{Def:TrappingSetSyndromeModified}
A trapping set $\mathscr{T}$ for a syndrome-based iterative decoder  is a non-empty set of variable nodes in a Tanner graph $\mathscr{G}$ that have not converged or are neighbors of the check nodes that are not satisfied. If the subgraph $\mathscr{G(T})$ induced by such a set of variable nodes has $a$ variable nodes and $b$ unsatisfied check nodes, then $\mathscr{T}$ is classified as an $(a,b)$ trapping set.
\end{definition}

\subsection{TSs induced by stabilizer generators}
\label{sec:QTS_hypergraph}
 In classical LDPC codes, short cycle compositions cause TSs, leading to decoding failure. QLDPC codes have unique TSs called \emph{stabilizer-induced} TSs or \emph{quantum} TSs (QTS), identified in~\cite{QTS}. Recall that the $Z$-logicals in $\mathcal{U}_{\mathrm{X}}^{\perp} \subset \mathcal{U}_{Z},$ are \emph{degenerate} logicals acting trivially on code states. Consider the subgraph induced by a logical from $\mathcal{U}_{\mathrm{X}}^{\perp}$. 
  Assume that it is a stabilizer generator of $\mathbf{H}_{\mathrm{X}},$ denoted by $\mathbf{h}$. 
   Select error patterns $\mathbf{x}$ and $\mathbf{y}$ with trivially intersecting supports and $\mathbf{x} \oplus \mathbf{y} = \mathbf{h}$. 
   Since stabilizer generator $\mathbf{h}$ is an error \emph{degenerate}, $\mathbf{H}_{\mathrm{Z}}\mathbf{h} = \boldsymbol{0}$ implies that error patterns $\mathbf{x}$ and $\mathbf{y}$ lead to the same syndrome.
   Should the neighborhoods of error patterns $\mathbf{x}$ and $\mathbf{y}$ exhibit a particular symmetry within the Tanner graph, both $\mathbf{x}$ and $\mathbf{y}$ emerge as viable candidates for the error estimate, which causes a decoding failure. 
   In this section, we investigate the symmetries present in the Tanner graph that lead to decoder failure.
   This investigation holds significance due to the low Hamming weight of $\mathbf{h}$, as it is a stabilizer generator of the QLDPC code. 
   As a result, the supports of $\mathbf{x}$ and $\mathbf{y}$ should also have a low weight. 
   Furthermore, it is important to note that the Hamming weights of $\mathbf{x}$ and $\mathbf{y}$ do not grow as the minimum distance of the code grows, since they are derived from stabilizer generators whose weights are generally constant and do not grow as the number of qubits increases. 
   This suggests that the decoder may not be able to estimate a low-weight error pattern, regardless of the minimum distance of the code. 

Consider Tanner graphs $\mathscr{G}_{\mathrm{X}}$ and $\mathscr{G}_{\mathrm{Z}}$, respectively, corresponding to parity-check matrices $\mathbf{H}_{\mathrm{X}}$ and $\mathbf{H}_{\mathrm{Z}}$ of an HP code, which are obtained by taking the graph product of two classical Tanner graphs $\mathscr{G}_1$ and $\mathscr{G}_2$ as described in Section~\ref{sec:hypergraph_product}. 
Next, we focus on failures of the bit-flipping decoder when used to decode $\mathscr{G}_{\mathrm{Z}}$ or correct $X$ errors.
To do so, it is evident from our earlier discussion that we need to investigate the structure of the subgraphs of $\mathscr{G}_{\mathrm{Z}}$ that are induced by low-weight $X$-type stabilizers.
Since the degenerate errors lie in the row space of $\mathbf{H}_{\mathrm{X}}$, we first investigate the subgraphs induced by the rows of $\mathbf{H}_{\mathrm{X}}$ and then systematically expand our investigation to include the subgraphs induced by other low-weight $X-$stabilizers.
Recall from the graphical description of HP codes, the variable nodes connected to an $X$-check/row of $\mathbf{H}_{\mathrm{X}},$ say $\mathbf{c}^{\mathrm{X}}=c_j^2v_i^1,$ is given by 
$$\mathcal{N}_{c_j^2v_i^1}^{\mathrm{X}} = \left(\mathcal{N}^2_{c^2_j} \times v^1_i \right) \cup \left( c^2_j \times \mathcal{N}^1_{v^1_i}\right).$$
The set of $Z$-type check nodes that are connected to a variable node in $\mathcal{N}_{c_j^2v_i^1}^{\mathrm{X}}$ is given by $\cup_{v \in \mathcal{N}_{c_j^2v_i^1}^{\mathrm{X}}} \mathcal{N}_v^{\mathrm{Z}}.$
We denote the subgraph induced by the neighboring variable nodes of  $c_j^2v_i^1$ in $\mathscr{G}_{\mathrm{Z}}$ by $\mathscr{T}(c_j^2v_i^1).$
Note that  $\mathscr{T}(c_j^2v_i^1)$ is a bipartite graph with $\mathcal{N}_{c_j^2v_i^1}^{\mathrm{X}}$ as left nodes and $\cup_{v \in \mathcal{N}_{c_j^2v_i^1}} \mathcal{N}_v^{\mathrm{Z}}$ as right nodes, and represented as $\mathscr{T}\left(\mathcal{N}_{c_j^2v_i^1}^{\mathrm{X}} \cup \left(\cup_{v \in \mathcal{N}_{c_j^2v_i^1}} \mathcal{N}_v^{\mathrm{Z}}\right)\right).$

For a visual representation, consider the subgraph induced by a $X$-type check, denoted by $c^2v^1$, which is obtained by taking the graph product of the degree-three variable node $v^1$ with the degree-four check node $c^2$. Figure~\ref{fig:X_neighborhood} shows the subgraph induced by the $X$-type check $c^2v^1$  in $\mathscr{G}_{\mathrm{Z}}$.
It is redrawn in Fig.~\ref{fig:symmetry_in_subgraph} to highlight its symmetry. Observe that CC-type (or VV-type) variable nodes have
non-intersecting sets of neighboring checks, and each check node is connected to exactly one VV-type and one CC-type variable nodes.
These observations have been formalized in Lemma~\ref{lemma:structures_of_graph_induced_by_stabilizers} for the subgraph induced by any stabilizer generator. 
In Fig.~\ref{fig:symmetry_in_subgraph}, CC-type nodes $c^2c_1^1$ and $c^2c_2^1$ are in error. Since they have non-intersecting neighbors, the eight checks connected to them are unsatisfied in Fig.~\ref{fig:symmetry_in_subgraph}. 
As expected, each VV-type node has two unsatisfied checks as neighbors. 
This observation has been formalized in Lemma~\ref{lemma:unsatisfied_checks_in_TS}.
Consider the bit-flipping decoder described in Algorithm~\ref{alg:bit_flipping_decoder}.
If all CC-type variable nodes are in error,  while none of the VV-type variable nodes are, as in Figure~\ref{fig:decoding_illustration_iteration2}, then all the check nodes in $\mathscr{T}(c^2v^1)$ are not satisfied.
In this case, the decoder alternatively predicts errors on all the CC-type and VV-type variable nodes, as shown in Fig.~\ref{fig:decoding_illustration_iteration2} and Fig.~\ref{fig:decoding_illustration_iteration3}.
This has been formalized in Lemma~\ref{lemma:TSs_induced_by_stabilizers}.
Also note that all CC-type (or VV-type) have identical neighborhoods in Fig.~\ref{fig:symmetry_in_subgraph}, implying that when the errors are only on a subset of CC-type (or VV-type) variable nodes, each VV-type (CC-type) variable node has the same number of unsatisfied checks as neighbors. 
The decoding process is illustrated in Fig.~\ref{fig:decoding_illustration} when $c^2c_1^1$ and $c^2c_2^1$ are in error. 
Since half of neighboring checks for each of the VV-type nodes are unsatisfied, the decoder flips all the VV-type variable nodes and the CC-type variable nodes $c^2c_1^1$ and $c^2c_2^1.$ In subsequent iterations, all the check nodes are unsatisfied and, as a result, all the variable nodes are flipped, causing the decoder to oscillate.
    \begin{figure}
        \centering
        \input{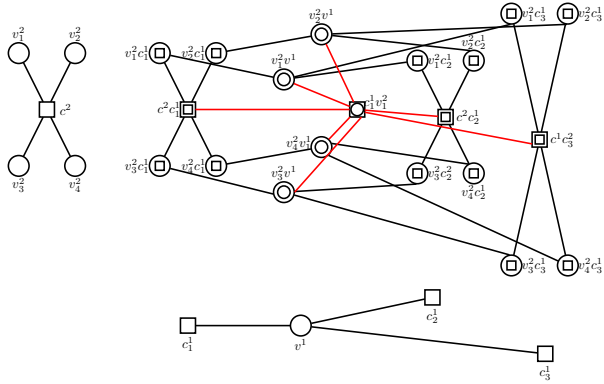}
        \caption{Subgraph $\mathscr{T}(c^2v^1)$ induced by the support of $X$-type check node $c^2v^1$ in $\mathscr{G}_\mathrm{Z}$.  Note that $c^2v^1$ is not part of $\mathscr{T}(c^2v^1).$ The edges connected to $c^2v^1$ in $\mathscr{G}_{\mathrm{X}}$ are shown in red, whereas those in $\mathscr{T}(c^2v^1)$ are shown in black.}
        \label{fig:X_neighborhood}
    \end{figure}
    \begin{figure*}
        \centering
        \input{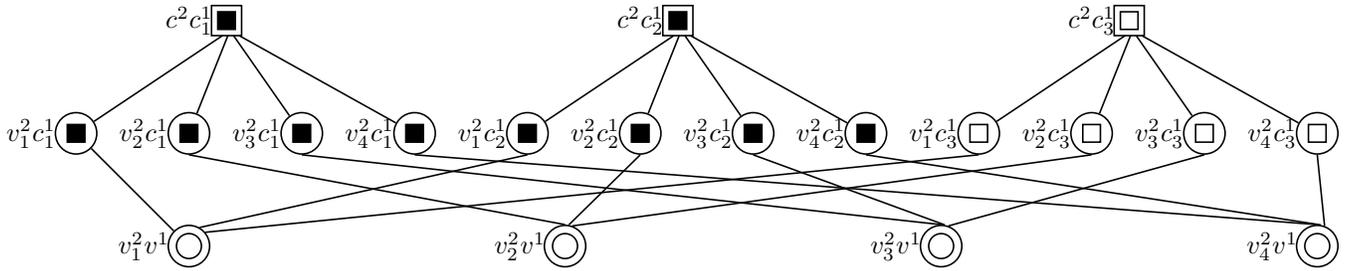}
        \caption{The figure shows a different view of $T(c^2v^1)$ shown in Fig. 3. Note that the CC-type (or VV-type) variable nodes have non-intersecting set of neighboring checks and have identical neighborhoods, meaning when the errors are only on a subset of CC-type (or VV-type) variable nodes, each VV-type (or CC-type) variable node has the same number of unsatisfied checks as neighbors. In this figure, CC-type variable nodes $c^2c^1_1$ and $c^2c_2^1$, shown in black, are in error, causing their eight neighboring check nodes to be unsatisfied. As a result, each of the VV-type variable nodes is connected to two unsatisfied checks. } 
        \label{fig:symmetry_in_subgraph}
    \end{figure*}
\begin{figure*}
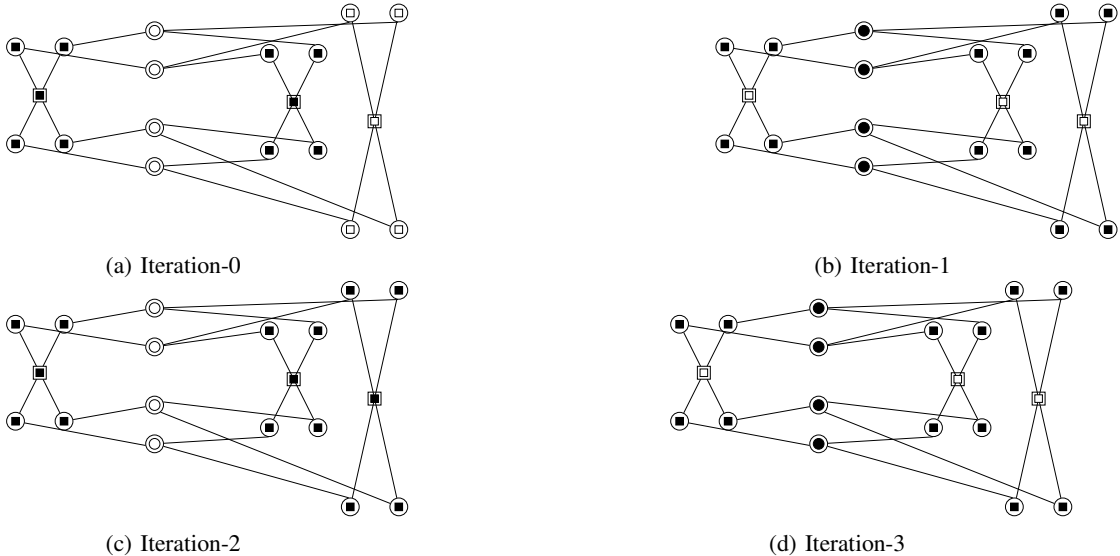

\begin{subfigure}[b]{0.48\textwidth}
\centering
     \scalebox{0.43} {\input{Figures/Journal_figures/decoding_illustration/trapping_set_decoding_iter0}}
     \caption{Iteration-0}
\end{subfigure}
\begin{subfigure}[b]{0.48\textwidth}
\centering
     \scalebox{0.43} {\input{Figures/Journal_figures/decoding_illustration/trapping_set_decoding_iter1}}
     \caption{Iteration-1}
\end{subfigure}
\begin{subfigure}[b]{0.48\textwidth}
\centering
     \scalebox{0.43} {\input{Figures/Journal_figures/decoding_illustration/trapping_set_decoding_iter2}}
     \caption{Iteration-2}
     \label{fig:decoding_illustration_iteration2}
\end{subfigure}
\begin{subfigure}[b]{0.48\textwidth}
\centering
     \scalebox{0.43} {\input{Figures/Journal_figures/decoding_illustration/trapping_set_decoding_iter3}}
     \caption{Iteration-3}
     \label{fig:decoding_illustration_iteration3}
\end{subfigure}
\caption{The figure shows the decoding iterations when two CC-type variable nodes are in error. The unsatisfied checks
are shown in black. Also, the variable nodes on which the estimated error does not match the actual are shown in black. As
expected, at the beginning, each VV-type variable node is connected to two unsatisfied checks and, hence, flipped. From the second
iteration onwards, all the checks are unsatisfied, and as a result, the neighboring checks of every variable node are unsatisfied
and, hence, flipped. From iteration 3, the decoder predicts error on either all CC-type or VV-type variable nodes.}
\label{fig:decoding_illustration}
\end{figure*}
\begin{lemma}
    \label{lemma:TSs_induced_by_stabilizers}
    Let us consider an HP code that is characterized by two Tanner graphs, $\mathscr{G}_{\mathrm{X}}$ and $\mathscr{G}_{\mathrm{Z}}$, derived from two classical codes, which themselves are associated with Tanner graphs $\mathscr{G}_1$ and $\mathscr{G}_2$. Let $\mathscr{G}_1$ and $\mathscr{G}_{2}$ be $(d_c^1,d_v^1)$ and  $(d_c^2,d_v^2)$  regular graphs, respectively. Denote by $\mathscr{T}(c^2_jv^1_{i})$ the subgraph that is induced by the $X-$check $c^2_jv^1_{i}$ on the Tanner graph $\mathscr{G}_{\mathrm{Z}}$.  If neither $\mathscr{G}_1$ nor $\mathscr{G}_2$ contains cycles of length four, then the following statements for the decoder described in Algorithm~\ref{alg:bit_flipping_decoder} hold.
    \begin{enumerate}
        
        \item \label{lemma1_part1} Assuming that the support for the error pattern lies precisely in the variable nodes of type VV or CC, the location of the mismatched error alternates between the VV-type nodes and the CC-type nodes, rendering $\mathscr{T}(c^2_jv^1_{i})$ a TS.
        
        \item \label{lemma1_part2}Let $d_c^2$ (or $d_v^1$) be the degree of CC-type (or VV-type) variable nodes.
       Consider an error pattern that acts nontrivially only on $\alpha$  VV-type and $\beta$ CC-type variable nodes in $\mathscr{T}(c^2_jv^1_{i})$. If $\alpha\geq \lfloor\frac{d_c^2}{2}\rfloor+1$ and $\beta < \lfloor\frac{d_v^1}{2}\rfloor$ or alternatively, $\alpha < \lfloor\frac{d_c^2}{2}\rfloor$ and $\beta \geq \lfloor\frac{d_v^1}{2}\rfloor+1$, then the error pattern constitutes a TS-inducing error pattern. 
    \end{enumerate}
\end{lemma}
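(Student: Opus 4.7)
\emph{Structural setup.} The plan is to first record a few structural facts about $\mathscr{T}(c^2_j v^1_i)$ that follow from the product construction together with the 4-cycle-free hypothesis, and then to reduce both parts to a one-step bookkeeping of the unsatisfied check count of each variable. From~\eqref{eq:hp_cx_neighbor} and~\eqref{eq:hp_cz_neighbor} one first establishes: (a) every $Z$-check inside $\mathscr{T}(c^2_j v^1_i)$, necessarily of the form $v^2_{i'}c^1_{j'}$ with $v^2_{i'}\in\mathcal{N}^2_{c^2_j}$ and $c^1_{j'}\in\mathcal{N}^1_{v^1_i}$, has exactly one CC-type neighbor $c^2_j c^1_{j'}$ and one VV-type neighbor $v^2_{i'}v^1_i$ inside $\mathscr{T}$; (b) every CC-type (resp.\ VV-type) variable of $\mathscr{T}$ has all of its $d_c^2$ (resp.\ $d_v^1$) $Z$-check neighbors inside $\mathscr{T}$; (c) because $\mathscr{G}_1$ and $\mathscr{G}_2$ have no 4-cycles, any variable lying outside $\mathscr{T}$ shares at most one check with $\mathscr{T}$, so its unsatisfied-check count never exceeds one and it is therefore never flipped. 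Item (c) is the step that genuinely invokes the 4-cycle-free hypothesis and is what confines the entire decoder trajectory to $\mathscr{T}$.

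\emph{Part 1.} By the CC $\leftrightarrow$ VV symmetry I treat the case in which $\mathbf{x}$ is supported on all $d_v^1$ CC-type variables of $\mathscr{T}$. By (a), every check in $\mathscr{T}$ sees exactly one erroneous neighbor, so its syndrome equals $1$. Starting from $\hat{\mathbf{x}}^{(0)}=\mathbf{0}$, every variable of $\mathscr{T}$ has all of its neighbors unsatisfied by (b) and is therefore flipped, while (c) keeps every variable outside $\mathscr{T}$ at zero. Hence $\hat{\mathbf{x}}^{(1)}=\mathbf{1}_{\mathscr{T}}$ and the residual $\mathbf{e}^{(1)}=\mathbf{x}\oplus\hat{\mathbf{x}}^{(1)}$ is supported exactly on the VV-type side of $\mathscr{T}$. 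Rerunning the same one-step argument with CC and VV exchanged gives that $\mathbf{e}^{(2)}$ is supported on the CC-type side, so the mismatched error oscillates between the two sides indefinitely and $\mathscr{T}$ is a trapping set in the sense of Definition~\ref{Def:TrappingSetSyndromeModified}.

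\emph{Part 2.} Parameterize the error by $A_{VV}\subseteq\mathcal{N}^2_{c^2_j}$ with $|A_{VV}|=\alpha$ and $A_{CC}\subseteq\mathcal{N}^1_{v^1_i}$ with $|A_{CC}|=\beta$. By (a), the syndrome of the check $v^2_{i'}c^1_{j'}$ equals $\mathbb{1}[v^2_{i'}\in A_{VV}]\oplus\mathbb{1}[c^1_{j'}\in A_{CC}]$, so a short count shows that a CC variable $c^2_j c^1_{j'}$ has $\alpha$ unsatisfied neighbors when $c^1_{j'}\notin A_{CC}$ and $d_c^2-\alpha$ when $c^1_{j'}\in A_{CC}$, and an analogous statement holds for a VV variable in terms of $\beta$. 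Under the first condition $\alpha\geq\lfloor d_c^2/2\rfloor+1$ and $\beta<\lfloor d_v^1/2\rfloor$, the bit-flipping threshold $|\mathcal{N}_{v}|/2$ is crossed precisely for the CC variables outside $A_{CC}$ and the VV variables inside $A_{VV}$. Taking the XOR of the resulting $\hat{\mathbf{x}}^{(1)}$ with $\mathbf{x}$ collapses the residual to the full CC-type side of $\mathscr{T}$, which is exactly the starting configuration of Part 1 and therefore pushes the decoder into the oscillation analyzed there. The second condition is symmetric under CC $\leftrightarrow$ VV. The main obstacle is keeping the case analysis of Part~2 honest across all four strata (CC inside/outside $A_{CC}$, VV inside/outside $A_{VV}$) while simultaneously verifying that item (c) continues to apply at every subsequent iteration, so that the reduction to the all-CC or all-VV configuration of Part~1 is genuinely valid.
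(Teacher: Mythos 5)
Your proposal is correct and takes essentially the same route as the paper: your structural facts (a)--(c) are exactly the paper's auxiliary lemma on stabilizer-induced subgraphs (with (b) being definitional and (c) the only place the 4-cycle-free hypothesis enters), your indicator-function syndrome count reproduces the paper's unsatisfied-check counting lemma, and both your Part 1 (period-two oscillation confined to $\mathscr{T}(c^2_jv^1_i)$ by the at-most-one-shared-check property) and Part 2 (threshold analysis over the four strata, collapsing the residual onto the full CC-type side and invoking Part 1) match the paper's argument step for step. No gaps.
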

\begin{proof}
The proof is given in Appendix~\ref{sec:appendix_proof_of_TSs_induced_by_stabilizers}.
\end{proof}
\begin{figure*}
        \centering
        \begin{subfigure}{0.5\textwidth}
        \input{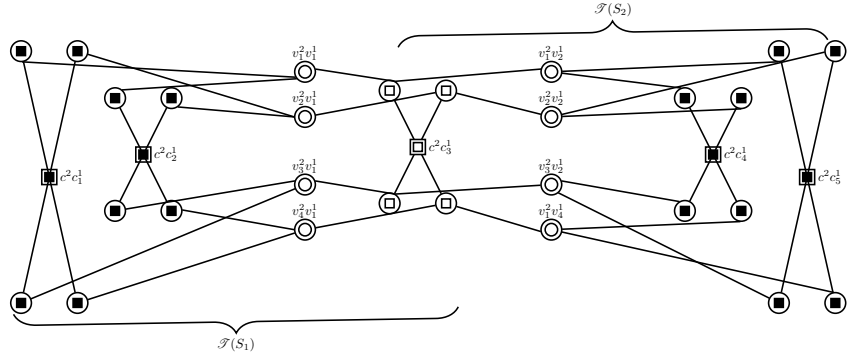}
        \subcaption{Iteration~1}
        \label{fig:illustration_decoding_on_mult_stab_gen_iteration1}
        \end{subfigure}
        \centering
        \begin{subfigure}{0.5\textwidth}
             \input{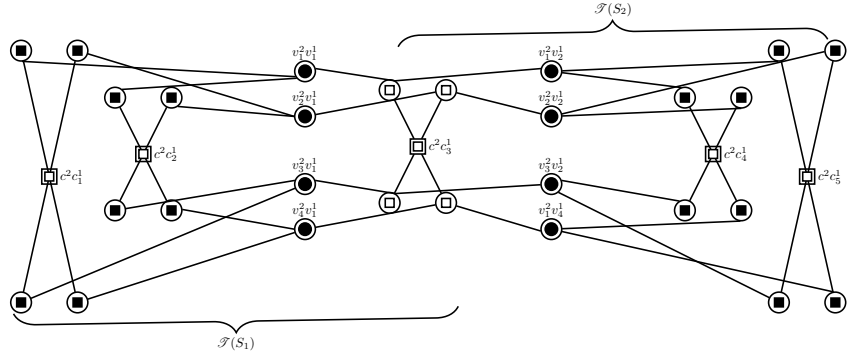}
        \subcaption{Iteration~2}
        \label{fig:illustration_decoding_on_mult_stab_gen_iteration2}
        \end{subfigure}
        \caption{The figure illustrates bit-flipping decoding on the graph induced by stabilizer $\mathbf{h}$, which is the sum of stabilizer generators $S_1$ and $S_2$. The inner shape of the unsatisfied checks and erroneous variable nodes is shown in black. Since the error estimate oscillates between two error patterns, the induced graph constitutes a TS.}
        \label{fig:illustration_decoding_on_mult_stab_gen}
    \end{figure*}
\subsection{QTSs induced by linear combinations of stabilizer generators}
\label{sec:QTS_induced_by_combinations_of_gen}
In Lemma~\ref{lemma:TSs_induced_by_stabilizers}, we characterized the dynamics of the bit-flipping decoder in the Tanner graph $\mathscr{G}_{\mathrm{Z}}$ (or $\mathscr{G}_{\mathrm{X}}$) when errors are exclusively in the subgraph induced by a $X$ (or $Z$) type stabilizer generator, say $c^2v^1$ (or $v^2c^1$).
Examining the TSs within stabilizer-induced subgraphs is crucial, because certain TSs within these graphs possess a small critical number, which do not scale proportionately with the number of stabilizer generators necessary to form the stabilizer in question. 
To fully understand the effect of stabilizers on the decoding process, we next investigate the decoding dynamics when the error lies exclusively in the subgraph induced by a stabilizer that is a linear combination of more than one stabilizer generator.

To understand the structures of stabilizer-induced graphs that are linear combinations of two or more than two stabilizer generators, consider the subgraph shown in Figure~\ref{fig:illustration_decoding_on_mult_stab_gen} induced by stabilizer $\mathbf{h}=S_1 \oplus S_2$, where $S_1$ and $S_2$ are X-type stabilizer generators.
Denote the set of VV-type (or CC-type) variable nodes within $\mathscr{T}(S_1)$ that are present in $\mathscr{T}(S_1)$, but not in $\mathscr{T}(S_2)$, by $\Lambda_{S_1}$ (or $\Gamma_{S_1}$).
Similarly, denote the set of VV-type (or CC-type) variable nodes within $\mathscr{T}(S_2)$ that are present in $\mathscr{T}(S_2)$, but not in $\mathscr{T}(S_1)$, by $\Lambda_{S_2}$ (or $\Gamma_{S_2}$).
Define $\Lambda(\mathbf{h}) \coloneqq \Lambda_{S_1} \cup \Lambda_{S_2}$ and $\Gamma(\mathbf{h}) \coloneqq \Gamma_{S_1} \cup \Gamma_{S_2}$.
For the induced graph shown in Figure~\ref{fig:illustration_decoding_on_mult_stab_gen}, $\Gamma(\mathbf{h})$ contains all CC-type variable nodes within except $c^2c_3^1$ and $\Lambda({\mathbf{h}})$ contains all VV-type variable nodes.
In Figure~\ref{fig:illustration_decoding_on_mult_stab_gen}, consider a VV-type variable node in $\Lambda_{S_i}$ and a CC-type variable node in $\Gamma_{S_i}$, for $i \in \{1,2\}$; There exists a corresponding check node of degree two within $\mathscr{T}(S_i)$, which is connected to both of these variable nodes. This observation is formalized in Lemma~\ref{lemma:TS_compositions_condition_intermediate}.
Also, observe that each of the CC-type (or VV-type) variable nodes in $\Gamma_{S_i}$ ($\Lambda_{S_i}$) is connected to $|\Lambda_{S_i}|=4$ ($|\Gamma_{S_i}|=2$) degree two check nodes whose neighboring variable nodes are in $\Gamma_{S_i}\cup \Lambda_{S_i}$.
From which it follows that when all the variable nodes in $\Gamma(\mathbf{h})$ are erroneous, as in Figure~\ref{fig:illustration_decoding_on_mult_stab_gen_iteration1}, each variable node in $\Gamma_{S_i}$ is connected to at least $|\Lambda_{S_i}|=4$ unsatisfied checks, and each variable node in $\Lambda_{S_i}$ is connected to at least $|\Gamma_{S_i}|=2$ unsatisfied checks.
Since the number of unsatisfied checks exceeds the threshold set by the bit-flipping decoder, it flips all the variable nodes in $\Gamma_{S_i}\cup \Lambda_{S_i}$.
So, at the beginning of iteration~2, as seen in Figure~\ref{fig:illustration_decoding_on_mult_stab_gen_iteration2} all the variable nodes in $\Lambda(\mathbf{h})$ are in error, while all the variable nodes in $\Gamma(\mathbf{h})$ are not.
With arguments similar to those used in iteration~1, it follows that the decoder again flips all variable nodes in $\Gamma_{S_i}\cup \Lambda_{S_i}$, causing the error estimate to oscillate.
In Lemma~\ref{lemma:TS_compositions_condition}, the above arguments have been formalized for any linear combinations of stabilizer generators.

Subsequently, we provide a methodology for enumeration of all stabilizer-induced subgraphs of an HP code, derived from its constituent classical codes. This can be done by directly examining the subgraph induced by stabilizers of type $X$ (or type $Z$) within $\mathscr{G}_{\mathrm{Z}}$ (or $\mathscr{G}_{\mathrm{X}}$). However, it is important to note that these subgraphs can contain two disconnected components; hence, the convergence of the decoder can be assessed independently within each component.
The enumeration of stabilizer-induced subgraphs without any disjoint components can be obtained by examining the graph products resulting from the connected subgraph induced by check nodes within $\mathscr{G}_2$ with connected subgraphs induced by variable nodes within $\mathscr{G}_1$.
Next, we introduce the necessary notation required to formalize this notion.
Recall that the $X$ type check nodes are labeled $c^2v^1$, where $c^2$ is a check node in Tanner graph $\mathscr{G}_2$ and $v^1$ is a variable node in the Tanner graph $\mathscr{G}_1$.
Since the rows of the matrix $\mathbf{H}_{\mathrm{X}}$ are the binary form of $X$-type generators, these rows can be labeled as $c^2v^1$. 
Consider the $X$-type stabilizer $\mathbf{h}=\sum_{a\in \mathcal{I}, b \in \mathcal{J}}c_b^2v_a^1$, where $c_b^2$, for $b \in \mathcal{J}$, is a check node in $\mathscr{G}_2$ and $v_a^1$, for $a \in \mathcal{I}$, is a variable node in $\mathscr{G}_1.$
Notice that the subgraph induced by the variable nodes connected to $\mathbf{h}$ within $\mathscr{G}_\mathrm{Z}$ is given by
\begin{equation*}
  \mathscr{T}(\mathbf{h})\coloneqq \cup_{a \in \mathcal{I},b \in \mathcal{J}}\mathscr{T}(c_j^2v_i^1)= \cup_{b \in \mathcal{J}}\mathscr{T}(c_b^2) \times \cup_{a \in \mathcal{I}}\mathscr{T}(v_a^1), 
\end{equation*} 
where $\cup_{b \in \mathcal{J}}\mathscr{T}(c_b^2)$ is a subgraph of $\mathscr{G}_2$ induced by check nodes whose indices lie in $\mathcal{J}$, and $\cup_{a \in \mathcal{I}}\mathscr{T}(v_a^1)$ is a subgraph of $\mathscr{G}_1$ induced by  variable nodes whose indices lie in $\mathcal{I}$.
The dynamics of the decoder within $\cup_{a \in \mathcal{I}\setminus a' , b \in \mathcal{J}\setminus b' }\mathscr{T}(c_j^2v_i^1)$ affects the dynamics within $\mathscr{T}(c_{a'}^2v_{b'}^1)$ only if 
\begin{equation*}
   \left(\cup_{a \in \mathcal{I}\setminus a' , b \in \mathcal{J}\setminus b' }\mathscr{T}(c_b^2v_a^1)\right) \cap  \mathscr{T}(c_{b'}^2v_{a'}^1) \neq \emptyset,
\end{equation*}
which holds true if there is a path either between $c_b^2$ and $c_{b'}^2$, for $b' \in \mathcal{J}\setminus b,$ within $\mathscr{G}_2$ or between $v_a^1$ and $v_{a'}^1$, for $a' \in \mathcal{I}\setminus a$ within $\mathscr{G}_1$.
Consequently, in the analysis of decoder dynamics, it is sufficient to examine graphs that are the graph products of $\cup_{b \in \mathcal{J}}\mathscr{T}(c_b^2)$ and $\cup_{a \in \mathcal{I}}\mathscr{T}(v_a^1)$ when both $\cup_{b \in \mathcal{J}}\mathscr{T}(c_b^2)$ and $\cup_{a \in \mathcal{I}}\mathscr{T}(v_a^1)$ are connected subgraphs of $\mathscr{G}_1$ and $\mathscr{G}_2$, respectively.

\begin{lemma}
    \label{lemma:TS_compositions_condition}
   Consider the stabilizer-induced subgraph in the $Z$ Tanner graph $\mathscr{G}_{\mathrm{Z}}$ of an HP code given by $\mathscr{T}(\mathbf{h})$,
    where $\mathbf{h}$ is a stabilizer of type $X$ formed as a linear combination of stabilizer generators of type $X$, i.e., $\mathbf{h} = \sum_{a \in \mathcal{I},b \in \mathcal{J}}c^2_bv_a^1$.
    Consider the scenario where the CC-type variable nodes have a degree of $d_c^2$, while the VV-type variable nodes have a degree of $d_v^1$.
    If for all $i \in \mathcal{I}$ and $j \in \mathcal{J}$ there exist at least $\left\lfloor\frac{d_v^1}{2}\right\rfloor+1$ CC-type and $\left\lfloor\frac{d_c^2}{2}\right\rfloor+1$ VV-type variable nodes that are only in $\mathscr{T}(c_j^2v_i^1)$ and not in any $\mathscr{T}(c_{j'}^2v_{i'}^1)$ when $i \neq i'$ or $j \neq j'$, then $\mathscr{T}(\mathbf{h})$ is a TS.
    \end{lemma}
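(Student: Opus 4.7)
The plan is to extend the argument of Lemma~\ref{lemma:TSs_induced_by_stabilizers}: I would exhibit an initial error pattern on which Algorithm~\ref{alg:bit_flipping_decoder}, starting from the all-zero estimate, oscillates between two configurations forever. For each $(i,j)\in\mathcal{I}\times\mathcal{J}$, let $\Gamma_{(i,j)}$ (respectively $\Lambda_{(i,j)}$) denote the CC-type (VV-type) variable nodes appearing in $\mathscr{T}(c_j^2 v_i^1)$ and in no other $\mathscr{T}(c_{j'}^2 v_{i'}^1)$ from the summation defining $\mathbf{h}$, and set $\Gamma(\mathbf{h})=\bigcup_{(i,j)} \Gamma_{(i,j)}$ and $\Lambda(\mathbf{h})=\bigcup_{(i,j)} \Lambda_{(i,j)}$. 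By hypothesis $|\Gamma_{(i,j)}|\geq\lfloor d_v^1/2\rfloor+1$ and $|\Lambda_{(i,j)}|\geq\lfloor d_c^2/2\rfloor+1$. I would take the true error $\mathbf{x}$ to be the indicator vector of $\Gamma(\mathbf{h})$.

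The first step is structural. Within a single $\mathscr{T}(c_j^2 v_i^1)$ the checks form a complete degree-$2$ bipartite grid connecting its $d_v^1$ CC-type and $d_c^2$ VV-type variable nodes; this is the no-$4$-cycle fact already used in Lemma~\ref{lemma:TSs_induced_by_stabilizers}. I would then unpack the ``only in $\mathscr{T}(c_j^2 v_i^1)$'' hypothesis via the HP-code index algebra and show that when $w\in\Gamma_{(i_k,j_k)}$ and $v\in\Lambda_{(i_k,j_k)}$ are linked by a $\mathscr{G}_\mathrm{Z}$-check $c$, the pair $(a,b)=(i_k,j_k)$ is the only one in $\mathcal{I}\times\mathcal{J}$ for which $c\in\mathscr{T}(c_b^2 v_a^1)$; hence $c$ has exactly $w$ and $v$ as its neighbors inside $\mathscr{T}(\mathbf{h})$. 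This rules out the cancellation that might otherwise arise from shared index overlaps.

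With this in hand, the bit-flipping count in iteration~$1$ is transparent. Each $v\in\Lambda_{(i_k,j_k)}$ has all of its $d_v^1$ neighbor checks inside $\mathscr{T}(c_{j_k}^2 v_{i_k}^1)$, at least $|\Gamma_{(i_k,j_k)}|$ of which pair $v$ with an erroneous $w\in\Gamma(\mathbf{h})$; by the structural step every such check has syndrome $1$, so $v$ sees at least $\lfloor d_v^1/2\rfloor+1$ unsatisfied neighbors and is flipped. The CC/VV-symmetric count shows that each $w\in\Gamma(\mathbf{h})$ sees at least $|\Lambda_{(i_k,j_k)}|\geq\lfloor d_c^2/2\rfloor+1$ unsatisfied checks and is flipped as well. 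Shared variable nodes in $\mathscr{T}(\mathbf{h})\setminus(\Gamma(\mathbf{h})\cup\Lambda(\mathbf{h}))$, by contrast, lie on checks whose $\mathscr{T}(\mathbf{h})$-neighbors come from several $\mathscr{T}(c_j^2 v_i^1)$'s simultaneously, so error contributions cancel in pairs and these nodes are not flipped.

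After iteration~$1$ the residual error equals the indicator of $\Lambda(\mathbf{h})$, which is the CC/VV mirror of the starting configuration. Re-running the identical count with the roles of CC- and VV-type variable nodes exchanged shows that iteration~$2$ flips the same set $\Gamma(\mathbf{h})\cup\Lambda(\mathbf{h})$ once more, driving the residual back to the indicator of $\Gamma(\mathbf{h})$; the estimate therefore oscillates indefinitely, and by Definition~\ref{Def:TrappingSetSyndromeModified}, $\mathscr{T}(\mathbf{h})$ is a trapping set. The hard part will be the structural step: one must certify by careful index-bookkeeping that $w\in\Gamma_{(i_k,j_k)}$ really forces $c_{j'}^1\notin\mathcal{N}_{v_a^1}^1$ for all $a\in\mathcal{I}\setminus\{i_k\}$, and symmetrically for $v\in\Lambda_{(i_k,j_k)}$, from which the degree-$2$ character of every relevant check inside $\mathscr{T}(\mathbf{h})$ follows.
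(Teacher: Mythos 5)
Your overall route is the same as the paper's: you split the variable nodes of $\mathscr{T}(\mathbf{h})$ into the exclusive sets $\Gamma_{(i,j)}$, $\Lambda_{(i,j)}$, you isolate the structural fact that a check joining an exclusive CC-type node and an exclusive VV-type node of the same $\mathscr{T}(c_j^2v_i^1)$ has no other neighbor inside $\mathscr{T}(\mathbf{h})$ (this is exactly Lemma~\ref{lemma:TS_compositions_condition_intermediate} in the paper), and you run the flipping count on these degree-two checks; up to swapping which side carries the initial error, that part is sound. The genuine gap is your treatment of the shared variable nodes: the claim that on their checks ``error contributions cancel in pairs and these nodes are not flipped,'' and the resulting assertion that the residual oscillates exactly between the indicators of $\Gamma(\mathbf{h})$ and $\Lambda(\mathbf{h})$. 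Cancellation in pairs presumes that every check adjacent to a shared node has an even number of mismatched neighbors inside $\mathscr{T}(\mathbf{h})$, which fails as soon as the sharing multiplicity is odd and at least three. Concretely, take $\mathcal{J}=\{1\}$, $\mathcal{I}=\{1,2,3\}$, $d_v^1=3$, $d_c^2=4$, with one check $c_*^1$ of $\mathscr{G}_1$ adjacent to all of $v_1^1,v_2^1,v_3^1$ (the absence of $4$-cycles then forces every other check of each $v_a^1$ to be unshared). The hypothesis of the lemma holds: each generator keeps $2=\left\lfloor 3/2\right\rfloor+1$ exclusive CC-type nodes and $4\ge\left\lfloor 4/2\right\rfloor+1$ exclusive VV-type nodes, while $c^2c_*^1$ is a shared CC-type node. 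When the residual sits on $\Lambda(\mathbf{h})$ (your iteration~2), each check $v^2c_*^1$ has \emph{three} mismatched VV-type neighbors $v^2v_1^1,v^2v_2^1,v^2v_3^1$, hence is unsatisfied, so $c^2c_*^1$ sees $d_c^2=4>2$ unsatisfied checks and \emph{is} flipped. The residual after that iteration is $\Gamma(\mathbf{h})\cup\{c^2c_*^1\}$, not $\Gamma(\mathbf{h})$, and your ``identical mirrored count'' induction no longer applies as stated.

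The fix is to make the argument indifferent to what the shared nodes do, which is how the paper phrases its count: the degree-two checks used to flip an exclusive node have no other neighbor in $\mathscr{T}(\mathbf{h})$ and no erroneous neighbor outside it, so their syndromes depend only on the exclusive nodes. Hence, whenever the residual restricted to $\Gamma(\mathbf{h})\cup\Lambda(\mathbf{h})$ equals $\Gamma(\mathbf{h})$ (respectively $\Lambda(\mathbf{h})$), every node of $\Gamma(\mathbf{h})\cup\Lambda(\mathbf{h})$ is flipped regardless of the state of the shared nodes; by induction the restriction of the residual to $\Gamma(\mathbf{h})\cup\Lambda(\mathbf{h})$ alternates between $\Gamma(\mathbf{h})$ and $\Lambda(\mathbf{h})$ forever, and since these sets are non-empty by hypothesis, the decoder never converges. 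That yields the lemma without any claim about the shared nodes, whose behavior (as the example above shows) genuinely depends on the parity of the sharing multiplicity.
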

\begin{proof}
    The proof is given in Appendix~\ref{sec:appendix_proof_of_sufficient_condtions}.
\end{proof}
\subsection{Concise representation of stabilizer-induced subgraph}
\label{sec:concise_rep_stabilizer_induced_TS}

Note that Lemma~\ref{lemma:TS_compositions_condition} delineates the sufficient conditions under which a stabilizer-induced graph constitutes a TS; however, it does not cover all possible linear combinations of stabilizer generators. To understand the decoding dynamics within the subgraph induced by the linear combinations of stabilizer generators not covered by the sufficient conditions specified in Lemma~\ref{lemma:TS_compositions_condition}, an individual examination of each case is required. When a stabilizer is a combination of two or more generators, the induced subgraph becomes excessively large, making their representation cumbersome. 
In the subsequent discussion, we present a succinct representation of stabilizer-induced graphs, which deliberately omits the depiction of check nodes but nevertheless permits the enumeration of the number of unsatisfied checks associated with each of the variable nodes.
In this concise representation, stabilizer generators are depicted as hexagonal nodes. The set of VV-type or CC-type variable nodes, which are connected to only a single stabilizer generator, is shown within an ellipse, with their connection to the stabilizer generator represented by an edge extending from the ellipse to the respective hexagon. For a variable node that is associated with two or more stabilizer generators, its relationship with the multiple stabilizer generators is represented by edges extending from the variable node to the respective hexagons.
 Figure~\ref{fig:consise_rep_TS_comp} shows the concise representation of the stabilizer-induced subgraph shown in Figure~\ref{fig:TS_comp_map}.

\begin{figure*}
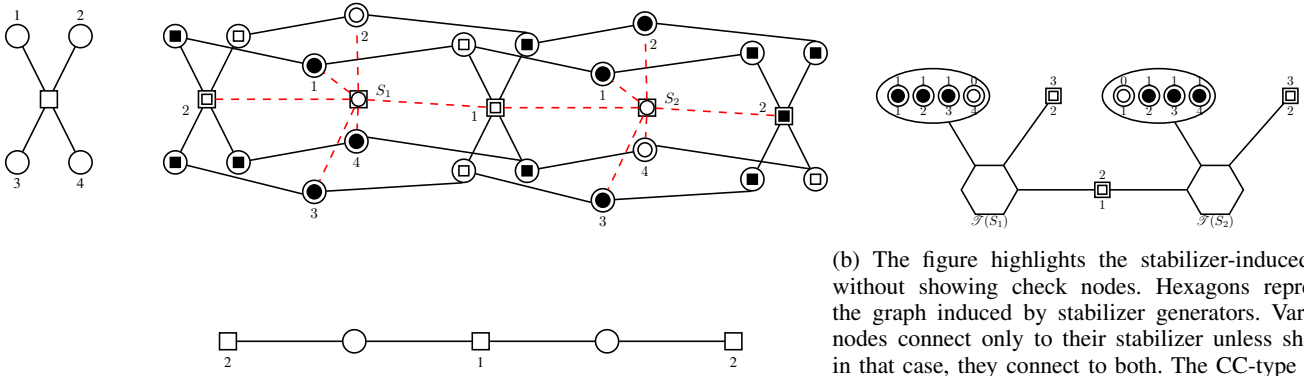

\begin{subfigure}[c]{0.6\textwidth}
\centering
    \input{Figures/Journal_figures/Example_TS_Comp/mapping_TS}
    
    \subcaption{The figure shows the subgraph of $\mathscr{G}_{\mathrm{Z}}$ induced by the combination of two $X$-type stabilizers. The edges from $\mathscr{G}_{\mathrm{Z}}$ are shown in black, whereas the edges from $\mathscr{G}_{\mathrm{X}}$ are shown in red dotted lines. These two $X$-type stabilizers share a CC-type variable node. As per the labeling scheme described in Section~\ref{sec:concise_rep_stabilizer_induced_TS}, the figure illustrates that variable nodes in stabilize-induced subgraphs can be mapped to numbers, ensuring that all the VV-type (or CC-type) variable nodes connected to a check node are assigned the same number. The number associated with a variable node is the label given below it in the figure. }
    \label{fig:TS_comp_map}
\end{subfigure}
\begin{subfigure}[c]{0.38\textwidth}
\centering
    \input{Figures/Journal_figures/Example_TS_Comp/concise_representation_of_TS}

    \subcaption{The figure highlights the stabilizer-induced TS without showing check nodes. Hexagons represent the graph induced by stabilizer generators. Variable nodes connect only to their stabilizer unless shared; in that case, they connect to both. The CC-type node linked to two stabilizers is shared. Numbers below each node follow the labeling scheme described in Section~\ref{sec:concise_rep_stabilizer_induced_TS}. Although implicit, this representation allows calculation of the number of  unsatisfied checks linked to erroneous variable nodes. The number of unsatisfied check counts, is shown above each variable nodes.}
    \label{fig:consise_rep_TS_comp}
\end{subfigure}
\caption{The inner shape of the erroneous variable nodes and unsatisfied checks is depicted in black. It should be noted that enumerating the unsatisfied checks connected to a variable node by applying the Lemma~\ref{lemma:vn_only_graphical_description} to the succinct representation yields an outcome identical to that obtained by directly enumerating the unsatisfied checks.}
 \label{fig:TS_comp}
\end{figure*}

As the concise representation of the stabilizer-induced subgraph omits an explicit representation of the check nodes, enumerating the unsatisfied checks associated with each variable node remains unfeasible within these representations.
To enable the enumeration of unsatisfied checks associated with variable nodes in the concise representation of stabilizer-induced subgraphs, we introduce a scheme to label variable nodes that satisfy the following three properties:

\begin{enumerate}
    \item  All VV-type variable nodes linked to a given check node have the same label; similarly, all CC-type variable nodes connected to that check node have the same label. 
    \item  VV-type (or CC-type) variable nodes corresponding to any two checks in the neighborhood of a CC-type (or VV-type) node possess distinct labels.
    \item  VV-type (or CC-type) variable nodes within a stabilizer-induced subgraph have distinct labels. 
\end{enumerate}
Since the check nodes are not shown explicitly in the concise depiction of the stabilizer-induced subgraph, it is imperative to develop a methodology to distinguish the check nodes associated with a specific variable node. 
This differentiation is crucial to accurately enumerate the unsatisfied checks linked to a variable node, and the first two properties enable us to do so. 
As a result of the initial two properties, check nodes associated with a CC-type (or VV-type) variable node are distinguished by aligning them with the label of their neighboring VV-type (or CC-type) variable nodes.
Therefore, when quantifying the number of unsatisfied checks linked to a CC-type (or VV-type) variable node, its adjacent checks are distinguished from one another based on the labels of the corresponding VV-type (or CC-type) variable nodes to which they are connected. 
Furthermore, given the fact that a check node is connected to a specific VV-type (or CC-type) variable node within a stabilizer-induced subgraph, the third property facilitates the identification of that particular variable node within the stabilizer-induced graph.

To understand the labeling scheme, consider the $Z$-Tanner graph denoted by $\mathscr{G}_{\mathrm{Z}}$ of an HP code constructed as the graph product of Tanner graphs $\mathscr{G}_1$ and $\mathscr{G}_2$.
According to Brook's Theorem on vertex coloring, the vertices of a graph can be assigned labels using a number of labels equal to the graph's maximum degree, ensuring that no two adjacent vertices share the same label\cite{Brooks_1941}.
So, the nodes of the Tanner graph $\mathscr{G}_1$ (or $\mathscr{G}_2$) can be labeled using maximum $\max(d_v^1,d_c^1)$ (or $\max(d_c^2,d_v^2)$) labels such that variable nodes connected to a check node have distinct labels and vice versa.
Consider the set of VV-type variable nodes associated with the $Z$ check $v^2c^1$ of the Tanner graph $\mathscr{G}_{\mathrm{Z}}$. This set is given by $v^2 \times \mathcal{N}_{c^1}^1,$ where $v^2$ is a variable node in the Tanner graph $\mathscr{G}_2$ and $c^1$ is a check node in the Tanner graph $\mathscr{G}_1$.
Assign labels to all VV-type variable nodes of the form $v^2v_{i}^1$, for $v_{i}^1\in \mathcal{N}_{c^1}^1$, using the label of $v^2$ in $\mathscr{G}_2$, thereby ensuring that all VV-type variable nodes associated with the $Z$-type check node $v_i^1c^2$ maintain the same label, as the first property states.
Now consider the CC-type variable nodes connected to $Z$-type check $v^2c^1$ from the Tanner graph $\mathscr{G}_{\mathrm{Z}}$ given by $\mathcal{N}^2_{v^2} \times c^1$.
Now label all CC-type variable nodes of the form $c_{j}^2c^1$, for $c_{j}^2\in \mathcal{N}_{v^2}^2$,using the label of $c^1$ in $\mathscr{G}_1$, which ensures that all CC-type variable nodes connected to the $Z$ check have the same label.
All variable nodes in the Tanner graph $\mathscr{G}_{\mathrm{Z}}$ can be labeled by repeating the procedure for the other check nodes in $\mathscr{G}_{\mathrm{Z}}$.
The labeling procedure is consistent, meaning that no VV-type variable node is assigned multiple labels, as two $Z$-checks $v_{i}^2c^1$ and $v_{i'}^2c^1$ do not share any VV-type variable nodes as neighbors when $i\neq i'$.
In Figure~\ref{fig:TS_comp_map} observe that the labeling scheme also have properties 2 and 3. 
In Appendix~\ref{proof_of_properties_2_and_3}, we argue that the above labeling scheme, in general, have properties 2 and 3.

Next, we illustrate through an example how to compute the number of unsatisfied checks linked to a variable node when the stabilizer-induced graph is labeled. 
In Figure~\ref{fig:consise_rep_TS_comp}, consider the CC-type variable node that is present within the subgraph induced by stabilizer generators $S_1$ and $S_2$. 
For reference, we denote this CC-type variable node by $c^2c^1$.
Each check node in $\mathcal{N}^\mathrm{Z}_{c^2c^1}$ is specifically linked to the CC-type variable node $c^2c^1$. 
According to Lemma~\ref{lemma:TSs_induced_by_stabilizers}, within a subgraph induced by a stabilizer generator, each check node connects exactly to a CC-type and a VV-type variable node. 
Since the stabilizer-induced graph in Figure~\ref{fig:consise_rep_TS_comp} includes only $\mathscr{T}(S_1)$ and $\mathscr{T}(S_2)$, from Lemma~\ref{lemma:TSs_induced_by_stabilizers} it follows that the check nodes in $\mathcal{N}^\mathrm{Z}_{c^2c^1}$ do not connect to any CC-type variable node other than $c^2c^1$. 
From properties 1 and 2 of the labeling scheme, the check nodes in $\mathcal{N}_{c^2c^1}^\mathrm{Z}$ can be distinguished by the labels of their adjacent VV-type variable nodes.
Consequently, VV-type variable nodes sharing the same label in both $\mathscr{T}(S_1)$ and $\mathscr{T}(S_2)$ connect to the same check in $\mathcal{N}^\mathrm{Z}_{c^2c^1}$. 
This shows that a specific check node in $\mathcal{N}^\mathrm{Z}_{c^2c^1}$ is connected to the CC-type variable node $c^2c^1$ and the VV-type variable nodes labeled $1$ in both $\mathscr{T}(S_1)$ and $\mathscr{T}(S_2)$.
In Figure~\ref{fig:TS_comp_map}, the CC-type variable node labeled $1$ and the VV-type variable node labeled $1$ within $\mathscr{T}(S_2)$ are not in error, while the VV-type variable node labeled $1$ in the subgraph induced by $\mathscr{T}(S_1)$ is in error, indicating that the check node considered remains unsatisfied.
Similarly, it can be inferred that the two check nodes within $\mathcal{N}^\mathrm{Z}_{c^1c^2}$, individually associated with VV-type variable nodes labeled $2$ and $3$, are satisfied, while the check node in $\mathcal{N}^\mathrm{Z}_{c^1c^2}$, which is associated with VV-type variable nodes labeled as $4$, remains unsatisfied.
Therefore, in Figure~\ref{fig:consise_rep_TS_comp}, the CC-type variable node labeled $1$ is connected to two unsatisfied checks. 

Subsequently, we determine the quantity of unsatisfied checks associated with the VV-type variable node labeled as $2$ within $\mathscr{T}(S_1)$. 
Let this variable node be represented as $v^2v^1$. 
It should be noted, as per property 2 of the labeling scheme, that the check nodes linked to a VV-type variable node can be distinguished by the label of their adjacent CC-type variable nodes. 
Consider the check node within $\mathcal{N}_{v^2v^1}^\mathrm{Z}$, whose adjacent CC-type variable nodes are labeled $1$. 
Given that the variable node $v^2v^1$ is exclusively located in $\mathscr{T}(S_1)$, according to Lemma \ref{lemma:TSs_induced_by_stabilizers}, it follows that this check node is connected to the VV-type variable node labeled $2$ and the CC-type variable node labeled $1$ within $\mathscr{T}(S_1)$. 
Since the CC-type variable node bearing the label $1$ is also present in $\mathscr{T}(S_2)$, the check node under consideration resides in $\mathscr{T}(S_2)$, further implying that it is connected to the VV-type variable node labeled $2$ within $\mathscr{T}(S_2)$. 
Consequently, this check node is associated with the CC-type variable node labeled $1$ within $\mathscr{T}(S_1)$ and VV-type variable nodes labeled as 2 in both $\mathscr{T}(S_1)$ and $\mathscr{T}(S_2)$. 
Based on the state of variable nodes illustrated in Figure \ref{fig:TS_comp_map}, it can be inferred that this particular check node is satisfied.
Now, consider the check node located within $\mathcal{N}_{v^2v^1}^\mathrm{Z}$, whose neighboring CC-type variable nodes carry the label $2$.
From Lemma \ref{lemma:TSs_induced_by_stabilizers}, it is evident that this check node is linked to the VV-type variable node labeled $2$ and the CC-type variable node labeled $2$ in $\mathscr{T}(S_1)$. Given that the CC-type variable labeled $2$ is exclusively associated with $\mathscr{T}(S_1)$, the check node in question has no connection to any CC-type variable nodes within $\mathscr{T}(S_2)$. 
Consequently, based on the state of variable nodes depicted in Figure \ref{fig:TS_comp_map}, it can be concluded that the check node in $\mathcal{N}_{v_2v_1}^\mathrm{Z}$ with neighboring CC-type variable nodes labeled as 2 is indeed satisfied.
Therefore, the VV-type variable node labeled as 2 within $\mathscr{T}(S_1)$ is associated with precisely one unsatisfied check.

In the following lemma, we formalize the process of determining variable nodes connected to a check node in the concise representation of a stabilizer-induced TS.

\begin{lemma}
\label{lemma:vn_only_graphical_description}  
    Consider the concise representation of a stabilizer-induced subgraph in the $Z$ Tanner graph $\mathscr{G}_{\mathrm{Z}}$ of an HP code given by $\mathscr{T}(\mathbf{h})$,
    where $\mathbf{h}$ is a stabilizer of type $X$ formed as a linear combination of stabilizer generators of type $X$, i.e., $\mathbf{h} = \sum_{a \in \mathcal{I},b \in \mathcal{J}}c^2_bv_a^1$.
    Assume that this concise representation is labeled to satisfy properties 1 to 3. Consider the VV-type (or CC-type) variable node  $c^2c^1$ that has the label $\lambda$ in $\mathscr{T}(\mathbf{h})$.
    \begin{enumerate}
        \item  The collection of VV-type variable nodes connected to a check node in $\mathcal{N}^\mathrm{Z}_{c^2c^1}$, whose associated VV-type variable nodes are labeled as $\rho$,  within $\mathscr{T}(\mathbf{h})$corresponds to the VV-type variable nodes labeled $\rho$ within the subgraph induced by checks of type $X$ in $\mathcal{Q}_{v^2v^1}=\mathcal{N}^\mathrm{X}_{v^2v^1} \cap \mathscr{T}(\mathbf{h})$.
        \item Refer to the check node within $\mathcal{N}_{c^2c^1}^\mathrm{Z}$, which has the VV-type variable nodes labeled $\rho$ as its neighbor, by $v^2c^1$.
       Furthermore, let $\Omega_{v^2c^1}^\rho$ represent the collection of VV-type variable nodes that are linked to $v^2c^1$ within $\mathscr{T}(\mathbf{h})$.
        The collection of CC-type variable nodes connected to $v^2c^1$ corresponds to the CC-type variable nodes labeled $\lambda$ within the subgraph induced by checks of type $X$ in $\mathcal{R}_{v^2c^1}^\rho=\cup_{v^2v^1 \in \Omega_{v_2c_1}^\rho}\mathcal{Q}_{v^2v^1}$.
    \end{enumerate}
\end{lemma}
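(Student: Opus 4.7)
The plan is to reduce both parts to bookkeeping on the HP neighborhood formulas \eqref{eq:hp_cx_neighbor} and \eqref{eq:hp_cz_neighbor}, matched against the labeling scheme properties (1)--(3) and the structural fact, inherited from Lemma~\ref{lemma:TSs_induced_by_stabilizers}, that every check inside any single-stabilizer piece $\mathscr{T}(c^2_bv^1_a)$ touches exactly one VV-type and one CC-type variable node. Under property (2), each $Z$-check in the neighborhood of the anchor variable node $c^2c^1$ is uniquely identified by the label $\rho$ of its VV-type neighbors, so the phrase ``the check whose VV-neighbors have label $\rho$'' is already well-defined directly in the concise representation.

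For part~(1), I would fix the anchor $c^2c^1$ with label $\lambda$ and use \eqref{eq:hp_cz_neighbor} to write the distinguished $Z$-check as $v^2c^1$, where $v^2\in\mathcal{N}^2_{c^2}$ is the unique neighbor of $c^2$ in $\mathscr{G}_2$ carrying label $\rho$; its VV-type neighbors in $\mathscr{T}(\mathbf{h})$ are then exactly the nodes $v^2v^1_k$ with $v^1_k\in\mathcal{N}^1_{c^1}$ that lie in some generator piece $\mathscr{T}(c^2 v^1_k)$ appearing in $\mathbf{h}$. On the other side, using \eqref{eq:hp_cx_neighbor}, the $X$-checks in $\mathcal{Q}_{v^2v^1}=\mathcal{N}^\mathrm{X}_{v^2v^1}\cap\mathscr{T}(\mathbf{h})$ are precisely these same generators $\{c^2v^1_k\}$, and property (3) guarantees that within each such piece the unique VV-type node of label $\rho$ is $v^2v^1_k$. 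The two descriptions thus enumerate the same set.

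Part~(2) follows by applying the same argument with the roles of the two factor graphs swapped. Starting from the set $\Omega^\rho_{v^2c^1}$ identified in part~(1), the $X$-check neighborhoods $\mathcal{N}^\mathrm{X}_{v^2v^1}\cap\mathscr{T}(\mathbf{h})$ for each $v^2v^1\in\Omega^\rho_{v^2c^1}$ union to $\mathcal{R}^\rho_{v^2c^1}$, and the CC-type variable nodes appearing in this union with label $\lambda$ are, by property (3) and \eqref{eq:hp_cx_neighbor}, exactly the CC-type neighbors of $v^2c^1$ that lie in $\mathscr{T}(\mathbf{h})$.

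The main obstacle I expect is ruling out spurious candidates: a VV-type or CC-type variable node can in principle enter $\mathscr{T}(\mathbf{h})$ through a stabilizer generator $c^2_bv^1_a$ that does not share the anchor's row component $c^2$, and one must check that these cannot masquerade as neighbors of the distinguished $Z$-check. The point is that any such variable node, living in a piece whose anchor differs, carries a label different from $\rho$ (respectively $\lambda$) by property (3) applied inside that piece, so the label filter removes it. Once this verification is in place, the remaining work is a direct matching of index sets via the HP neighborhood formulas.
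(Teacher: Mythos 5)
Your overall skeleton (identify the distinguished $Z$-check by the label $\rho$ of its VV-type neighbors via property (2), use the HP neighborhood formulas together with the fact that every check inside a generator piece touches exactly one VV-type and one CC-type node, and locate the relevant node inside each piece by its label via properties (1) and (3)) is the same as the paper's, and your Part (2) mirrors the paper's construction of $\mathcal{R}^\rho_{v^2c^1}$ from $\Omega^\rho_{v^2c^1}$. The genuine gap is your final paragraph, and it is load-bearing: you dispose of variable nodes that enter $\mathscr{T}(\mathbf{h})$ through a generator $c^2_bv^1_a$ with $c^2_b\neq c^2$ by claiming they ``carry a label different from $\rho$ (respectively $\lambda$) by property (3) applied inside that piece, so the label filter removes it.'' This is false. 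Labels are inherited from the factor graphs: a VV-type node $v^2v^1_k$ carries the label of $v^2$ in $\mathscr{G}_2$, and a CC-type node $c^2_bc^1$ carries the label of $c^1$ in $\mathscr{G}_1$, regardless of which piece the node lies in. Property (3) asserts distinctness of labels only \emph{within} a single piece; across pieces, equal labels are not only possible but are precisely the phenomenon the concise representation exploits (same-labeled nodes in different pieces attach to the same check). Concretely, take generators $c^2v^1_1$ and $c^2_bv^1_2$ of $\mathbf{h}$ with $c^2\neq c^2_b$, $v^1_1\neq v^1_2$, $v^1_1,v^1_2\in\mathcal{N}^1_{c^1}$, and $v^2\in\mathcal{N}^2_{c^2}\cap\mathcal{N}^2_{c^2_b}$: then $v^2v^1_2\in\mathscr{T}(c^2_bv^1_2)$ has label $\rho$ and \emph{is} adjacent to the distinguished check $v^2c^1$, and likewise $c^2_bc^1$ has label $\lambda$ and is adjacent to it. Such nodes cannot be excluded by any label filter; they are genuine neighbors, and the proof must show they are \emph{captured} by the right-hand sets.

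The correct resolutions run opposite to yours. For Part (1), what saves the statement is the product structure of the generator index set: since $\mathbf{h}=\sum_{a\in\mathcal{I},b\in\mathcal{J}}c^2_bv^1_a$ contains all pairs, the existence of the generator $c^2_bv^1_2$ forces $c^2v^1_2$ to be a generator as well, so $v^2v^1_2$ also lies in $\mathscr{T}(c^2v^1_2)$, which contains the anchor $c^2c^1$ and hence belongs to the paper's set $\mathcal{Q}_{c^2c^1}=\mathcal{N}^{\mathrm{X}}_{c^2c^1}\cap\mathscr{T}(\mathbf{h})$. (Relatedly, the set must be anchored at the CC-type node as in the paper's proof; your identification of $\mathcal{N}^{\mathrm{X}}_{v^2v^1}\cap\mathscr{T}(\mathbf{h})$ with the family $\{c^2v^1_k\}$ is wrong, since the $X$-checks adjacent to a VV-type node have \emph{fixed} $v^1$-component and varying check component.) For Part (2), the paper avoids the issue by inclusion rather than exclusion: any piece containing a CC-type neighbor of $v^2c^1$ also contains, by Part~\ref{CN_connectivity} of Lemma~\ref{lemma:structures_of_graph_induced_by_stabilizers}, a VV-type neighbor of $v^2c^1$, which by definition lies in $\Omega^\rho_{v^2c^1}$; hence that piece already appears in $\mathcal{R}^\rho_{v^2c^1}$ and its label-$\lambda$ CC-type node is accounted for. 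Until your completeness argument is replaced by one of these, both ``exactly'' claims in your proposal are unsupported.
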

\begin{proof}
    The proof is given in Appendix~\ref{proof_of_lemma_vn_only_graphical_description}.
\end{proof}


\subsection{Study of decoding dynamics of TS using the concise representation}
Subsequently, we study the decoding dynamics within stabilizer-induced subgraphs that do not meet the sufficient conditions for TSs as established in Lemma~\ref{lemma:TS_compositions_condition}.
If the stabilizer is generated as a linear combination of three or more stabilizer generators, the corresponding subgraph induced by becomes too large to analyze.
Conclusions in Lemma~\ref{lemma:vn_only_graphical_description} facilitate a concise description of the subgraphs induced by a set of stabilizer generators given the common variable nodes in the stabilizer support in the set. 
Using this concise description, in the following theorem, we analyze the decoding dynamics of the subgraphs induced by up to four stabilizer generators and show that these subgraphs constitute TS.
Despite limiting our analysis in the following theorem to subgraphs induced by up to four stabilizer generators due to the exponential increase in the number of possible subgraphs, we conjecture that all subgraphs induced by subsets of stabilizer generators indeed constitute a TS.
\begin{theorem}
    \label{theorem:stabilizer_induced_TS}
     Let $\cup_{j \in \mathcal{J}}\mathscr{T}(c_j^2)$ denote a connected cycle-free subgraph of $\mathscr{G}_2$ induced by $\{c_j^2: j \in \mathcal{J}  \}$, and similarly let $\cup_{i \in \mathcal{I}}\mathscr{T}(v_i^2)$ denote a connected cycle-free subgraph of $\mathscr{G}_1$ induced by $\{v_i^1: i \in \mathcal{I}  \}$.
     Assume that the minimum degree of check nodes is four and the minimum degree of variable nodes is three within both Tanner graphs $\mathscr{G}_1$ and $\mathscr{G}_2$.
   Consider the subgraph formed by taking the graph product of  $\cup_{j \in \mathcal{J}}\mathscr{T}(c_j^2)$ and  $\cup_{i \in \mathcal{I}}\mathscr{T}(v_i^2),$ which is denoted by $\cup_{i \in \mathcal{I},j \in \mathcal{J}}\mathscr{T}(c_j^2v_i^1).$ If the condition $|\mathcal{I}||\mathcal{J}|\leq 4$ holds, the subgraph $\cup_{i \in \mathcal{I},j \in \mathcal{J}}\mathscr{T}(c_j^2v_i^1)$ forms a TS.
\end{theorem}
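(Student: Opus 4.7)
The plan is to proceed by case analysis on the ordered pair $(|\mathcal{I}|, |\mathcal{J}|)$; the constraint $|\mathcal{I}||\mathcal{J}| \leq 4$ leaves only eight configurations: $(1,1)$, $(1,2)$, $(2,1)$, $(1,3)$, $(3,1)$, $(1,4)$, $(4,1)$, and $(2,2)$. The case $(1,1)$ follows at once from Lemma~\ref{lemma:TSs_induced_by_stabilizers}(\ref{lemma1_part1}). For each of the remaining cases I would first attempt to invoke the sufficient condition of Lemma~\ref{lemma:TS_compositions_condition} by counting private variable nodes per generator. The cycle-free hypothesis on $\cup_{j}\mathscr{T}(c_j^2)$ and $\cup_{i}\mathscr{T}(v_i^1)$ implies that any pair of generators can share at most one CC-type node (a common check neighbor of the two $v_i^1$'s in $\mathscr{G}_1$) and at most one VV-type node (a common variable neighbor of the two $c_j^2$'s in $\mathscr{G}_2$). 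With $d_v^1\ge 3$ and $d_c^2\ge 4$, the thresholds of Lemma~\ref{lemma:TS_compositions_condition} are $2$ private CC-type and $3$ private VV-type nodes per generator.

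A careful enumeration of bipartite tree topologies shows that Lemma~\ref{lemma:TS_compositions_condition} applies directly for $(1,2)$, $(2,1)$, $(2,2)$, and for the star-like sub-topologies of $(1,3)$, $(3,1)$, $(1,4)$, $(4,1)$, because in those topologies every tree vertex has at most one shared neighbor. The residual cases are those in which a path- or Y-shaped tree forces some interior check $c_j^2$ (or variable $v_i^1$) to share two or three neighbors with its tree-siblings, reducing its private-neighbor count below threshold. For these residual cases I would pass to the concise representation of Section~\ref{sec:concise_rep_stabilizer_induced_TS} and use the labeling of properties 1--3 together with Lemma~\ref{lemma:vn_only_graphical_description}. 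Specifically, I would take the candidate initial error pattern $\mathbf{x}$ supported on all CC-type variable nodes of $\mathscr{T}(\mathbf{h})$. Since the pattern is contained in $\mathscr{T}(\mathbf{h})$, every $Z$-check whose neighborhood extends outside $\mathscr{T}(\mathbf{h})$ receives only zero contributions from the exterior, so its parity depends only on its in-$\mathscr{T}(\mathbf{h})$ neighbors---an explicit function of the labels in the concise diagram. A short combinatorial calculation done separately for each tree shape then verifies that every variable node has strictly more than half of its $Z$-check neighbors unsatisfied, so the decoder flips uniformly to $\mathbf{y} = \mathbf{x} \oplus \mathbf{h}$; the CC/VV symmetry of the graph product mirrors the same computation to give the reverse flip, producing the oscillation that certifies $\mathscr{T}(\mathbf{h})$ as a TS in the sense of Definition~\ref{Def:TrappingSetSyndromeModified}.

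The main obstacle is the tree-topology enumeration in the residual subcases of $(1,3)$, $(3,1)$, $(1,4)$, $(4,1)$, since each non-isomorphic path or Y-shape requires its own bookkeeping: for an interior node with only one or two private VV-type (resp.\ CC-type) neighbors, one must argue that the \emph{shared} neighbors sitting on adjacent generators still push its unsatisfied-check count over $d_c^2/2$ (resp.\ $d_v^1/2$). The concise representation makes these local counts tractable by abstracting away the check nodes, and Lemma~\ref{lemma:vn_only_graphical_description} bundles the check-identification logic, so the verification reduces to a handful of short arguments. Nonetheless, the absence of a single uniform bound is precisely why the theorem is stated only up to $|\mathcal{I}||\mathcal{J}|=4$ and why the authors merely \emph{conjecture} the extension to larger products.
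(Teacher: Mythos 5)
Your overall skeleton matches the paper's: invoke Lemma~\ref{lemma:TS_compositions_condition} wherever every generator retains enough private nodes, and fall back on the concise representation of Lemma~\ref{lemma:vn_only_graphical_description} for the residual tree shapes; your classification of which $(|\mathcal{I}|,|\mathcal{J}|)$ topologies clear the private-node thresholds is also essentially correct. The genuine gap is in your verification of the residual cases. It is \emph{false} that, under the error pattern $\mathbf{x}$ supported on all CC-type nodes, ``every variable node has strictly more than half of its $Z$-check neighbors unsatisfied'' --- and it fails precisely in the residual cases. In such a case a check node lying in the intersection of two generators' subgraphs has \emph{two} CC-type neighbors inside $\mathscr{T}(\mathbf{h})$: in the $(1,3)$ path, for instance, the check $v_a^2c^1$ (with $v_a^2$ the variable shared by $c_{j_1}^2$ and $c_{j_2}^2$) is adjacent to $c_{j_1}^2c^1$, $c_{j_2}^2c^1$, and $v_a^2v_i^1$. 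With all CC nodes in error these two error contributions cancel, so every such check is \emph{satisfied}. Consequently each shared VV-type node sees zero unsatisfied checks and is not flipped, and the interior generator's CC-type nodes see only $d_c^2-s$ unsatisfied checks, where $s$ is the number of shared variables of that check node --- e.g.\ $2$ out of $4$ in the $(1,3)$ path, which does not exceed the threshold $\lfloor d_c^2/2\rfloor$. The decoder therefore does \emph{not} perform the uniform flip to $\mathbf{x}\oplus\mathbf{h}$; it performs a partial flip, and your one-step oscillation argument collapses.

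The conclusion itself can be rescued, but only by the kind of multi-iteration bookkeeping the paper actually does: the all-CC pattern (and the patterns chosen in Appendix~\ref{appendix:decoding_dynamics}) do get trapped, but through nonuniform flips that settle into period-2 or period-4 cycles --- compare Figure~\ref{fig:decoding_dynamics_of_TS3} and Figure~\ref{fig:decoding_dynamics_of_TS7}, where interior CC-type nodes and shared VV-type nodes sit still for several iterations while the outer nodes churn, and the initial state only recurs after three to five iterations. Your appeal to ``CC/VV symmetry of the graph product'' to get the reverse flip also does not hold here, since $d_v^1=3\neq d_c^2=4$ makes the two flipping thresholds different; the forward and backward half-steps are genuinely asymmetric. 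So the residual cases cannot be dispatched by a single symmetric flip argument; each tree shape needs its dynamics traced until the error state recurs, which is exactly the content of the paper's case-by-case appendix.
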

\begin{proof}
    The theorem asserts that any subgraph induced by a stabilizer, constructed as a linear combination of at most four stabilizer generators, constitutes a TS.
   Lemma~\ref{lemma:TS_compositions_condition} states sufficient conditions under which a subgraph induced by a stabilizer qualifies as a TS. Since sufficient conditions do not include all possible combinations of stabilizer generators, we analyze the cases that do not satisfy sufficient conditions case by case using the concise representation described in Lemma~\ref{lemma:vn_only_graphical_description}. 
   To this end, let us revisit the sufficient conditions derived in Lemma~\ref{lemma:TS_compositions_condition}, which asserts that a stabilizer formed through the linear combination of a set of stabilizer generators, wherein the support of each stabilizer generator within this has a minimum of $\left \lfloor \frac{d_v^1}{2}\right\rfloor+1$ CC-type variable nodes and $\left \lfloor \frac{d_c^2}{2}\right\rfloor+1$ VV-type variable nodes that do not occur in the support of any other generator from the set.
   Given that this theorem addresses stabilizers constituted by the linear combination of up to four stabilizer generators, the support of each of these stabilizer generators can, at most, share three variable nodes with the support of other stabilizer generators involved in the linear combination.
   Consequently, it can be deduced that configurations which fail to meet the sufficient conditions derived in Lemma~\ref{lemma:TS_compositions_condition} must exhibit CC-type variable nodes with degrees four or five, and VV-type variable nodes whose degrees are three, four, or five.
   We systematically examine all instances involving VV-type variable nodes of degree three and CC-type variable nodes of degree four in the Appendix~\ref{appendix:decoding_dynamics}. 
   All other cases are based on similar arguments.
\end{proof}
\section{Bit-flipping decoders based on TS dynamics}
\label{sec:TS_aware_BF_decoder}
The analysis of the dynamics of the TSs, in the previous section, reveals that the number of unsatisfied checks associated with CC-type and VV-type variable nodes exceeds their corresponding flipping thresholds in the case of TS-induced error patterns. This phenomenon can be attributed to the inherent symmetry between VV-type and CC-type variable nodes within the subgraph induced by the stabilizers.
To mitigate the occurrence of these TSs, we propose an enhancement to the bit-flipping decoder in Algorithm~\ref{alg:TS_aware_bit_flipping_decoder}. Initially, the decoding algorithm flips the VV-type variable nodes depending on the number of unsatisfied checks to which they are linked. Subsequently, it revises the status of the check nodes and, consequently, flips the CC-type variable nodes. The decoder iteratively alternates between these two steps until convergence is achieved or a predetermined iteration limit is attained.
In Theorem~\ref{theorem:TS_aware_bf_decoder}, we establish results pertaining to the error correction capacity of the decoder described in Algorithm~\ref{alg:TS_aware_bit_flipping_decoder}.
\SetKwComment{Comment}{/* }{ */}
\begin{algorithm}
    \caption{Trapping set-aware bit-flipping decoder}\label{alg:TS_aware_bit_flipping_decoder}
    \KwData{$\boldsymbol{\sigma},\mathbf{H}_{\mathrm{Z}} $}
\KwResult{$\hat{\mathbf{x}}$}
$\hat{\mathbf{x}} \gets \boldsymbol{0}$\ \Comment*[r]{0 denotes the all-zero vector}
\While{$\mathbf{H}_{\mathrm{Z}} \hat{\mathbf{x}} \neq \boldsymbol{\sigma} (\text{ mod 2 })$}{
$\hat{\boldsymbol{\sigma}} \gets \mathbf{H}_{\mathrm{Z}} \hat{\mathbf{x}} ( \text{mod 2}) $\;
$\boldsymbol{\beta} \gets \hat{\boldsymbol{\sigma}} \oplus \boldsymbol{\sigma}$\;
$\boldsymbol{\alpha} = \mathbf{H}_{\mathrm{Z}}^{\mathsf{T}}\boldsymbol{\beta}$\;
\For{$i=1$ to $n_1n_2$}{\ \Comment*[r]{$n_1n_2:$ the number of VV-type variable nodes}
\If{$\boldsymbol{\alpha}_i > |\mathcal{N}_{v_i}|/2 $}{
$\hat{\mathbf{x}}_i=\hat{\mathbf{x}} \oplus 1 ( \text { mod 2})$\;
}
}
$\hat{\boldsymbol{\sigma}} \gets \mathbf{H}_{\mathrm{Z}} \hat{\mathbf{x}} ( \text{mod 2}) $\;
$\boldsymbol{\beta} \gets \hat{\boldsymbol{\sigma}} \oplus \boldsymbol{\sigma}$\;
$\boldsymbol{\alpha} = \mathbf{H}_{\mathrm{Z}}^{\mathsf{T}}\boldsymbol{\beta}$\;
\If{$\mathbf{H}_{\mathrm{Z}} \hat{\mathbf{x}} \neq \boldsymbol{\sigma} (\text{ mod 2 })$}{
\For{$i=n_1n_2+1$ to $n$}{
\If{$\boldsymbol{\alpha}_i > |\mathcal{N}_{v_i}|/2 $}{
$\hat{\mathbf{x}}_i=\hat{\mathbf{x}} \oplus 1 ( \text { mod 2})$\;
}
}
}
}
\end{algorithm}
\begin{theorem}
   \label{theorem:TS_aware_bf_decoder} 
    Let $\cup_{j \in \mathcal{J}}\mathscr{T}(c_j^2)$ denote a connected cycle-free subgraph of $\mathscr{G}_2$ induced by $\{c_j^2: j \in \mathcal{J}  \}$, and similarly let $\cup_{i \in \mathcal{I}}\mathscr{T}(v_i^2)$ denote a connected cycle-free subgraph of $\mathscr{G}_1$ induced by $\{v_i^1: i \in \mathcal{I}  \}$, where the index sets $\mathcal{I}$ and $\mathcal{J}$ are non-empty.
   Consider the subgraph $\mathscr{T}(\mathbf{h}) \coloneqq \cup_{i \in \mathcal{I},j \in \mathcal{J}}\mathscr{T}(c_j^2v_i^1)=\cup_{j \in \mathcal{J}}\mathscr{T}(c_j^2) \times \cup_{i \in \mathcal{I}}\mathscr{T}(v_i^2) $.
   The TS-aware decoder, as described in Algorithm~\ref{alg:TS_aware_bit_flipping_decoder}, successfully corrects all error patterns within subgraph $\mathscr{T}(\mathbf{h})$, given that both $d_v^1$ and $d_c^2$ are odd and greater than two.
\end{theorem}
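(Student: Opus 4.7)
The plan is to establish that, for every error pattern $\mathbf{x}$ whose support is contained in the variable-node set of $\mathscr{T}(\mathbf{h})$, Algorithm~\ref{alg:TS_aware_bit_flipping_decoder} terminates after finitely many iterations and returns $\hat{\mathbf{x}}$ with $\mathbf{x} \oplus \hat{\mathbf{x}}$ lying in the $X$-stabilizer subspace supported on $\mathscr{T}(\mathbf{h})$, which is an acceptable outcome modulo stabilizer equivalence.

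I would begin with the base case in which $\mathbf{h}$ is a single generator $c_j^2 v_i^1$, so that $\mathscr{T}(\mathbf{h}) = \mathscr{T}(c_j^2 v_i^1)$ has the clean bipartite structure established in Lemma~\ref{lemma:TSs_induced_by_stabilizers}. Using the closed-form count employed in that proof, for any error pattern with $m$ VV-errors and $k$ CC-errors inside the sub-piece, every non-error VV-node has exactly $k$ unsatisfied checks and every erroneous VV-node has $d_v^1 - k$. Since $d_v^1$ is odd and exceeds two, exactly one of the inequalities $k \geq (d_v^1 + 1)/2$ or $k \leq (d_v^1 - 1)/2$ holds, so phase~1 either flips every non-error VV-node (forcing all VV-nodes into the residual error) or flips every erroneous VV-node (clearing the VV-side of the residual). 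A symmetric direct calculation using oddness of $d_c^2$ then shows that a single CC-flipping pass in phase~2 suffices: $\hat{\mathbf{x}}$ becomes equal to $\mathbf{x}$ (if phase~1 cleared the VV-side) or to $\mathbf{x} \oplus \mathbf{h}$ (if phase~1 filled the VV-side), and both are admissible.

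I would then extend the argument to a general $\mathbf{h} = \sum_{i \in \mathcal{I}, j \in \mathcal{J}} c_j^2 v_i^1$ by induction on $|\mathcal{I}| + |\mathcal{J}|$. For the inductive step I would use the concise representation of Section~\ref{sec:concise_rep_stabilizer_induced_TS} together with Lemma~\ref{lemma:vn_only_graphical_description} to decompose the unsatisfied-check count $\alpha_v$ at every VV-type and CC-type node of $\mathscr{T}(\mathbf{h})$ into a sum of per-generator contributions. With these counts in hand, I would define the nonnegative integer potential $\Phi(\hat{\mathbf{x}}) = \min_{\mathbf{s}} \text{wt}(\mathbf{x} \oplus \hat{\mathbf{x}} \oplus \mathbf{s})$, where $\mathbf{s}$ ranges over $X$-stabilizers supported on $\mathscr{T}(\mathbf{h})$, and argue that every compound iteration (phase~1 followed by phase~2) either achieves syndrome matching or strictly decreases $\Phi$. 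The sequential VV-before-CC processing in Algorithm~\ref{alg:TS_aware_bit_flipping_decoder} is crucial here: after phase~1 the checks feeding the CC-side carry an asymmetric bias from the updated VV-estimates, and the oddness of $d_c^2$ forces phase~2 into a decisive CC-assignment rather than the co-oscillation exhibited by the parallel decoder in Lemma~\ref{lemma:TSs_induced_by_stabilizers}. Since $\Phi$ is bounded above by the total number of variable nodes in $\mathscr{T}(\mathbf{h})$, the decoder halts in at most that many iterations; at termination the syndrome matches and, because all flips have been confined to $V(\mathscr{T}(\mathbf{h}))$, the CSS orthogonality $\mathbf{H}_{\mathrm{Z}}\mathbf{H}_{\mathrm{X}}^{\mathsf{T}}=\mathbf{0}$ forces the residual to lie in the $X$-stabilizer subspace supported on $\mathscr{T}(\mathbf{h})$.

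The principal obstacle is the inductive step when a variable node is shared across several single-generator subgraphs $\mathscr{T}(c_j^2 v_i^1)$. For such a node the unsatisfied-check count aggregates contributions from every sub-piece to which it belongs, and one must rule out aggregate configurations that either stall the potential or re-introduce the symmetric oscillation of Lemma~\ref{lemma:TSs_induced_by_stabilizers}. I anticipate handling this by peeling off one generator at a time from the boundary of $\mathscr{T}(\mathbf{h})$, chosen so that the removed generator contributes at least $\lfloor d_c^2/2 \rfloor + 1$ private VV-nodes and $\lfloor d_v^1/2 \rfloor + 1$ private CC-nodes in the sense of Lemma~\ref{lemma:TS_compositions_condition}, and invoking the oddness of both degrees to obtain a strict majority on those private nodes that then propagates a decisive flipping decision through the remaining shared structure without conflict.
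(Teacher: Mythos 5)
Your single-generator base case is correct and coincides with the paper's own case analysis (it is the two-case argument on $\beta \lessgtr \lfloor d_v^1/2\rfloor$, using oddness so that no tie can occur), but the inductive step contains the real gap. The claim that every compound iteration either matches the syndrome or strictly decreases $\Phi(\hat{\mathbf{x}})=\min_{\mathbf{s}}\mathrm{wt}(\mathbf{x}\oplus\hat{\mathbf{x}}\oplus\mathbf{s})$ is asserted, never proven, and it is exactly where the difficulty of the theorem lives: when generators interact through shared variable nodes, progress inside one generator's subgraph can be accompanied by flips elsewhere that are wrong relative to every stabilizer coset, so no monotonicity of $\Phi$ is evident. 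The paper does not establish any global monotone quantity. Instead it proves a structural peeling lemma (Lemma~\ref{lemma:TS_aware_bf_decoder_intermediate_results}): because $\cup_{j}\mathscr{T}(c_j^2)$ and $\cup_{i}\mathscr{T}(v_i^1)$ are connected and cycle-free, there always exists a generator whose induced subgraph shares \emph{exactly one} variable node with the rest of $\mathscr{T}(\mathbf{h})$. Combined with Part~\ref{non_intersecting_sypport_vns} of Lemma~\ref{lemma:structures_of_graph_induced_by_stabilizers} (no two variable nodes of a generator subgraph share a check), this makes the unsatisfied-check counts of that leaf's exclusive nodes completely independent of the decoder's behavior outside the leaf, so the leaf provably resolves either to the correct estimate or to the estimate complemented by the full generator; the latter is equivalent, modulo that generator, to an error on the single shared node, and the argument recurses on the smaller structure. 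Your peeling criterion---at least $\lfloor d_c^2/2\rfloor+1$ private VV-nodes and $\lfloor d_v^1/2\rfloor+1$ private CC-nodes in the sense of Lemma~\ref{lemma:TS_compositions_condition}---is strictly weaker: it allows the peeled generator to share several contested nodes with the remainder, in which case its exclusive nodes' counts \emph{do} depend on the outside configuration, and your statement that a ``decisive flipping decision propagates through the remaining shared structure without conflict'' is precisely the unproved assertion rather than a consequence of oddness.

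There is a second, independent gap at termination. From ``syndrome matches and all flips are confined to $V(\mathscr{T}(\mathbf{h}))$'' you conclude via $\mathbf{H}_{\mathrm{X}}\mathbf{H}_{\mathrm{Z}}^{\mathsf{T}}=\mathbf{0}$ that the residual $\mathbf{x}\oplus\hat{\mathbf{x}}$ lies in the $X$-stabilizer subspace. CSS orthogonality gives only the inclusion of the stabilizers in $\ker \mathbf{H}_{\mathrm{Z}}$, not the converse: $\ker \mathbf{H}_{\mathrm{Z}}$ also contains logical operators, so you would additionally have to show that no logical operator is supported on $\mathscr{T}(\mathbf{h})$, which is a nontrivial statement requiring its own argument. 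The paper's proof never needs this implication, because it is constructive: by recording which peeled leaves resolve ``wrongly,'' it identifies the residual explicitly as the sum of those generators, which is a stabilizer by construction. To repair your proposal you would either have to prove the $\Phi$-decrease claim together with a cleaning-type lemma excluding logicals on $\mathscr{T}(\mathbf{h})$, or replace the potential argument by the exact-one-shared-node peeling recursion that the paper uses.
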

\begin{proof}
The proof is given in Appendix~\ref{appendinx_proof_TS_aware_decoder}.
\end{proof}
\begin{figure*}
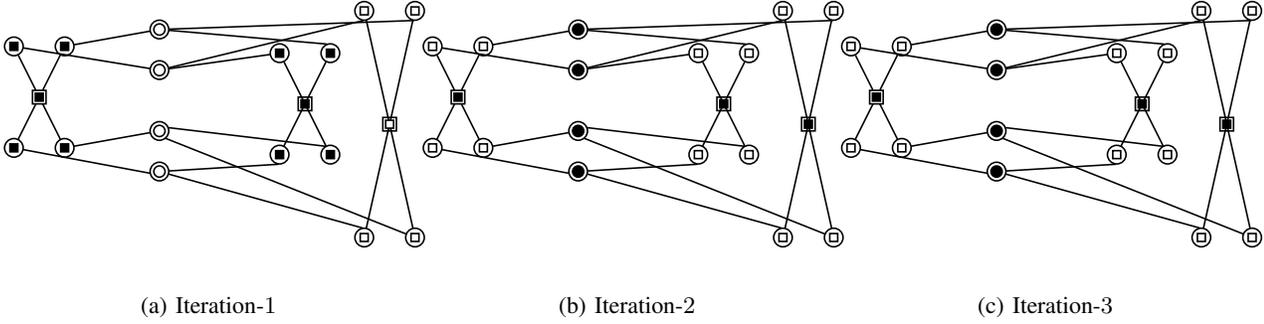

    \begin{subfigure}[t]{0.3\textwidth}
    \input{Figures/Journal_figures/Example_serial_BF_decoder_TS/iteration_0}
        \subcaption{Iteration-$1$} 
        \label{fig:iteration0_TS}
    \end{subfigure}
    \begin{subfigure}[t]{0.3\textwidth}
    \input{Figures/Journal_figures/Example_serial_BF_decoder_TS/iteration_1}
        \subcaption{Iteration-$2$}
        \label{fig:iteration1_TS}
    \end{subfigure}
    \begin{subfigure}[t]{0.3\textwidth}
    \input{Figures/Journal_figures/Example_serial_BF_decoder_TS/iteration_2}
        \subcaption{Iteration-$3$}
        \label{fig:iteration2_TS}
    \end{subfigure}
    \caption{The figure illustrates how the TS-aware BF decoder described in Algorithm~\ref{alg:TS_aware_bit_flipping_decoder} avoids TS induced by  a stabilizer generator under bit-flipping decoding, in which VV-type and CC-type variable nodes have degrees three and four, respectively. The inner shape of the erroneous variable nodes and unsatisfied checks is depicted in black.} 
    \label{fig:example_serial_bf_decoder_TS}
\end{figure*}

   \begin{figure*}
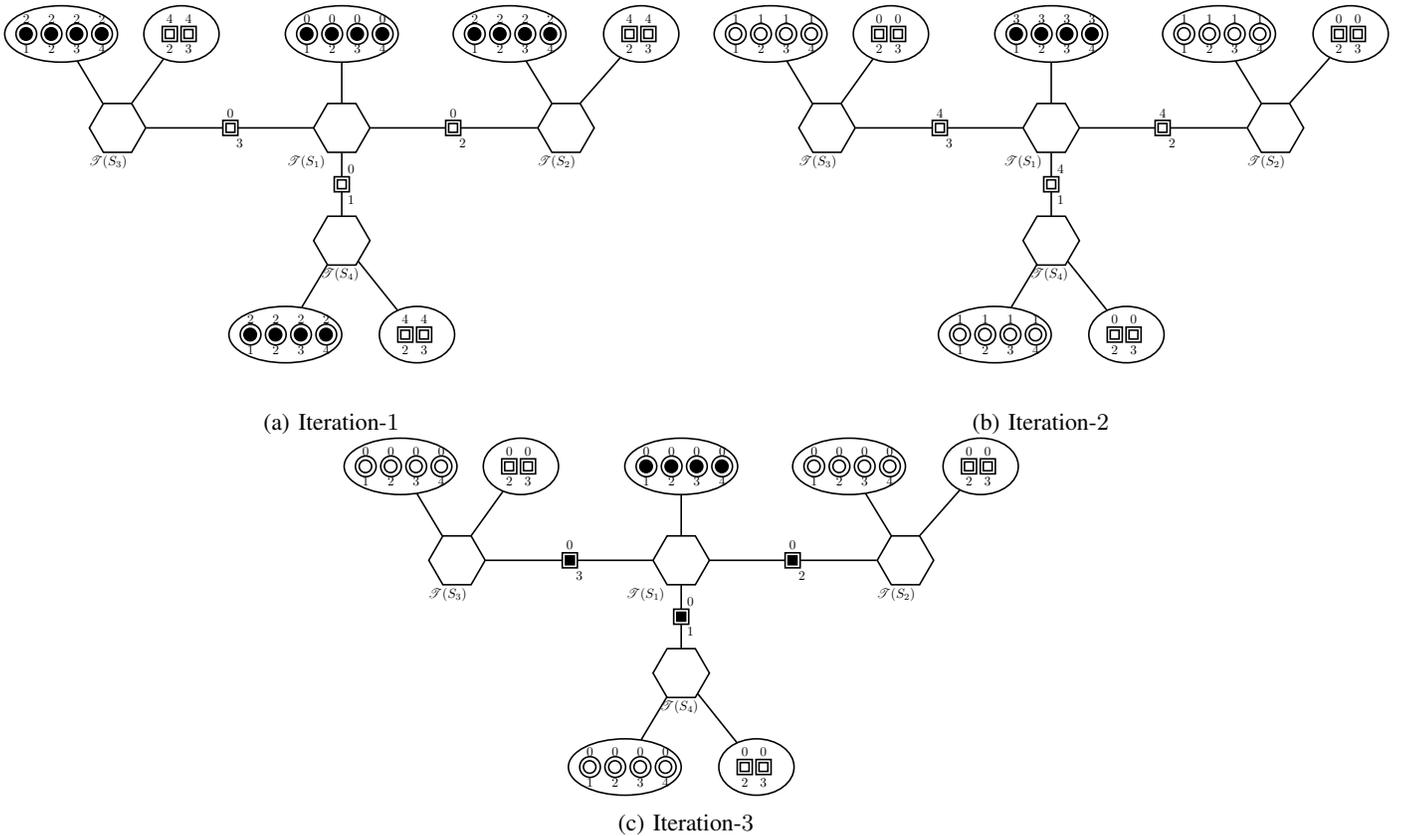

       \begin{subfigure}[t]{0.48\textwidth}
       \centering
       \input{Figures/Journal_figures/Example_serial_BF_decoders/iteration_1}
       \subcaption{Iteration-$1$}
        \label{fig:iteration1}
       \end{subfigure}
       \begin{subfigure}[t]{0.48\textwidth}
       \centering
       \input{Figures/Journal_figures/Example_serial_BF_decoders/iteration_2}
       \subcaption{Iteration-$2$}
       \label{fig:iteration2}
       \end{subfigure}
       \begin{subfigure}[t]{\textwidth}
       \centering
       \input{Figures/Journal_figures/Example_serial_BF_decoders/iteration_3}
       
       \subcaption{Iteration-$3$} 
       \label{fig:iteration3}
       \end{subfigure}
      
       \caption{The figure illustrates how the TS-aware BF decoder described in Algorithm~\ref{alg:TS_aware_bit_flipping_decoder} avoids TS induced by a linear combination of stabilizer generators, in which VV-type and CC-type variable nodes have degrees three and four, respectively. The inner shape of the erroneous variable nodes and unsatisfied checks is depicted in black. The numbers above and below each variable node denote the number assigned to it according to the procedure described in Lemma~\ref{lemma:vn_only_graphical_description} and the number of unsatisfied checks connected to it, respectively.}
        \label{fig:example_serial_bf_decoder_TS_comp}
   \end{figure*} 
   Next, we consider two examples of TSs under bit-flipping decoding and show that the TS-aware decoder, described in Algorithm, avoids the considered TSs.
   \begin{example}
       \label{ex:illustration_TS_aware_dec_avoids_TS_induced_by_generators}
       Consider the subgraph shown in Figure~\ref{fig:example_serial_bf_decoder_TS}. Note that this subgraph is an TS since it is induced by a stabilizer generator. Also, observe that two CC-type variable nodes are in error. From Part~\ref{lemma_TS_composition_bf_decoder:part2} of Lemma~\ref{lemma:TSs_induced_by_stabilizers}, it can be deduced that this is an TS-inducing error pattern. 
       Now, consider the decoding of this error pattern using the TS-aware decoder described in Algorithm~\ref{alg:TS_aware_bit_flipping_decoder}.  The decoder starts by flipping VV-type variable nodes that are connected to more than $\left\lfloor\frac{d_v^1}{2}\right\rfloor=1$ unsatisfied checks. Since all VV-type variable nodes are connected to two unsatisfied checks, the decoder flips the error estimate on all VV-type variable nodes. In iteration~2, the decoder flips the CC-type variable nodes that are connected to more than $\left\lfloor\frac{d_c^2}{2}\right\rfloor=2$ unsatisfied checks.
       At the beginning of the third iteration, none of the variable nodes is connected to any unsatisfied checks, thus indicating the convergence of the decoder. Note that the decoder converges to a stabilizer generator, taking advantage of the degeneracy of HP codes.
   \end{example}
\begin{example}
    \label{ex:serial_bf_decoder_TS_comp} 
    Consider the subgraph induced by the linear combination of stabilizer generator $S_1,S_2,S_3$, and $S_4$ shown in Figure~\ref{fig:example_serial_bf_decoder_TS_comp}.
    From Theorem~\ref{theorem:stabilizer_induced_TS}, we know that this subgraph is a TS under bit-flipping decoding.
    Now, consider the decoding of the error pattern within the subgraph shown in Figure~\ref{fig:example_serial_bf_decoder_TS_comp} using the TS-aware decoder described in Algorithm~\ref{alg:TS_aware_bit_flipping_decoder}.
    Algorithm~\ref{alg:TS_aware_bit_flipping_decoder}  flips only the error estimate on the VV-type variable nodes in the first iteration. As a result, the error estimates on the CC-type variable nodes remain unchanged regardless of the number of unsatisfied checks to which they are connected. 
    The VV-type variable nodes associated with stabilizers $S_1, S_2$ and $S_3$ are each connected to two unsatisfied checks, which exceed the threshold specified by $\left\lfloor \frac{d_v^1}{2} \right \rfloor=\left\lfloor\frac{3}{2}\right\rfloor=1$. Therefore, the VV-type variable nodes corresponding to the stabilizers $S_1,S_2$ and $S_3$ undergo flipping. In contrast, those corresponding to the stabilizer $S_1$ are connected to a single unsatisfied check and, as a result, are not flipped.
    In iteration~2, following the update of the error estimates on the VV-type variable nodes in the previous iteration, the decoder continues to update the error estimates on the CC-type variable nodes. The CC-type variable nodes associated with stabilizer $S_1$ are connected to four unsatisfied checks, thus exceeding the threshold given by $\left\lfloor\frac{d_c^2}{2} \right\rfloor=\left\lfloor\frac{4}{2} \right\rfloor=2$. Consequently, the CC-type variable nodes, corresponding to stabilizer $S_1$, are subjected to flipping, while the CC-type variable nodes without any unsatisfied checks as neighbors remain unchanged.
    At the beginning of the third iteration, the variable nodes are not connected to any unsatisfied checks, indicating convergence of the decoder. Observe that the decoder converges to a stabilizer, implying that the decoder exploits the inherent degeneracy of HP codes to converge.
\end{example}
In Theorem~\ref{theorem:TS_aware_bf_decoder}, it is shown that the decoder introduced in Algorithm~\ref{alg:TS_aware_bit_flipping_decoder} successfully circumvents stabilizer-induced TSs under the condition that the variable nodes possess an odd degree. In the following example, we elucidate the circumstances, in particular when the variable nodes exhibit even degrees, under which the proposed decoders encounter failures and propose a straightforward modification to prevent such failures. This modification is intentionally excluded from Theorem~\ref{theorem:TS_aware_bf_decoder} to avoid an additional number of cases in the proof.
\begin{example}
    \label{ex:TS_of_for_eve_degree}
    Consider the subgraph depicted in Figure~\ref{fig:TS_serial_bf_decoder}, which is induced by a $X$-stabilizer generator. The degrees of the VV-type and CC-type variable nodes are four and six, respectively. 
    In the subgraph illustrated in Figure~\ref{fig:TS_serial_bf_decoder}, three out of six VV-type variable nodes are erroneous, while two out of four CC-type variable nodes are erroneous.
    Now, consider the decoding of this error pattern using the TS-aware decoder described in Algorithm~\ref{alg:TS_aware_bit_flipping_decoder}.
    It is important to note that the VV-type variable nodes are connected to two unsatisfied checks. 
    The threshold for the number of unsatisfied neighboring nodes required to flip errors on VV-type variables is $\left\lfloor \frac{d_v^1}{2} \right\rfloor=2$. 
    Consequently, none of the VV-type variable nodes have changed their error estimates. 
    Similarly, CC-type variable nodes are connected to three unsatisfied checks, which do not exceed the threshold $\left\lfloor \frac{d_c^1}{2}\right\rfloor=3$ necessary to change their error estimates. As a result, the subgraph shown in Figure~\ref{fig:TS_serial_bf_decoder} constitutes a TS for the decoder described in Algorithm~\ref{alg:TS_aware_bit_flipping_decoder}.

  Consider now an adaptation of the decoder in Algorithm~\ref{alg:TS_aware_bit_flipping_decoder}, wherein a VV-type (or CC-type) variable node is selected at random and flipped among those VV-type (or CC-type) variable nodes that are connected to precisely $\left\lfloor \frac{d_v^1}{2} \right\rfloor$ $\left(\text{or }\left\lfloor \frac{d_c^2}{2} \right\rfloor\right)$ unsatisfied checks, provided that none of them is associated with more than $\left\lfloor \frac{d_v^1}{2} \right\rfloor$  $\left(\text{or }\left\lfloor \frac{d_c^2}{2} \right\rfloor\right)$ unsatisfied checks. 
  With this modification, the decoder flips a VV-type variable node in the first iteration and converges to the correct error estimate in subsequent iterations following Algorithm~\ref{alg:bit_flipping_decoder} as illustrated in Figure~\ref{fig:TS_serial_bf_decoder}.
\end{example}
\begin{figure*}
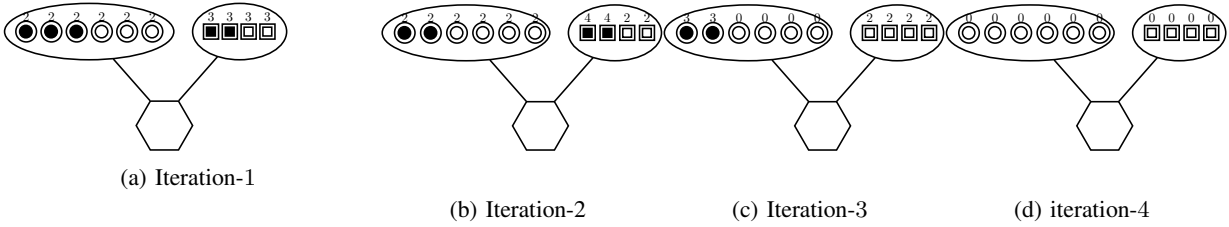

\begin{subfigure}[t]{0.27\textwidth}
\input{Figures/Journal_figures/TS_serial_BF_decoder/iteration_1}
\subcaption{Iteration-$1$}
\label{fig:TS_serial_bf_decoder_iteration1}
\end{subfigure}
\begin{subfigure}[t]{0.2\textwidth}
\input{Figures/Journal_figures/TS_serial_BF_decoder/iteration_2}
\subcaption{Iteration-2}
\label{fig:TS_serial_bf_decoder_iteration2}
\end{subfigure}
\begin{subfigure}[t]{0.2\textwidth}
\input{Figures/Journal_figures/TS_serial_BF_decoder/iteration_3}
\subcaption{Iteration-$3$}
\label{fig:TS_serial_bf_decoder_iteratio3}
\end{subfigure}
\begin{subfigure}[t]{0.2\textwidth}
\input{Figures/Journal_figures/TS_serial_BF_decoder/iteration_4}
\subcaption{iteration-$4$}
\label{fig:TS_serial_bf_decoder_iteration4}
\end{subfigure}
    \caption{The figure shows a TS of the decoder described in Algorithm~\ref{alg:TS_aware_bit_flipping_decoder} when VV-type and CC-type variable nodes have an even degree and illustrates how a modification of the decoder avoids the issue. The modification of updating the error estimate on a randomly selected VV-type or CC-type variable node, among those precisely linked to the same number of unsatisfied checks as the threshold determined by Algorithm~\ref{alg:TS_aware_bit_flipping_decoder}, serves as a simple modification that prevents such TS when the count of unsatisfied checks connected to any variable node does not exceed the threshold. The degrees of VV-type and CC-type variable nodes are four and six, respectively.  The inner shape of the erroneous variable nodes and unsatisfied checks is depicted in black. The number above each variable node denotes the number of unsatisfied checks connected to it.}
    \label{fig:TS_serial_bf_decoder}
\end{figure*}

\

 
\section{Stabilizer-induced TSs of LP codes}
\label{sec:QTS_LP_codes}
As described in Section~\ref{sec:LP_codes}, the construction of base graphs for LP codes from the base graphs of two LDPC codes is analogous to the construction of Tanner graphs for HP codes from the Tanner graphs of two classical codes. 
Consequently, the subgraphs induced by the $X$-stabilizer generators in the base graph pertaining to the $Z$-stabilizer generators exhibit a structural similarity to the stabilizer-induced subgraphs in the context of HP codes. 
Thus, when the decoder operates on the base graphs of the LP codes, the LP codes will possess stabilizer-induced TSs or QTSs similar to those found in the HP codes.
However, the decoder operates on the Tanner graphs of the LP codes, which are obtained by lifting the corresponding base matrices.
In order to validate the assertion that LP codes exhibit stabilizer-induced TSs, it is essential to demonstrate the non-trivial property that the lifting process retains the structure of TSs identified in base graphs within their corresponding Tanner graphs. 
This property is formalized in Lemma~\ref{lemma:stabilizer_induced_TSs_LP_codes}.
First, we prove a property of the protograph LDPC codes that will be subsequently used in the proof of the Lemma~\ref{lemma:stabilizer_induced_TSs_LP_codes}.
\begin{lemma}
    \label{lemma:intermediate_result_copy_and_permute}
   Consider a protograph LDPC code with base graph $\underline{\mathscr{G}}$. This base graph is lifted into a Tanner graph $\mathscr{G}$ of an LDPC code by employing the copy and permutation procedure as detailed in Section~\ref{sec:LP_codes}. Let the lifting size in the copy and permute operation be given by $\gamma$.
   If there is an edge between the variable node $\underline{v}$ and the check node $\underline{c}$ with the label $x^\rho$ in base graph $\underline{\mathscr{G}}$, the following applies: \begin{itemize}
       \item The $i$-th copy of $\underline{v}$ is connected to the $i+\rho-1 (\text{ mod } \gamma) +1$-th copy of $\underline{c}$ in Tanner graph $\mathscr{G}$;
       \item The $j$-th copy of $\underline{c}$ is connected to the  $j-\rho -1 (\text{ mod }\gamma)+1$-th copy of $\underline{v}$ in Tanner graph $\mathscr{G}$,
   \end{itemize}
\end{lemma}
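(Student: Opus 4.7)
\textbf{Proof plan for Lemma~\ref{lemma:intermediate_result_copy_and_permute}.}
The claim is essentially a restatement, in the copy-and-permute language, of what the circulant label $x^\rho$ does in the parity-check matrix obtained by lifting. My plan is therefore to unwind the two descriptions of lifting introduced in Section~\ref{sec:protograph-ldpc-code} and verify that they match in this single-edge setting; the equivalence will then immediately yield both bullets.

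First, I would fix the convention from Section~\ref{sec:protograph-ldpc-code}: the element $x^\rho \in \mathcal{R}_\gamma$ corresponds to a $\gamma\times\gamma$ circulant matrix $M_\rho$ whose $(i,j)$ entry equals $1$ precisely when $j \equiv i+\rho \pmod{\gamma}$ (using $1$-indexing, this reads $j = ((i+\rho-1)\bmod\gamma)+1$). The key consistency check is that this reproduces the entries listed in Example~\ref{ex:lifting_illustration}: for $\gamma=2$ and $\rho=1$ we indeed recover $\bigl[\begin{smallmatrix}0&1\\1&0\end{smallmatrix}\bigr]$, matching the matrices $\mathbf{M}_{12}$, $\mathbf{M}_{21}$, $\mathbf{M}_{23}$ given in~\eqref{eq:proto}. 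This anchors the convention so that all subsequent index arithmetic is unambiguous.

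Next, I would interpret this block in graph-theoretic terms. Within the sub-block $M_\rho$ assigned to the edge from $\underline{v}$ to $\underline{c}$, the rows index the $\gamma$ copies of $\underline{c}$ and the columns index the $\gamma$ copies of $\underline{v}$. The lifted Tanner graph $\mathscr{G}$ therefore contains an edge between the $i$-th copy of $\underline{v}$ and the $j$-th copy of $\underline{c}$ iff $M_\rho(j,i)=1$, i.e.\ iff $i\equiv j+\rho\pmod{\gamma}$. Rewriting this in $1$-indexed form gives $j = ((i+\rho-1)\bmod\gamma)+1$, which is exactly the first bullet. Inverting the same relation yields $i = ((j-\rho-1)\bmod\gamma)+1$, which is the second bullet. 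Thus the two statements are logically equivalent and both follow from the single identity $j-i \equiv \rho \pmod{\gamma}$.

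There is no substantive obstacle here; the entire content of the lemma is a translation between the algebraic circulant description and the combinatorial copy-and-permute description of the lift, and the proof is a one-line modular arithmetic verification once the indexing convention is pinned down. The only mild subtlety is bookkeeping with $1$- versus $0$-based indexing, which is why I want to anchor the convention against Example~\ref{ex:lifting_illustration} before writing the computation.
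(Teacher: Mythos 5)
Your overall plan---unwinding the lifting conventions and reducing the lemma to a one-line modular-arithmetic identity---is exactly the right level of effort here: the paper's own proof is a single sentence deferring to the copy-and-permute operation of Section~\ref{sec:protograph-ldpc-code}. However, your execution contains a genuine flaw, and it is precisely the kind of flaw your anchoring step is powerless to catch. Under the convention you fix, $M_\rho(i,j)=1$ iff $j\equiv i+\rho \pmod{\gamma}$, and with rows of the block indexing copies of $\underline{c}$ and columns indexing copies of $\underline{v}$, the condition $M_\rho(j,i)=1$ reads $i\equiv j+\rho$, hence $j\equiv i-\rho$; in $1$-indexed form this is $j=((i-\rho-1)\bmod\gamma)+1$, the \emph{negated}-shift version of the first bullet. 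Your rewrite of $i\equiv j+\rho$ as $j=((i+\rho-1)\bmod\gamma)+1$ is therefore an algebraic slip, and the "single identity" you end with, $j-i\equiv\rho$, contradicts the convention you started from (which yields $i-j\equiv\rho$). Either the lemma holds with the transposed convention, $M_\rho(a,b)=1$ iff $a\equiv b+\rho$, or your derivation proves the lemma with $\rho$ replaced by $-\rho$.

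The deeper problem is that anchoring against Example~\ref{ex:lifting_illustration} cannot resolve this, because there $\gamma=2$: the circulant of $x^1$ is symmetric, so $+\rho$ and $-\rho$ shifts produce the same matrix, and both conventions pass your consistency check. (The paper's other worked example also has $\gamma=2$, so no example in the paper can break the tie.) The clean repair is to abandon the matrix route and argue from the copy-and-permute description directly, as the paper does: Section~\ref{sec:protograph-ldpc-code} states that the lifted edge connects $(\underline{v},t)$ to $(\underline{c},\pi_{\underline{e}}(t))$ and that for quasi-cyclic codes $\pi_{\underline{e}}$ advances by $+1$ per step, so an edge labeled $x^\rho$ has $\pi_{\underline{e}}(t)=((t+\rho-1)\bmod\gamma)+1$. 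This gives the first bullet with no transpose ambiguity, since the permutation is defined as a map from variable-copy indices to check-copy indices; the second bullet is then just the inverse map, as you correctly observe.
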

\begin{proof}
    The proof follows from the copy and permute operation described in Section~\ref{sec:LP_codes}.
\end{proof}
\begin{figure*}
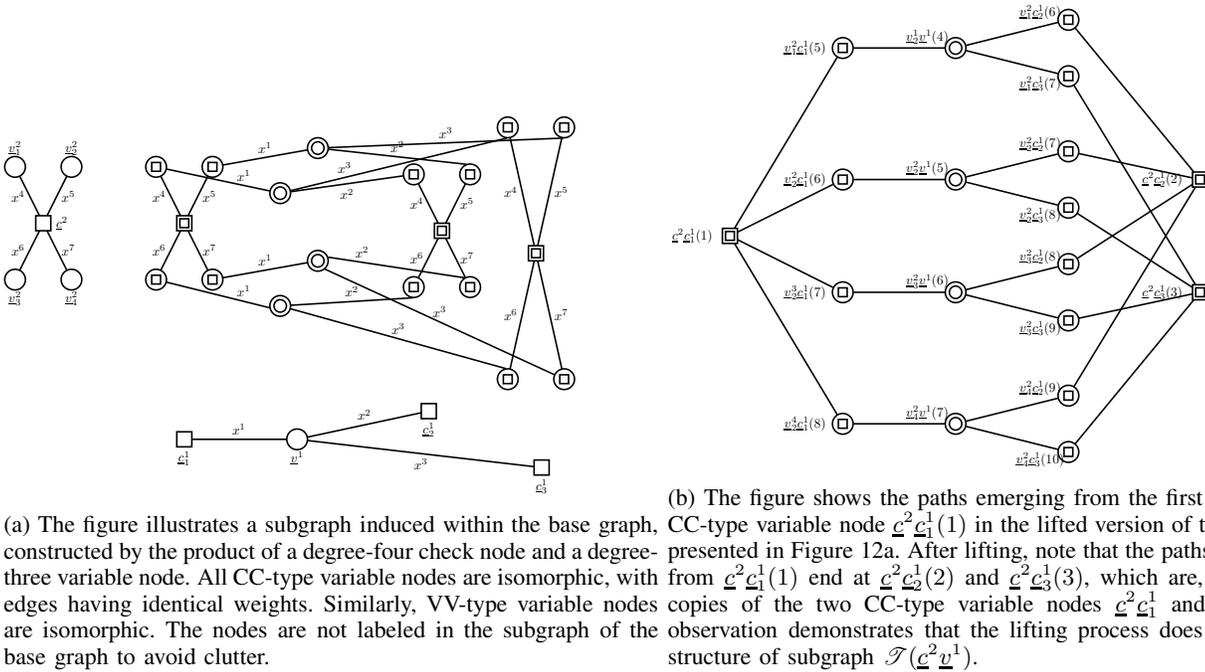

\begin{subfigure}{0.48\textwidth}
     \input{Figures/Journal_figures/TS_LP_codes/LP_TS}
    \subcaption{ The figure illustrates a subgraph induced within the base graph, constructed by the product of a degree-four check node and a degree-three variable node. All CC-type variable nodes are isomorphic, with edges having identical weights. Similarly, VV-type variable nodes are isomorphic. The nodes are not labeled in the subgraph of the base graph to avoid clutter.}
    \label{fig:LP_TS}
\end{subfigure}
\begin{subfigure}{0.48\textwidth}
     \input{Figures/Journal_figures/TS_LP_codes/LP_TS_copy_permute}
    \subcaption{The figure shows the paths emerging from the first copy of the CC-type variable node $\underline{c}^2\underline{c}^1_1(1)$ in the lifted version of the subgraph presented in Figure~\ref{fig:LP_TS}. After lifting, note that the paths originating from $\underline{c}^2\underline{c}^1_1(1)$ end at $\underline{c}^2\underline{c}^1_2(2)$ and $\underline{c}^2\underline{c}^1_3(3)$, which are, in fact, the copies of the two CC-type variable nodes $\underline{c}^2\underline{c}^1_1$ and $\underline{c}^2\underline{c}^1_3$. This observation demonstrates that the lifting process does not modify structure of subgraph $\mathscr{T}(\underline{c}^2\underline{v}^1)$. }
    \label{fig:lifting_LP_Ts}
\end{subfigure}
\caption {This figure illustrates that the lifting process does not alter the structures of the stabilizer-induced TSs.}
\label{fig:LP_TS_pictorial_proof}
\end{figure*}
\begin{lemma}
    \label{lemma:stabilizer_induced_TSs_LP_codes}
    Consider an LP code characterized by two base graphs, denoted $\underline{\mathscr{G}}_{\mathrm{X}}$ and $\underline{\mathscr{G}}_{\mathrm{Z}}$, which are derived from the base graphs of two classical codes associated with the Tanner graphs $\underline{\mathscr{G}}_1$ and $\underline{\mathscr{G}}_2$. 
    Let $\underline{\mathscr{G}}_1$ and $\underline{\mathscr{G}}_{2}$ represent $(d_c^1,d_v^1)$ and $(d_c^2,d_v^2)$ regular graphs, respectively. Furthermore, let the Tanner graphs $\mathscr{G}_{\mathrm{X}}$ and $\mathscr{G}_{\mathrm{X}}$ be obtained by lifting from the base graphs $\underline{\mathscr{G}}_{\mathrm{X}}$ and $\underline{\mathscr{G}}_{\mathrm{Z}}$, with the lifting size specified as $\gamma$. Denote $\mathscr{T}(\underline{c}^2\underline{v}^1)$ as the subgraph induced by the $X-$ check $\underline{c}^2\underline{v}^1$ on the base graph $\underline{\mathscr{G}}_{\mathrm{Z}}$, where $\underline{c}^2$ is a check node in $\mathscr{G}_2$ and $\underline{v}^1$ is a variable node in $\mathscr{G}_2$. Then, Tanner graph $\mathscr{G}_{\mathrm{Z}}$ contains $\gamma$ isomorphic copies of the subgraph $\mathscr{T}(\underline{c}^2\underline{v}^1)$.
\end{lemma}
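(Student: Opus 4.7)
\medskip
\noindent\textbf{Proof proposal.} My plan is to exhibit the $\gamma$ copies explicitly by indexing them with the $\gamma$ copies of the $X$-check $\underline{c}^2\underline{v}^1$ in the lifted $X$-Tanner graph $\mathscr{G}_\mathrm{X}$. For each copy index $t\in[\gamma]$, I will collect the neighbors (in $\mathscr{G}_\mathrm{Z}$) of the variable nodes that the lifted $X$-check $(\underline{c}^2\underline{v}^1,t)$ acts on, and show that the resulting subgraph is isomorphic to $\mathscr{T}(\underline{c}^2\underline{v}^1)$. The $\gamma$ subgraphs obtained this way will be shown to be pairwise vertex-disjoint, giving the required $\gamma$ isomorphic copies.

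First I would fix notation for the circulant shifts, writing $x^{\sigma_{i'}}$ for the label of the edge $\underline{v}^2_{i'}\,\underline{c}^2$ in $\underline{\mathscr{G}}_2$ (for $\underline{v}^2_{i'}\in\mathcal{N}^2_{\underline{c}^2}$) and $x^{\rho_{j'}}$ for the label of $\underline{c}^1_{j'}\,\underline{v}^1$ in $\underline{\mathscr{G}}_1$ (for $\underline{c}^1_{j'}\in\mathcal{N}^1_{\underline{v}^1}$). Using the graphical rule for LP codes reviewed in Section~\ref{sec:LP_codes}, the edge from the $Z$-check $\underline{v}^2_{i'}\underline{c}^1_{j'}$ to the VV-type node $\underline{v}^2_{i'}\underline{v}^1$ carries the conjugate circulant $x^{-\rho_{j'}}$, and the edge to the CC-type node $\underline{c}^2\underline{c}^1_{j'}$ carries $x^{-\sigma_{i'}}$. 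Applying Lemma~\ref{lemma:intermediate_result_copy_and_permute} in both directions, I obtain that in $\mathscr{G}_\mathrm{Z}$ the check copy $(\underline{v}^2_{i'}\underline{c}^1_{j'},\,s)$ is joined to the variable copies $(\underline{v}^2_{i'}\underline{v}^1,\,s+\rho_{j'})$ and $(\underline{c}^2\underline{c}^1_{j'},\,s+\sigma_{i'})$, while the lifted $X$-check $(\underline{c}^2\underline{v}^1,\,t)$ is adjacent in $\mathscr{G}_\mathrm{X}$ to $(\underline{v}^2_{i'}\underline{v}^1,\,t-\sigma_{i'})$ and $(\underline{c}^2\underline{c}^1_{j'},\,t-\rho_{j'})$ (arithmetic mod $\gamma$).

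Next, for each $t\in[\gamma]$ I would define the candidate copy $\mathscr{T}^{(t)}$ as the subgraph of $\mathscr{G}_\mathrm{Z}$ whose variable-node set is
$$\{(\underline{v}^2_{i'}\underline{v}^1,\,t-\sigma_{i'}):\underline{v}^2_{i'}\in\mathcal{N}^2_{\underline{c}^2}\}\cup\{(\underline{c}^2\underline{c}^1_{j'},\,t-\rho_{j'}):\underline{c}^1_{j'}\in\mathcal{N}^1_{\underline{v}^1}\},$$
together with the $Z$-check copies $(\underline{v}^2_{i'}\underline{c}^1_{j'},\,t-\sigma_{i'}-\rho_{j'})$ for $(i',j')\in\mathcal{N}^2_{\underline{c}^2}\times\mathcal{N}^1_{\underline{v}^1}$, and the induced edges. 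Substituting $s=t-\sigma_{i'}-\rho_{j'}$ into the adjacency relations from the previous step yields $s+\rho_{j'}=t-\sigma_{i'}$ and $s+\sigma_{i'}=t-\rho_{j'}$, confirming that each such $Z$-check copy is exactly adjacent to one VV-type and one CC-type node of $\mathscr{T}^{(t)}$ in $\mathscr{G}_\mathrm{Z}$. The map that sends every node of $\mathscr{T}(\underline{c}^2\underline{v}^1)$ to its $t$-shifted copy is therefore a graph isomorphism onto $\mathscr{T}^{(t)}$, since it preserves types, preserves the VV-$Z$check-CC incidence pattern, and accounts for every edge inherited from the base graph.

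Finally, I would verify disjointness: for any fixed base-graph node $\underline{v}^2_{i'}\underline{v}^1$, the copy index $t-\sigma_{i'}\pmod\gamma$ is a bijection of $t\in[\gamma]$, so $\mathscr{T}^{(t)}$ and $\mathscr{T}^{(t')}$ share a VV-type vertex only if $t\equiv t'\pmod\gamma$; the same argument applies to CC-type and $Z$-check vertices. Hence the $\gamma$ subgraphs $\mathscr{T}^{(1)},\ldots,\mathscr{T}^{(\gamma)}$ are pairwise vertex-disjoint isomorphic copies of $\mathscr{T}(\underline{c}^2\underline{v}^1)$ inside $\mathscr{G}_\mathrm{Z}$. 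The only real bookkeeping difficulty, and the place where I expect to have to be careful, is keeping track of signs and the direction (variable-to-check versus check-to-variable) of the shift prescribed by Lemma~\ref{lemma:intermediate_result_copy_and_permute} when the conjugate operator $(\cdot)^*$ intervenes; once that is set up consistently, the isomorphism and disjointness follow directly from modular arithmetic.
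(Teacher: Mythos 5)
Your proof is correct, and it rests on the same key tool as the paper's own proof --- Lemma~\ref{lemma:intermediate_result_copy_and_permute} combined with modular arithmetic on the circulant exponents --- but the bookkeeping is organized along a genuinely different axis. The paper fixes a single copy of a CC-type node, $\underline{c}^2\underline{c}^1_1(1)$, and traces the $d_c^2$ length-four paths through the lifted $Z$-checks and VV-type copies, showing that they all reconverge at one and the same copy of $\underline{c}^2\underline{c}^1_j$ (at index $\nu_j-\nu_1 \pmod{\gamma}$, independent of the intermediate VV-type node); the existence of $\gamma$ isomorphic copies is then read off from this reconvergence. You instead anchor each candidate copy $\mathscr{T}^{(t)}$ to a copy $(\underline{c}^2\underline{v}^1,t)$ of the lifted $X$-check, list all of its vertices explicitly, and verify the telescoping identities $s+\rho_{j'}=t-\sigma_{i'}$ and $s+\sigma_{i'}=t-\rho_{j'}$ for $s=t-\sigma_{i'}-\rho_{j'}$. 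This top-down construction buys you two things the paper leaves implicit: a direct verification that the shift map is a graph isomorphism onto $\mathscr{T}^{(t)}$ (rather than inferring the copy structure from path reconvergence), and an explicit pairwise-disjointness argument showing that the $\gamma$ copies are distinct. You also carry the conjugation $(\cdot)^*$ of the LP construction through the computation (the labels $x^{-\rho_{j'}}$ and $x^{-\sigma_{i'}}$ on the $Z$-check edges), which is actually more faithful to the construction in Section~\ref{sec:LP_codes} than the paper's appendix, where the conjugation is silently dropped; as you anticipated, this only flips signs in the modular arithmetic and does not affect either conclusion.
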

\begin{proof}
  The proof is given in Appendix~\ref{sec:appendix_proof_stabilizer_induced_TSs_LP_codes}.
\end{proof}
The proof is shown graphically in Figure~\ref{fig:LP_TS_pictorial_proof} for a base graph which has degree-three VV-type variable nodes and degree-four CC-type variable nodes.
Lemma~\ref{lemma:stabilizer_induced_TSs_LP_codes} demonstrates that the lifting procedure preserves the structures inherent to the stabilizer-induced TSs. Consequently, the findings formulated in Theorem~\ref{theorem:stabilizer_induced_TS} and Theorem~\ref{theorem:TS_aware_bf_decoder} regarding HP codes apply to LP codes.

\section{Classical TSs of HP and LP codes}
\label{sec:classical_TS}

A graph derived from the product of two graphs contains multiple isomorphic replicas of each of the constituent graphs involved in the product. 
This implies that in the resultant HP or LP codes, each of the TSs inherent to the constituent classical codes is replicated multiple times. 
In the following lemmas, we formalize these results for both HP and LP codes.
Rather than providing formal proofs of these lemmas, we represent them visually in Figure~\ref{fig:classical_TS}, as they emerge inherently from the construction methodology of HP and LP codes.

\begin{lemma}
    \label{lemma:classical_TS_HP_codes}
    Let subgraphs $\mathscr{T}_1$ and $\mathscr{T}_2$, respectively, represent two TSs in Tanner graphs ${\mathscr{G}}_1$ and ${\mathscr{G}}_2$ with $n_1$ and $n_2$ variable nodes, corresponding to two classical LDPC codes.
    Let $\mathscr{G}_{\mathrm{X}}$ and $\mathscr{G}_{\mathrm{Z}}$ represent the Tanner graphs of an HP code, formed by the graph product of Tanner graphs $\mathscr{G}_1$ and $\mathscr{G}_2$.
    Then, Tanner graph $\mathscr{G}_{\mathrm{X}}$ has $n_2$ replicas of TS $\mathscr{T}_1$ and Tanner graph $\mathscr{G}_{\mathrm{Z}}$ has $n_1$ replicas of TS $\mathscr{T}_2$.
\end{lemma}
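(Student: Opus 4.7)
The plan is to exhibit an explicit family of vertex-disjoint isomorphic copies of each classical trapping set inside the quantum Tanner graphs, indexed by the variable nodes of the complementary factor. Working first with $\mathscr{T}_2 \subseteq \mathscr{G}_2$ and $\mathscr{G}_{\mathrm{Z}}$, I would fix any $v_i^1 \in \mathcal{V}_1$ and define the $v_i^1$-fiber
\[
    \mathscr{F}_{v_i^1}(\mathscr{T}_2) \;:=\; \bigl( V(\mathscr{T}_2)_{\mathrm{var}} \times \{v_i^1\} \bigr) \;\cup\; \bigl( V(\mathscr{T}_2)_{\mathrm{chk}} \times \{v_i^1\} \bigr),
\]
where the first block consists of VV-type variable nodes of $\mathscr{G}_{\mathrm{Z}}$ and the second of $X$-type check nodes. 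Varying $i$ over $[n_1]$ yields fibers that are pairwise vertex-disjoint because they differ in the second coordinate.

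The first step is structural: show that $\mathscr{F}_{v_i^1}(\mathscr{T}_2)$ viewed as a bipartite subgraph of $\mathscr{G}_{\mathrm{Z}}$ is isomorphic to $\mathscr{T}_2$ under the map $v^2 \mapsto v^2 v_i^1$ and $c_j^2 \mapsto c_j^2 v_i^1$. This drops out of \eqref{eq:hp_cx_neighbor}: the VV-type neighbors of the fiber check $c_j^2 v_i^1$ are precisely $\mathcal{N}^2_{c_j^2} \times \{v_i^1\}$, whose intersection with the fiber is $\bigl(\mathcal{N}^2_{c_j^2} \cap V(\mathscr{T}_2)_{\mathrm{var}}\bigr) \times \{v_i^1\}$, the image of the neighborhood of $c_j^2$ inside $\mathscr{T}_2$. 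The check also carries extra CC-type neighbors $\{c_j^2\} \times \mathcal{N}^1_{v_i^1}$, which lie outside the fiber and must be dealt with in the next step.

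The second and most delicate step is to preserve the trapping condition of Definition~\ref{Def:TrappingSetSyndromeModified}. Given an error pattern $\mathbf{e}_2$ on $V(\mathscr{T}_2)_{\mathrm{var}}$ that causes decoder failure in $\mathscr{G}_2$, I lift it to the pattern $\mathbf{e}$ on $\mathscr{G}_{\mathrm{Z}}$ supported only on $V(\mathscr{T}_2)_{\mathrm{var}} \times \{v_i^1\}$, with zero elsewhere. The main obstacle is that each fiber check $c_j^2 v_i^1$ also has CC-type neighbors, and each fiber VV-node $v^2 v_i^1$ is adjacent to $X$-checks $c^2 v_i^1$ with $c^2 \notin V(\mathscr{T}_2)_{\mathrm{chk}}$ that have no analogue in $\mathscr{T}_2$. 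I would resolve this by observing that (i) the CC-type neighbors of fiber checks carry no error by construction and so contribute nothing to the syndrome; (ii) a VV-node $v^2 v_i^1$ is adjacent in $\mathscr{G}_{\mathrm{Z}}$ only to $X$-checks whose second coordinate is $v_i^1$, never $v_{i'}^1$ for $i' \neq i$, so its local syndrome count is determined entirely by the fiber; and (iii) the additional checks $c^2 v_i^1$ with $c^2 \notin V(\mathscr{T}_2)_{\mathrm{chk}}$ have counterpart checks $c^2$ in $\mathscr{G}_2$ that, under $\mathbf{e}_2$, are satisfied (otherwise they would lie in $\mathscr{T}_2$). Combining (i)--(iii), the iterative decoder's trajectory on the fiber mirrors its trajectory on $\mathscr{T}_2$, so the non-convergence and unsatisfied-check behavior is inherited.

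Finally, summing over $v_i^1 \in \mathcal{V}_1$ yields $n_1$ vertex-disjoint copies of $\mathscr{T}_2$ in $\mathscr{G}_{\mathrm{Z}}$. A symmetric argument, replacing \eqref{eq:hp_cx_neighbor} by \eqref{eq:hp_cz_neighbor} and fixing $v_j^2 \in \mathcal{V}_2$ to define the fiber $\{v_j^2\} \times V(\mathscr{T}_1)$ in $\mathscr{G}_{\mathrm{X}}$, produces $n_2$ disjoint copies of $\mathscr{T}_1$, completing the proof.
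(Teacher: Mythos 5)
The paper never actually proves this lemma: it states explicitly that, rather than giving formal proofs, it represents Lemmas~\ref{lemma:classical_TS_HP_codes} and~\ref{lemma:classical_TS_LP_codes} visually in Figure~\ref{fig:classical_TS}. So your proposal is a formalization of an argument the paper leaves implicit, not a parallel to an existing one. Your structural half is correct and is exactly the content hiding in the identity factors of \eqref{eq:hyper_graph_product}: the fiber map $v^2 \mapsto v^2 v_i^1$, $c^2 \mapsto c^2 v_i^1$ is an isomorphism onto an induced subgraph, the fibers are pairwise vertex-disjoint, and the counts $n_1$ and $n_2$ come out right.

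The genuine gap is the last sentence of your second step. Observations (i)--(iii) control only the initial syndrome: (i) speaks about the \emph{true} error, not the decoder's running estimate $\hat{\mathbf{x}}$, and nothing in (i)--(iii) forbids the decoder from flipping a CC-type neighbor of a fiber check at some later iteration; once that happens, the mismatch values on the fiber checks no longer coincide with the check values in $\mathscr{G}_2$ and the claim that "the trajectory mirrors" collapses. What is missing is a confinement induction of precisely the kind the paper runs for stabilizer-induced TSs (Part~\ref{outside_connectivity_constraint} of Lemma~\ref{lemma:structures_of_graph_induced_by_stabilizers} and the end of the proof of Lemma~\ref{lemma:TSs_induced_by_stabilizers}): every variable node outside the fiber is adjacent to at most one fiber check --- for a CC-type node $c^2c^1$ the only candidate is $c^2v_i^1$ --- and all of its remaining check neighbors lie in fibers whose nodes carry zero error and, by the induction hypothesis, zero estimate, hence remain satisfied; so every outside node sees at most one unsatisfied check and, under the rule of Algorithm~\ref{alg:bit_flipping_decoder} (flip only when strictly more than half of its at-least-two neighboring checks are unsatisfied), is never flipped. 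That closes the induction and also makes explicit that the conclusion is decoder-specific: the fiber is a TS in the sense of Definition~\ref{Def:TrappingSetSyndromeModified} for the bit-flipping decoder, whereas your wording claims it for "the iterative decoder" generically. A secondary point: you call the fiber checks "$X$-type check nodes" while placing them inside $\mathscr{G}_{\mathrm{Z}}$; under the paper's graphical labeling (checks of $\mathscr{G}_{\mathrm{Z}}$ are $v^2c^1$ with neighborhoods \eqref{eq:hp_cz_neighbor}) those objects belong to $\mathscr{G}_{\mathrm{X}}$, while under the algebraic convention \eqref{eq:hyper_graph_product} --- the one whose indices the lemma statement actually follows --- they are legitimately the $Z$-checks $(v_i^1,c^2)$. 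The paper is internally inconsistent here, but your write-up should commit to one convention instead of pairing the formula \eqref{eq:hp_cx_neighbor} with the name $\mathscr{G}_{\mathrm{Z}}$.
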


\begin{lemma}
    \label{lemma:classical_TS_LP_codes}
    Consider subgraphs $\mathscr{T}_1$ and $\mathscr{T}_2$, respectively, representing two TSs within codes corresponding to base graphs $\underline{\mathscr{G}}_1$ and $\underline{\mathscr{G}}_2$ with $n_1$ and $n_2$ variable nodes, corresponding to two protograph LDPC codes.
    The Tanner graphs $\mathscr{G}_1$ and $\mathscr{G}_2$ represent the graphs derived from the base graphs $\underline{\mathscr{G}}_1$ and $\underline{\mathscr{G}}_2$, respectively, with lifting parameter $\gamma$.
    Let $\mathscr{G}_{\mathrm{X}}$ and $\mathscr{G}_{\mathrm{Z}}$ represent the Tanner graphs of an LP code, formed by lifting the graph product of base graphs $\underline{\mathscr{G}}_1$ and $\underline{\mathscr{G}}_2$.
    Then Tanner graph $\mathscr{G}_{\mathrm{X}}$ has $\gamma n_2$ replicas of TS $\mathscr{T}_1$ and Tanner graph $\mathscr{G}_{\mathrm{Z}}$ has $\gamma n_1$ replicas of TS $\mathscr{T}_2$.
\end{lemma}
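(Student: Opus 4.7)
The plan is to reduce Lemma~\ref{lemma:classical_TS_LP_codes} to the product decomposition of Lemma~\ref{lemma:classical_TS_HP_codes} combined with the copy-and-permute description supplied by Lemma~\ref{lemma:intermediate_result_copy_and_permute}. I would first observe that the LP base matrix $\mathbf{B}_{\mathrm{Z}}$ is assembled by exactly the same product rule as the HP construction, so the base graph $\underline{\mathscr{G}}_{\mathrm{Z}}$ admits the same structural decomposition identified in Lemma~\ref{lemma:classical_TS_HP_codes}: for every variable node $\underline{v}_a^1 \in \underline{\mathcal{V}}_1$, the VV-type nodes $\underline{v}_a^1 \times \underline{\mathcal{V}}_2$, the $Z$-type checks $\underline{v}_a^1 \times \underline{\mathcal{C}}_2$, and the edges between them (which come from the $\mathbf{I}_{n_1}\otimes\mathbf{W}_2$ block of $\mathbf{W}_{\mathrm{Z}}$) form a subgraph of $\underline{\mathscr{G}}_{\mathrm{Z}}$ isomorphic to $\underline{\mathscr{G}}_2$ that retains every circulant label of $\mathbf{W}_2$. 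This produces $n_1$ vertex-disjoint, edge-labeled copies of $\underline{\mathscr{G}}_2$ inside $\underline{\mathscr{G}}_{\mathrm{Z}}$.

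Next I would lift the decomposition to the binary Tanner graph. Because the circulants decorating each of the $n_1$ copies of $\underline{\mathscr{G}}_2$ coincide with those of $\underline{\mathscr{G}}_2$ itself, and because the identity factor $\mathbf{I}_{n_1}$ contributes only the trivial permutation $x^0$ so that the lift does not mix the $n_1$ copies with one another, Lemma~\ref{lemma:intermediate_result_copy_and_permute} yields, for each such copy, a lifted subgraph of $\mathscr{G}_{\mathrm{Z}}$ that is isomorphic to $\mathscr{G}_2$. Hence $\mathscr{G}_{\mathrm{Z}}$ contains $n_1$ vertex-disjoint copies of the entire Tanner graph $\mathscr{G}_2$, mirroring step-for-step the argument sketched in Figure~\ref{fig:LP_TS_pictorial_proof} for the stabilizer-induced case.

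The factor of $\gamma$ then comes from the quasi-cyclic automorphism group of $\mathscr{G}_2$. Within a single copy of $\mathscr{G}_2$, the cyclic shift sending every lift index $k$ to $k + t \pmod{\gamma}$ is a graph automorphism; applying these $\gamma$ shifts to the distinguished instance of $\mathscr{T}_2$ yields $\gamma$ mutually isomorphic replicas of $\mathscr{T}_2$ inside each copy of $\mathscr{G}_2$. Combined with the $n_1$ disjoint copies of $\mathscr{G}_2$ that sit inside $\mathscr{G}_{\mathrm{Z}}$, this delivers the claimed $\gamma n_1$ replicas of $\mathscr{T}_2$. The statement about $\gamma n_2$ replicas of $\mathscr{T}_1$ in $\mathscr{G}_{\mathrm{X}}$ follows by repeating the argument for the $\mathbf{I}_{m_{B_1}}\otimes\mathbf{W}_2^{*}$ block of $\mathbf{W}_{\mathrm{X}}$, or equivalently by swapping the roles of $\underline{\mathscr{G}}_1$ and $\underline{\mathscr{G}}_2$.

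The main obstacle will be pinning down the cyclic-shift count precisely. The $\gamma$ shifts of $\mathscr{T}_2$ are not guaranteed to be pairwise distinct whenever $\mathscr{T}_2$ is stabilized by a nontrivial subgroup of the cyclic action, in which case $\gamma n_1$ is only an upper bound on the number of distinct replicas. I would address this exactly as in the proof of Lemma~\ref{lemma:stabilizer_induced_TSs_LP_codes}: enumerate the lifted copies as \emph{labeled} embeddings, using the explicit edge-label tracking provided by Lemma~\ref{lemma:intermediate_result_copy_and_permute}, so that the count $\gamma n_1$ corresponds to labeled occurrences rather than to subgraphs up to isomorphism, and the bookkeeping remains unambiguous regardless of the symmetries of $\mathscr{T}_2$.
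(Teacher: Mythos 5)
The paper offers no formal proof of this lemma to compare against: it states explicitly that Lemmas~\ref{lemma:classical_TS_HP_codes} and \ref{lemma:classical_TS_LP_codes} are given only a visual justification in Figure~\ref{fig:classical_TS}. Your proposal is therefore necessarily a different, more rigorous route, and for the $\mathscr{G}_{\mathrm{Z}}$ half it is sound. The block $\mathbf{I}_{n_{B_1}}\otimes\mathbf{W}_2$ does decompose $\underline{\mathscr{G}}_{\mathrm{Z}}$ into $n_1$ vertex-disjoint, circulant-labeled copies of $\underline{\mathscr{G}}_2$; since the identity entries carry the trivial circulant $x^0$, the lift stays block-diagonal and $\mathscr{G}_{\mathrm{Z}}$ contains $n_1$ vertex-disjoint copies of the lifted $\mathscr{G}_2$; and the quasi-cyclic shift, which is an automorphism of all of $\mathscr{G}_{\mathrm{Z}}$ and preserves each copy setwise, supplies $\gamma$ replicas of $\mathscr{T}_2$ inside each copy. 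You correctly locate where the factor $\gamma$ comes from: the LP graph has only $n_1$ disjoint copies of the full $\mathscr{G}_2$ (unlike the hypergraph product of the two lifted graphs, which would have $\gamma n_1$), so the extra factor must come from shift orbits inside each copy --- a distinction the paper's informal treatment glosses over. Your stabilizer caveat, resolved by counting labeled embeddings, is the honest way to make the count $\gamma n_1$ unconditional.

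The one concrete error is in the $\mathscr{G}_{\mathrm{X}}$ half. The block $\mathbf{I}_{m_{B_1}}\otimes\mathbf{W}_2^{*}$ of $\mathbf{W}_{\mathrm{X}}$ connects the $X$-type checks to the \emph{CC-type} variable nodes and decomposes into $m_{B_1}$ copies of the graph of $\mathbf{W}_2^{*}$, i.e., code~2's Tanner graph with variable and check roles exchanged and circulants conjugated; it contains no copy of $\mathscr{G}_1$, and neither the graph type nor the multiplicity matches the claim about $\mathscr{T}_1$, so ``repeating the argument'' on that block would fail. The block you need is $\mathbf{W}_1\otimes\mathbf{I}_{n_{B_2}}$, whose tensor-with-identity structure (identity now on the right) yields $n_{B_2}=n_2$ vertex-disjoint labeled copies of $\underline{\mathscr{G}}_1$, one per variable node of $\underline{\mathscr{G}}_2$; your lifting and shift-orbit arguments then apply verbatim and give $\gamma n_2$ replicas of $\mathscr{T}_1$. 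Relatedly, ``swapping the roles of $\underline{\mathscr{G}}_1$ and $\underline{\mathscr{G}}_2$'' is not literally an equivalence here, because $\mathbf{W}_{\mathrm{X}}$ and $\mathbf{W}_{\mathrm{Z}}$ place the identity factors on opposite sides of the tensor product and conjugate different factors; the argument does transfer, but only after the correct block is identified.
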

\begin{figure}[h]
\centering
    \input{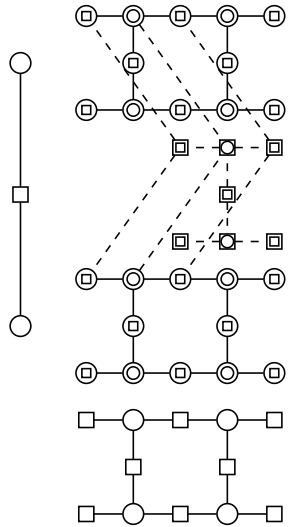}
    \caption{ The figure illustrates that a classical (4,4)-TS in the constituent codes of an HP code replicates in the HP code due to the graph product. The dotted lines indicate the edges that are not part of the TS. The Tanner graph is partially shown to focus on classical TSs. In particular, the TS in Code~$1$ appears twice, as it is multiplied by a length-two path from Code~$2$.}
    \label{fig:classical_TS}
    \end{figure}

\section{Decoder Diversity Approach for Avoiding TS}
\label{sec:min_sum_decoder}
In Section~\ref{sec:QTS_hypergraph}, we introduced an improved bit-flipping decoder to address QTSs. 
To minimize logical error rates or prevent error floors, it is essential to use decoders that can also handle TSs from classical codes alongside QTSs.
In this section, we take the diversity approach, introduced in  \cite{FAID_diversity}, to design a set of modified min-sum decoders that can circumvent both QTSs and TSs inherited from the classical LDPC codes when run in parallel.
The min-sum decoder is considered since it is capable of correcting more erroneous qubits compared to the bit-flipping decoder, even though the bit-flipping decoder proposed in Section~\ref{sec:QTS_hypergraph} can overcome QTSs.
We start by describing the message updating rules for the min-sum decoder, followed by the discussion about modification of update rules to overcome both QTSs and classical TSs.
\subsection{Min-sum decoder}
Consider the min-sum decoder over a depolarizing channel with depolarizing probability $p$ run on Tanner graph $G_{\mathrm{Z}}(\mathcal{V}\cup \mathcal{C},\mathcal{E})$, which is constructed from two Tanner graphs, say $\mathscr{G}_1$ and $\mathscr{G}_2$, corresponding to two classical LDPC codes.
Given an error vector $\mathbf{x}$, the syndrome vector $\boldsymbol{\sigma}=(\sigma_1,\sigma_2,...,\sigma_{|\mathcal{C}|})$ can be obtained by  
$ \boldsymbol{\sigma} = \mathbf{x}\mathbf{H}_{\mathrm{Z}}^{\mathsf{T}}.$
We adapt the message passing rules to enable distinct rules for each edge in the Tanner graph $G_{\mathrm{Z}}$, utilizing the unique channel log-likelihood ratios (\emph{llrs}) for the different edges.
Let $m_{v \rightarrow c}^t(e)$ be the message from variable node
to check node along edge $e$ for $e \in \mathcal{E}$ in the $t$-th iteration. Similarly, let
$m_{c \rightarrow v}^t(e)$ be the message from check node to
variable node in the $t$-th iteration. The
message-passing recursion is given by
\begin{align}
\label{eq:mp_cv}m^{t+1}_{c \rightarrow v}(e)&=\sigma_{c_e} \prod_{e'\in E_c(e)}m_{v \rightarrow c}^{t}(e') \min_{e'\in E_c(e)}|m_{v \rightarrow c}(e')|,\\
\label{eq:mp_vc}m_{v \rightarrow c}^{t+1}(e)&=b(e) + w \left(\sum_{e'\in E_v(e)}m_{c \rightarrow v}^{t+1}(j)\right),
\end{align}
for $t\ge0$, where
$E_c(e)=\{e':c(e)=c(e'),e\ne e'\}$ and $E_v(e)=\{e':v(e)=v(e'),e\ne e'\}$ are the set of neighboring edges incident to the same check node and variable node, respectively, as
the edge $e,$ $b(e)$, for $e \in \mathcal{E}$, is the bias corresponding to edge $e$, and $w$ is the normalization constant. 

\subsection{Scheduled min-sum decoder to avoid stabilizer-induced TSs}
\label{sec:modified_min_sum_to_avoid_QTS}
As discussed at the beginning of this section, in this approach, we design a set of decoders, denoted by $\mathcal{D}$, to minimize the probability of logical error rate when the designed decoders are run in parallel.
We develop a variant of the min-sum decoders by introducing changes that equip the new decoder to sidestep stabilizer-induced TSs while preserving the strong error correction capability of the original min-sum decoder.
Recall that in Section~\ref{sec:QTS_hypergraph}, the bit-flipping decoder avoids stabilizer-induced TSs by not updating the error estimate on the VV-type and CC-type variable nodes simultaneously.
The modified min-sum decoder, denoted by $D_1$, does not update the messages from the variable nodes to the check nodes along the edges connected to the VV-type and CC-type nodes within the same iteration. 
In other words, decoder $D_1$ begins by updating $m_{v \rightarrow c}(e)$ if $e$ is linked to a VV-type variable node. In the following iteration, it updates $m_{v \rightarrow c}(e)$ if $e$ is connected to a CC-type variable node. These two steps are alternated in successive iterations until the decoder either converges or the predetermined maximum number of iterations is reached.
The bias $b(e)$ is set to $\log \frac{1-p}{p}$ for all edges in set $\mathcal{E}$.
\subsection{Min-sum decoders to avoid classical TS}
\label{sec:dec_to_avoid_classical_TS}
The other decoders in set $\mathcal{D}$ are designed to prevent TS that are inherited from classical LDPC codes. Like decoder $D_1$, these decoders in $\mathcal{D}$ first update the outbound messages from VV-type variable nodes in the first iteration, followed by updating the messages from CC-type variable nodes in the next iteration. 
In contrast to decoder $D_1$, for decoders in $\mathcal{D}\setminus D_1$, the biases associated with the edges are carefully chosen to avoid classical TSs.
To see how these biases are chosen, consider a TS, denoted by $\mathscr{T}_2$, in Tanner graph $\mathscr{G}_2(\mathcal{V}_2\cup \mathcal{C}_2,\mathcal{E}_2)$ of a classical code.
Let $\mathcal{E}(\mathscr{T}_2) \subset \mathcal{E}_2$ denote the set of edges in subgraph $\mathscr{T}_2$ of Tanner graph $\mathscr{G}_2$.
In \cite{henery_neural}, it is shown that the TSs of a classical code can be avoided by carefully choosing the biases.
Assume that a min-sum decoder, say $D$, avoids the trapping set $\mathscr{T}_2$ when run on Tanner graph $\mathscr{G}_2$.
In decoder $D$, let $b(e)$, for $e \in \mathcal{E}_2$ denote the bias corresponding to edge $e$.
Recall from Section~\ref{sec:classical_TS} that Tanner graph $\mathscr{G}_\mathrm{Z}$ has $n_1$ isomorphic copies of TS $\mathscr{T}^2$, where $n_1$ is the number of variable nodes in $\mathscr{G}_1$.
Next, given decoder $D$, we present an approach to design a decoder that runs on Tanner graph $\mathscr{G}_\mathrm{Z}$ and avoids all $n_1$ isomorphic copies of TS $\mathscr{T}_2$ in it.
For this purpose, divide the set of edges, denoted by $\mathcal{E}$, of Tanner graph $\mathscr{G}_\mathrm{Z}$ into two groups. 
The first group contains the isomorphic copies of $\mathcal{E}_2$, and is given by $\cup_{i=1}^{n_1}\mathcal{E}_2(i)$, where $\mathcal{E}_2(i)$ denotes the set of edges associated with the $i$-th copy of $\mathscr{G}_2$ in $\mathscr{G}_{\mathrm{Z}}$. 
The second group contains the rest of the edges and is given by $\mathcal{E}\setminus \left(\cup_{i=1}^{n_2}\mathcal{E}_2(i)\right)$.
For each $e\in \mathcal{E}_2$, let $e(i)$ represent its isomorphic counterpart in $\mathcal{E}_2(i)$. Select $b(e(i))=b(e)$ for each $e \in \mathcal{E}_2$ to ensure that the isomorphic copies of edge $e$ in Tanner graph $\mathscr{G}_{\mathrm{Z}}$ maintain the same bias as edge $e$ in Tanner graph $\mathscr{G}_2$. 
The bias corresponding to the edges in the second group is set to $\log \frac{1-p}{p}$.
Given any TS of the constituent classical LDPC code and a decoder to avoid it, the above procedure can be repeated to design a decoder that avoids all the isomorphic copies of the TS in the corresponding HP (or LP) code.
In this work, we design a separate decoder for each of the TSs in the constituent classical LDPC code. 
These decoders run in parallel until one of them converges. 
The error estimation is then derived from the first decoder to converge.

\section{Numerical results}
\label{sec:simulation_results}
In this part, we analyze the logical error rate of our proposed decoder and compare it to the normalized min-sum decoder across various LP codes. 
To evaluate the performance of the decoder, we examine the $[[1054,140,20]]$ LP code derived from the classical $(155,64,20)$ Tanner code. 
The classical Tanner code includes a number of $(5,3)$ trapping sets (TS), which are carried over to the LP codes. 
In addition to these inherited classical TSs, the LP code also possesses stabilizer-induced TSs. 
To avoid errors due to stabilizer-induced TS, we employ a modified minimum sum decoder as presented in Section~\ref{sec:modified_min_sum_to_avoid_QTS}, while to counteract the classical TS, we derive decoders from the Tanner code decoder, as suggested in Section~\ref{sec:dec_to_avoid_classical_TS}.
Together, these decoders enhance the logical error rate. 
Figure~\ref{fig:simulation_results} shows that the logical error rate of the proposed decoder becomes smaller than that of the normalized minimum sum decoder as the depolarizing rate decreases and becomes substantially smaller in the error floor region.
Five decoders are used to achieve this superior logical error rate; one of these overcomes stabilizer-induced TSs, whereas the other four overcome classical TSs.
Each of these decoders is run for twenty iterations. 
For a fair comparison, the normalized min-sum decoder is run for hundred iterations.
Note that all decoders can operate in parallel; as a result, the proposed approach does not increase the decoding time due to the use of multiple decoders. 
If multiple decoders converge, we select the error estimate with the smallest Hamming weight.

Figure~\ref{fig:simulation_results_LP_620} displays the logical rate of an LP code $[[600,40,20]]$, comparing the results from the normalized minimum sum decoder and the proposed decoder detailed in Section~\ref{sec:modified_min_sum_to_avoid_QTS}. 
Both decoders are run for twenty-five iterations. Notably, the logical error rate decreases by an order of magnitude when the new decoder is used, as opposed to the normalized min-sum decoder. This scenario differs from previous ones, since no specialized decoder is used to circumvent classical TSs since the underlying classical codes do not have TSs induced by low-weight error patterns (typically of weight less than five). However, we highlight that employing decoders capable of bypassing classical TSs can further diminish the logical error rate.

\begin{figure}
    \centering
    \input{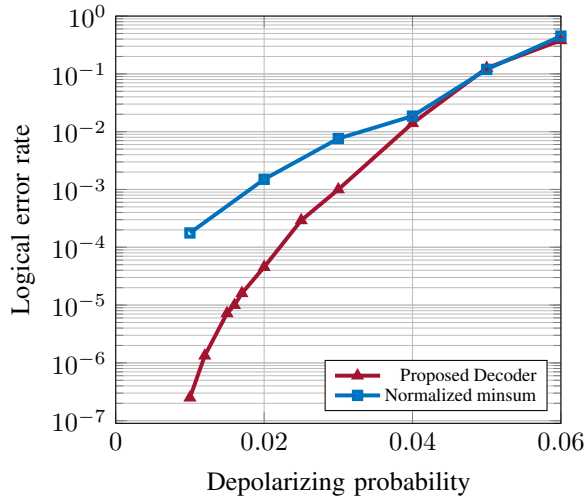}
    \caption{This figure plots the performance of the proposed decoder and normalized min-sum decoder for an $[[1054,140,20]]$ LP code. In the sequential approach, a decoder is used when the previous decoders fail. The proposed decoder outperforms normalized min-sum in the error floor region.}
    \label{fig:simulation_results}
\end{figure}

\begin{figure}
    \centering
    \input{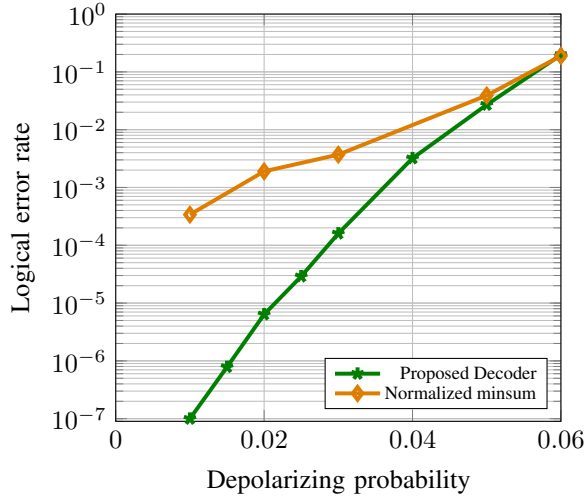}
    \caption{The figure shows the performance of both the proposed decoder and the normalized min-sum decoder for an $[[600,40,20]]$ LP code. In this instance, only one decoder is used to prevent any stabilize-induced TSs. }
    \label{fig:simulation_results_LP_620}
\end{figure}
\section{Conclusions}
\label{sec:conclusios}
We have performed a comprehensive characterization of classical and stabilizer-induced TSs associated with both HP and LP codes. Building upon this foundation, we propose a methodology for deriving decoders for HP and LP codes from the decoders of their constituent classical lLDPC codes. Within the scope of this paper, we consider the syndrome measurement circuit to be perfect. A potential avenue for future research is to investigate the extension of this methodology to scenarios involving circuit-level noise, where the syndrome measurement circuit is assumed to be noisy.
\section*{Acknowledgment}
B. Vasi\'{c} acknowledges the support of the NSF under grants CIF-2420424, CCF-1855879, CCF-2100013, CIF-2106189, CCSS-2027844,  CCSS-2052751, and a generous gift from our friends and modern Maecenases Dora and Barry Bursey. This work was also funded in part by Jet Propulsion Laboratory, California Institute of Technology, under a contract with the National Aeronautics and Space Administration and funded through JPL’s Strategic University Research Partnerships (SURP) program. B. Vasi\'{c} has disclosed an outside interest in Codelucida to the University of Arizona. Conflicts of interest resulting from this interest are being managed by The University of Arizona in accordance with its policies. 


\begin{appendices}

\section{Proof of Lemma~\ref{lemma:TSs_induced_by_stabilizers}}
\label{sec:appendix_proof_of_TSs_induced_by_stabilizers}
For all the lemmas in this section, we consider an HP code that is characterized by two Tanner graphs, $\mathscr{G}_{\mathrm{X}}$ and $\mathscr{G}_{\mathrm{Z}}$, derived from two classical codes, which themselves are associated with Tanner graphs $\mathscr{G}_1$ and $\mathscr{G}_2$. Assume that $\mathscr{G}_1$ and $\mathscr{G}_{2}$ be $(d_c^1,d_v^1)$ and  $(d_c^2,d_v^2)$  regular graphs, respectively.
\begin{lemma}
    \label{lemma:structures_of_graph_induced_by_stabilizers}
    Denote by $\mathscr{T}(c^2_jv^1_{i})$ the subgraph that is induced by $X-$check $c^2_jv^1_{i}$ on Tanner graph $\mathscr{G}_{\mathrm{Z}}$.  If neither $\mathscr{G}_1$ nor $\mathscr{G}_2$ contains cycles of length four, then the following statements about the induced subgraph hold.
    \begin{enumerate}
        \item \label{number_of_VNs_from_each_type} If the degree of the VV-type (or CC-type) nodes is $d_v^1$ (or $d_c^2$), then there are $d_v^1$ (or $d_c^2$) CC-type (or VV-type) variable nodes in $\mathscr{T}(c_j^2v_i^1).$ 
        \item \label{non_intersecting_sypport_vns}In $\mathscr{T}(c^2_jv^1_{i})$, there does not exist a pair of CC-type or VV-type variable nodes that shares a common check node as neighbor, i.e., for $v_{k}^2v_{i}^1, v_{k'}^2v_{i}^1 \in \mathcal{N}^{\mathrm{X}}_{c^2_jv^1_{i}}$ and  $c_j^2c_{k}^1, c_j^2c_{k'}^1 \in \mathcal{N}^{\mathrm{X}}_{c^2_jv^1_{i}},$ $\left(\mathcal{N}^{\mathrm{Z}}_{v_k^2v_{i}^1} \cap \mathcal{N}^{\mathrm{Z}}_{v_{k'}^2v_{i}^1} \right)= \emptyset$ and $\left(\mathcal{N}^{\mathrm{Z}}_{c_j^2c_{k}^1} \cap \mathcal{N}^{\mathrm{Z}}_{c_j^2c_{k'}^1} \right)= \emptyset.$
        \item \label{CN_connectivity}Each check node in $\mathscr{T}(c^2_jv^1_{i})$ is exactly connected to one VV-type node and one CC-type variable node.
        \item \label{outside_connectivity_constraint} There does not exist a variable node in $\mathscr{G}_\mathrm{Z}\setminus \mathscr{T}(c_j^2v_i^1)$ that is connected to two check nodes in $\mathscr{T}(c_j^2v_i^1)$. 
    \end{enumerate}
\end{lemma}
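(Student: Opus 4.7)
The plan is to tackle each of the four claims by working directly from the HP neighborhood formulas \eqref{eq:hp_cx_neighbor} and \eqref{eq:hp_cz_neighbor}, keeping the four-cycle-free hypothesis in reserve: I expect it to be invoked only in the last part. Throughout, I will lean heavily on the pair-label bookkeeping of product nodes, since every ``coincidence'' in $\mathscr{T}(c_j^2v_i^1)$ forces an equality of coordinates, which translates directly into a combinatorial statement about $\mathscr{G}_1$ and $\mathscr{G}_2$.

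For part \ref{number_of_VNs_from_each_type}, I would simply unfold $\mathcal{N}^{\mathrm{X}}_{c_j^2v_i^1}=(\mathcal{N}^2_{c_j^2}\times v_i^1)\cup(c_j^2\times\mathcal{N}^1_{v_i^1})$, note that the first set is exactly the VV-type nodes in $\mathscr{T}(c_j^2v_i^1)$ and has cardinality $|\mathcal{N}^2_{c_j^2}|=d_c^2$, while the second set is the CC-type nodes and has cardinality $|\mathcal{N}^1_{v_i^1}|=d_v^1$. The matching between these counts and the claimed degrees follows by re-applying \eqref{eq:hp_cz_neighbor} to a VV-type node $v_k^2v_i^1$: its Z-check neighbors are exactly $\{v_k^2c_l^1:c_l^1\in\mathcal{N}^1_{v_i^1}\}$, so its degree equals $d_v^1$; the CC-case is symmetric.

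For part \ref{non_intersecting_sypport_vns}, I would compute $\mathcal{N}^{\mathrm{Z}}_{v_k^2v_i^1}=\{v_k^2c_l^1:c_l^1\in\mathcal{N}^1_{v_i^1}\}$ and $\mathcal{N}^{\mathrm{Z}}_{v_{k'}^2v_i^1}=\{v_{k'}^2c_l^1:c_l^1\in\mathcal{N}^1_{v_i^1}\}$. Since every check label in the first set has first coordinate $v_k^2$ while every label in the second has first coordinate $v_{k'}^2\neq v_k^2$, the intersection is empty. The CC-case is identical after swapping the roles of the two coordinates. For part \ref{CN_connectivity}, I would first observe that every check node of $\mathscr{T}(c_j^2v_i^1)$ must take the form $v_k^2c_l^1$ with $v_k^2\in\mathcal{N}^2_{c_j^2}$ and $c_l^1\in\mathcal{N}^1_{v_i^1}$. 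Intersecting $\mathcal{N}^{\mathrm{Z}}_{v_k^2c_l^1}$ with $\mathcal{N}^{\mathrm{X}}_{c_j^2v_i^1}$ and splitting according to the VV/CC dichotomy, the equality of coordinates forces the unique VV-type neighbor in $\mathscr{T}$ to be $v_k^2v_i^1$ and the unique CC-type neighbor in $\mathscr{T}$ to be $c_j^2c_l^1$.

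Part \ref{outside_connectivity_constraint} will be the main obstacle, and is where the girth hypothesis actually does work. I would proceed by contradiction, assuming a variable node $u\notin\mathscr{T}(c_j^2v_i^1)$ is adjacent to two distinct checks $v_k^2c_l^1$ and $v_{k'}^2c_{l'}^1$ of $\mathscr{T}$, and perform a case split on whether $u$ is VV-type or CC-type. Suppose $u=v_{\tilde k}^2v_{\tilde i}^1$ is VV-type; from \eqref{eq:hp_cz_neighbor} the adjacency to both checks forces $\tilde k=k=k'$ and $\tilde i\in\mathcal{N}^1_{c_l^1}\cap\mathcal{N}^1_{c_{l'}^1}$. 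Because $v_i^1$ already lies in both neighborhoods and $u\notin\mathscr{T}$ rules out $\tilde i=i$, the two distinct check nodes $c_l^1,c_{l'}^1$ share the two distinct variable-node neighbors $v_i^1,v_{\tilde i}^1$ in $\mathscr{G}_1$, which is a four-cycle and contradicts the hypothesis. The CC-type case is symmetric using $\mathscr{G}_2$. The delicate book-keeping will be to keep the degenerate sub-cases ($v_k^2=v_{k'}^2$ or $c_l^1=c_{l'}^1$) straight, since those reduce to a direct pigeonhole on the remaining coordinate and do not need the cycle hypothesis; separating them cleanly from the generic sub-case is the only non-mechanical step I anticipate.
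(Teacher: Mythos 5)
Your proposal is correct and follows essentially the same route as the paper's proof in Appendix~A: parts 1--3 by Cartesian-coordinate bookkeeping on the HP neighborhood formulas, and part 4 by contradiction, showing that an outside common neighbor forces equality of one coordinate and then a four-cycle in $\mathscr{G}_1$ (or $\mathscr{G}_2$), with your observation that $v_i^1\in\mathcal{N}^1_{c_l^1}\cap\mathcal{N}^1_{c_{l'}^1}$ playing the role the paper derives via part~3. Your explicit separation of the degenerate sub-cases (where the common neighbor is ruled out by coordinate mismatch alone, without the girth hypothesis) is a slightly cleaner presentation of the same argument.
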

\begin{proof}
    \begin{enumerate}
        \item Consider the VV-type variable node $v_k^2v_i^1,$ where $v_k^2 \in \mathcal{N}^2_{c_j},$ in $\mathscr{T}(c_j^2v_i^1).$ 
       Since the degree of $v_k^2v_i^1$ is $d_v^1$ and $\mathcal{N}_{v_k^2v_i^1}^\mathrm{Z}=v^k_2 \times \mathcal{N}_{v_i^1}^1 ,$ $\left|\mathcal{N}_{v_i^1}^2\right|=d_v^1.$ 
       From \eqref{eq:hp_cx_neighbor}, recall that the set of CC-type variable nodes in $\mathscr{T}(c_j^2v_i^1)$ is $c_j^2 \times \mathcal{N}_{v_i^1}^1.$ 
       Since $\left|\mathcal{N}_{v_i^1}^1\right|=d_v^1$, it is concluded that there are $d_v^1$ CC-type variable nodes in $\mathscr{T}(c_j^2v_i^1).$ 

       \item Due to the analogous nature of the proofs involving two VV-type nodes and two CC-type check nodes, we shall restrict our proof to the case of VV-type nodes. Let us consider two VV-type variable nodes $v_{k}^2v_{i}^1$ and $ v_{k'}^2v_{i}^1$ within $\mathscr{T}(c^2_jv^1_{i}).$ The $Z$-type check nodes connected to both these variable nodes are given by
        \begin{equation*}
            \left(\mathcal{N}^{\mathrm{Z}}_{v_k^2v_{i}^1} \cap \mathcal{N}^{\mathrm{Z}}_{v_{k'}^2v_{i}^1} \right) = \left(v_k^2 \times \mathcal{N}^1_{v_i} \right) \cap \left(v_{k'}^2 \times \mathcal{N}^1_{v_i} \right).
        \end{equation*}
        On the right-hand side of the above equation, the cartesian products differ,  implying $\left(\mathcal{N}^{\mathrm{Z}}_{v_k^2v_{i}^1} \cap \mathcal{N}^{\mathrm{Z}}_{v_{k'}^2v_{i}^1} \right)=\emptyset.$ 
       \item According to Part~\ref{non_intersecting_sypport_vns} of Lemma~\ref{lemma:structures_of_graph_induced_by_stabilizers}, it is established that no pair of VV-type or CC-type variable nodes is connected to the same check node within $\mathscr{T}(c^2_jv^1_{i})$. Consequently, a check node in $\mathscr{T}(c^2_jv^1_{i})$ can be connected to a maximum of one VV-type and one CC-type variable node. Consider the selection of a specific check node, labeled $v_k^1c_k^2$, within $\mathscr{T}(c^2_jv^1_{i})$.
      It should be noted that $\mathscr{T}(c^2_jv^1_{i})$ represents the subgraph induced by the variable nodes that are connected to check node $c^2_jv^1_{i}$. Consequently, $v_k^2c_l^1$ is connected to a minimum of one variable node within $\mathscr{T}(c^2_jv^1_{i})$.
        We consider the case wherein $v_k^2c_l^2$ is associated with a VV-type node, given that the proof is analogous for the alternative case.
         Since $\mathscr{T}(c^2_jv^1_{i})$ is induced by variable nodes in $\mathcal{N}^{\mathrm{X}}_{c^2_jv^1_i},$ the set of variable nodes in $\mathscr{T}(c^2_jv^1_{i})$ is given by
        \begin{equation}
        \label{eq:lemma1_check_neighbors}
        \mathcal{N}^{\mathrm{X}}_{c^2_jv^1_i}=\left(\mathcal{N}^2_{c^2_j} \times v^1_{i}\right) \cup \left( c^2_j \times \mathcal{N}^1_{v_i^1}\right).
        \end{equation}
        From equation~\eqref{eq:lemma1_check_neighbors}, observe that all VV-type and CC-type variable nodes  in $\mathscr{T}(c^2_jv^1_{i})$ are, respectively, of the form $v_{k'}^2v^1_i,$ where $v_{k'}^2 \in \mathcal{N}_{c_j^2}^2,$ and  $c_j^2c_{l'}^1$, where $c_{l'}^1 \in \mathcal{N}^1_{v_i^1}.$
        So denote the VV-type variable node connected to check node $v_k^2c_l^1$ by $v_{k}^2v^1_i.$
        The CC-type variable nodes connected to $v_k^2c_l^1$ is given by 
        \begin{equation}
            \label{eq:lemma1_CCtype_VC_neighbor}
            \mathcal{N}_{v_k^2}^2 \times c_l^1.
        \end{equation}
        
       From equations~\eqref{eq:lemma1_check_neighbors}, \eqref{eq:lemma1_CCtype_VC_neighbor} and the facts
       that $c_l^1 \in \mathcal{N}^1_{v_i^1}, c_{j}^2 \in \mathcal{N}_{v_k^2}^2$, it follows that
       $$\left( c^2_j \times \mathcal{N}^1_{v_i^1}\right) \cap \left(\mathcal{N}_{v_k^2}^2 \times c_l^1\right) = c_j^2c_l^1.$$ This completes the proof of Part~\ref{CN_connectivity}.
       
       \item We assume that there exists a variable node in $\mathscr{G}_\mathrm{Z}\setminus \mathscr{T}(c_j^2v_i^1)$ that is connected two checks in $\mathscr{T}(c_j^2v_i^1)$ and prove a contradiction.
       Consider two check nodes, say $v_k^2c_l^1$ and $v_{k'}^2c_{l'}^1$ in $\mathscr{T}(c_j^2v_i^1)$ that are connected to a VV-type variable node $\mathscr{G}_\mathrm{Z}\setminus \mathscr{T}(c_j^2v_i^1)$.  
       The sets of VV-type variable nodes connected to $v_k^2c_l^1$ and $v_{k'}^2c_{l'}^1$ are $v_k^2 \times \mathcal{N}^1_{c_l^1}$ and $v_{k'}^2 \times \mathcal{N}^1_{c_l^1}$, respectively.
       Therefore, for $Z$-type check nodes $v_k^2c_l^1$ and $v_{k'}^2c_l^1$ to have a common VV-type variable node as neighbor, $v_k^2$ and $v_{k'}^2$ should be the same variable node. 
       Also, the common VV-type variable node is of the form $v_k^2v_{k''}^1$ such that 
       \begin{equation}
       \label{eq:intermediate_result}
       v_{k''}^1 \in \mathcal{N}^1_{c_l^1} \bigcap \mathcal{N}_{c_{l'}^1}^1. 
       \end{equation}
      According to part \ref{CN_connectivity}, check node $v_k^2c_l^1$ is connected to a variable node of the VV-type, denoted as $v_{f}^2v_{g}^1$, and check node $v_{k'}^2c_{l'}^1$ is similarly connected to a VV-type variable node, denoted as $v_{f'}^2v_{g'}^1$ within $\mathscr{T}(c_j^2v_i^1)$. 
      Variable nodes $v_f^2v_g^1$ and $v_{f'}^2v_{g'}^1$ are in $\mathscr{T}(c_j^2v_i^1)$, which implies $v_g^1=v_{g'}^1=v_i^1$.
      In addition, variable nodes $v_f^2v_g^1$ and $v_{f'}^2v_{g'}^1$ are connected to $v_k^2c_l^1$ and $v_k^2c_{l'}^1$, which implies $v_f^2=v_{f'}^2=v_k^2$. Hence,  $v_f^2v_g^1=v_{f'}^2v_{g'}^1=v_k^2v_i^2$.
      Since $v_k^2v_i^1 \in \mathcal{N}_{v_k^2c_l^1} \bigcap \mathcal{N}_{v_k^2c_{l'}^1}$, $$v_i^1 \in \mathcal{N}_{c_l}^1 \bigcap \mathcal{N}_{c_{l'}}^1.$$
      We have shown in \eqref{eq:intermediate_result} that $v_{k''}^1\in\mathcal{N}^1_{c_l^1} \cap \mathcal{N}^1_{c_{l'}^1}$.
       This implies that $\mathscr{G}_1$ has a cycle of length four, which is a contradiction.
       Similarly, it can be shown that when two check noes in $\mathscr{T}(c_j^2v_i^1)$ shares a VV-type variable node in $\mathscr{G}_{\mathrm{Z}}\setminus \mathscr{T}(c_j^2v_i^1)$, there is a cycle of length four in $\mathscr{G}_1$, which is a contradiction.
    \end{enumerate}
\end{proof}
\begin{lemma}
\label{lemma:unsatisfied_checks_in_TS}
     Consider subgraph $\mathscr{T}(c^2_jv^1_{i})$ that is induced by $X-$check $c^2_jv^1_{i}$ on Tanner graph $\mathscr{G}_{\mathrm{Z}}$. If $\alpha$ VV-type and $\beta$ CC-type variable nodes in $\mathscr{T}(c^2_jv^1_{i})$ are erroneous while none in $\mathscr{G}_{\mathrm{Z}} \setminus \mathscr{T}(c^2_jv^1_{i})$ are, each erroneous CC-type and VV-type node in $\mathscr{T}(c^2_jv^1_{i})$ is connected to $d_{c}^2-\alpha$ and $d_{v}^1-\beta$ unsatisfied checks, respectively. Moreover, each non-erroneous CC-type and VV-type node is connected to $\alpha$ and $\beta$ unsatisfied checks, respectively.
\end{lemma}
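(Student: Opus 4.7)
The plan is to leverage the structural properties of $\mathscr{T}(c_j^2v_i^1)$ established in Lemma~\ref{lemma:structures_of_graph_induced_by_stabilizers}, namely: (i) there are exactly $d_c^2$ VV-type and $d_v^1$ CC-type variable nodes in $\mathscr{T}$ (Part~\ref{number_of_VNs_from_each_type}); (ii) no two VV-type (or two CC-type) nodes share a common check neighbor (Part~\ref{non_intersecting_sypport_vns}); and (iii) every check of $\mathscr{T}$ is incident to exactly one VV-type and one CC-type variable node in $\mathscr{T}$ (Part~\ref{CN_connectivity}).

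The first step is to observe that, under the hypothesis that no errors lie outside $\mathscr{T}(c_j^2v_i^1)$, the syndrome of each check in $\mathscr{T}$ is determined solely by its two $\mathscr{T}$-neighbors (which are of opposite types by~(iii)); in particular, such a check is unsatisfied if and only if exactly one of these two neighbors is erroneous. Thus the problem reduces to counting, for any variable node $u$ in $\mathscr{T}$, how many of the distinguished ``other-type partners'' of $u$'s check neighbors are erroneous.

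Next, I would fix an erroneous CC-type node $u = c_j^2 c^1 \in \mathscr{T}$ and examine its $d_c^2$ check neighbors, all of which lie in $\mathscr{T}$ (being adjacent to a variable of $\mathscr{T}$). By~(iii), each of these checks has a unique VV-type partner in $\mathscr{T}$; by~(ii) these partners are pairwise distinct; and by~(i) there are exactly $d_c^2$ VV-type nodes in $\mathscr{T}$. Hence the map sending each check neighbor of $u$ to its VV-type partner is a bijection onto the full set of VV-type nodes of $\mathscr{T}$. Since $u$ is erroneous, such a check is unsatisfied exactly when its VV-type partner is non-erroneous, yielding $d_c^2 - \alpha$ unsatisfied checks incident to $u$. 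The three remaining cases---non-erroneous CC-type, erroneous VV-type, and non-erroneous VV-type---follow from entirely symmetric reasoning, with the roles of $(d_c^2,\alpha)$ and $(d_v^1,\beta)$ swapped appropriately.

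No substantial obstacle is expected: the proof is a counting argument built directly on Lemma~\ref{lemma:structures_of_graph_induced_by_stabilizers}. The only delicate point is verifying that the bijection between $u$'s check neighbors and the opposite-type variable nodes of $\mathscr{T}$ is well-defined, which is immediate once Parts~\ref{number_of_VNs_from_each_type}--\ref{CN_connectivity} are in hand.
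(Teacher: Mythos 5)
Your proposal is correct and takes essentially the same route as the paper's proof: both arguments reduce everything to Part~\ref{CN_connectivity} of Lemma~\ref{lemma:structures_of_graph_induced_by_stabilizers} (every check in $\mathscr{T}(c_j^2v_i^1)$ is connected to exactly one VV-type and one CC-type node), note that with errors confined to $\mathscr{T}(c_j^2v_i^1)$ such a check is unsatisfied iff exactly one of its two partners is erroneous, and then count; the paper phrases your bijection as ``each erroneous VV--CC pair shares exactly one (satisfied) check,'' which is the same fact in different words. One imprecision worth fixing: Part~\ref{non_intersecting_sypport_vns} does \emph{not} give the pairwise distinctness of the partners that you need for injectivity of your map --- it only forbids two \emph{same-type} nodes from sharing a check, whereas distinctness of partners is equivalent to a VV-type and a CC-type node sharing at most one check (i.e., no alternating $4$-cycle inside $\mathscr{T}(c_j^2v_i^1)$). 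That fact is true and follows directly from the product structure: the check neighbors of $u=c_j^2c_l^1$ are $\{v_k^2c_l^1 : v_k^2\in\mathcal{N}^2_{c_j^2}\}$, and the unique VV-type partner of $v_k^2c_l^1$ in $\mathscr{T}(c_j^2v_i^1)$ is $v_k^2v_i^1$, so distinct checks (distinct $v_k^2$) have distinct partners; alternatively it can be derived exactly as Part~\ref{non_intersecting_sypport_vns} is, by intersecting the Cartesian-product neighborhoods. (The paper's own proof glosses over this same point, so your argument is no weaker than the published one once the citation is corrected.)
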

\begin{proof}
    Refer to Part~\ref{CN_connectivity} of Lemma~\ref{lemma:structures_of_graph_induced_by_stabilizers} and observe that all check nodes in $\mathscr{T}(c_l^2v_k^1)$ exhibit a degree of two, each check node being connected to a variable node of VV-type and CC-type. Consequently, if the error estimates on the two variable nodes connected to a check node do not correspond to the actual error, the check node remains satisfied.
Therefore, any pair of erroneous VV-type and CC-type variable nodes share one satisfied check node in $\mathscr{T}(c_l^2v_k^1)$. Thus, an erroneous VV-type variable node shares a satisfied check node with each of the $\beta$ erroneous CC-type variable nodes.
As a direct consequence, each erroneous VV-type variable node is connected to $d_v^1-\beta$ unsatisfied check nodes. 
Analogous arguments suggest that each erroneous VV-type variable node is connected to $d_c^2-\alpha$ unsatisfied check nodes.
Additionally, it should be noted that every erroneous CC-type variable node shares an unsatisfied check node with each of the correct VV-type variable nodes. Consequently, every correct VV-type variable node is connected to $\beta$ unsatisfied check nodes.
Similar reasoning applies, indicating that each correct CC-type variable node is connected $\alpha$ unsatisfied check nodes.
\end{proof}
\textbf{Proof of part 1 of Lemma~\ref{lemma:TSs_induced_by_stabilizers}:} 

 Consider the case in which exclusively all VV-type variable nodes are in error. 
During the first iteration, the bit-flipping decoder initializes the error estimate to $\boldsymbol{0}$. 
In other words, all $d_c^2$ VV-type variable nodes are in error, while none of the CC-type variable nodes are in error.
From Lemma~\ref{lemma:unsatisfied_checks_in_TS}, it follows that none of the check nodes adjacent to any variable node in $\mathscr{T}(c_j^2v_i^1)$ is satisfied, causing the decoder to flip the error estimate across all variable nodes.  
 As a result, at the beginning of the second iteration, the support of the mismatched error lies exactly on the CC-type nodes, again rendering all the check nodes unsatisfied.  
Hence, the decoder again flips the error estimate across all variable nodes. 
At the beginning of the third iteration, the support of mismatched error lies on all VV-type nodes, as was the case at the beginning of the first iteration. 
Hence, it is concluded that the support of the estimated error oscillates between being on VV-type nodes and being on CC-type nodes without converging.

To show that $\mathscr{T}(c_j^2v_i^1)$ constitutes a TS, we also need to prove that the
error patterns with the support located in $\mathscr{T}(c^2_jv^1_{i})$ are restricted to this region and will not extend to other sections of $\mathscr{G}_{\mathrm{Z}}$ throughout the decoding process.
 From part~\ref{outside_connectivity_constraint} of Lemma~\ref{lemma:structures_of_graph_induced_by_stabilizers}, we know that there does not exist a variable node in $\mathscr{G}_\mathrm{Z}\setminus \mathscr{T}(c_j^2v_i^1)$ that is connected to two check nodes in $\mathscr{T}(c_j^2v_i^1)$.
In other words, any variable node in $\mathscr{G}_\mathrm{Z}\setminus \mathscr{T}(c_j^2v_i^1)$ is connected to at most one check node in $\mathscr{T}(c_j^2v_i^1)$. 
Therefore, when the support of an error pattern lies exclusively in $\mathscr{T}(c_j^2v_i^1)$, all unsatisfied check nodes lie in $\mathscr{T}(c_j^2v_i^1)$, and consequently any variable node in $\mathscr{G}_\mathrm{Z}\setminus \mathscr{T}(c_j^2v_i^1)$ is connected to a maximum of one unsatisfied check node.
The bit flip decoder flips a variable node when the number of unsatisfied checks connected to it is strictly greater than one, when the variable node degree is at least two. As a result, if the support of the error pattern completely lies in $\mathscr{T}(c_j^2v_i^1)$, then the error does not spread to $\mathscr{G}_{\mathrm{Z}}\setminus \mathscr{T}(c_j^2v_i^1)$ throughout the decoding process.  

\textbf{Proof of part 2 of Lemma~\ref{lemma:TSs_induced_by_stabilizers}:}

Assume that $\alpha$ VV-type variable nodes and $\beta$ CC-type variable nodes within $\mathscr{T}(c_j^2v_i^1)$ are in error.
Consider the case which satisfies $\alpha\geq \lfloor\frac{d_c^2}{2}\rfloor+1$ and $\beta < \lfloor\frac{d_v^1}{2}\rfloor$.
 From Lemma~\ref{lemma:unsatisfied_checks_in_TS} it follows that each of the erroneous VV-type (or CC-type) variable nodes is connected to at least $d_v^1-\beta \geq \lfloor\frac{d_v^1}{2}\rfloor+1$ (or $d_c^2-\alpha \leq \lfloor\frac{d_c^2}{2}\rfloor$) unsatisfied checks.
Since the number of unsatisfied checks connected to the erroneous VV-type variable nodes exceeds the threshold set for flipping, the bit-flipping decoder flips the error estimate on them. 
 The number of unsatisfied check nodes connected to erroneous CC-type variable nodes does not exceed the threshold, and as a result the decoder does not flip the erroneous CC-type variable nodes.
Also, from Lemma~\ref{lemma:unsatisfied_checks_in_TS} it follows that each of the non-erroneous VV-type (or CC-type) variable nodes is connected to at least $\beta < \lfloor\frac{d_v^1}{2}\rfloor $ (or $\alpha \geq \lfloor\frac{d_c^2}{2}\rfloor+1$) unsatisfied checks.
Consequently, the decoder flips all the non-erroneous CC-type variable nodes and does not flip the non-erroneous VV-type variable nodes.
So, the bit-flipping decoder flips the error estimate on all the non-erroneous CC-type variable nodes, as well as those VV-type variable nodes that are in error. As a result, the support for mismatched error pattern lies exactly on the CC-type nodes. According to Part~\ref{lemma1_part1}, this is an error pattern that induces TS.
With similar arguments, the error pattern that satisfies $\alpha < \lfloor\frac{d_c^2}{2}\rfloor$ and $\beta \geq \lfloor\frac{d_v^1}{2}\rfloor+1$ can be argued to be a TS-inducing error pattern.

    \section{Proof of Lemma~\ref{lemma:TS_compositions_condition}}
    \label{sec:appendix_proof_of_sufficient_condtions}

    \begin{lemma}
    \label{lemma:TS_compositions_condition_intermediate}
    Consider the stabilizer-induced subgraph in the $Z$ Tanner graph $\mathscr{G}_{\mathrm{Z}}$ of an HP code  given by $\mathscr{T}(\mathbf{h})$,
    where $\mathbf{h}$ is a stabilizer of type $X$ formed as a linear combination of stabilizer generators of type $X$, i.e., $\mathbf{h} = \sum_{a \in \mathcal{I},b \in \mathcal{J}}c^2_bv_a^1$.
    Let $\Lambda_{(c_j^2v_i^1)}$ and $\Gamma_{(c_j^2v_i^1)}$, respectively, denote the collections of VV-type and CC-type variable nodes that are present in subgraph $\mathscr{T}(c_j^2v_i^1)$, and not in $\mathscr{T}(h)\setminus \mathscr{T}(c_j^2v_i^1)$. 
        Consequently, for a variable node in set $\Lambda_{(c_j^2v_i^1)}$ and a variable node in set $\Gamma_{(c_j^2v_i^1)}$, there exists a check node that connects exclusively to these two variable nodes and not to any other variable nodes within subgraph$\mathscr{T}(h) \setminus \mathscr{T}(c_j^2v_i^1)$.
        \end{lemma}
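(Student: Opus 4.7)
The plan is to proceed constructively: given any $v_k^2 v_i^1 \in \Lambda_{(c_j^2v_i^1)}$ and any $c_j^2 c_l^1 \in \Gamma_{(c_j^2v_i^1)}$, I will exhibit the check node $v_k^2 c_l^1$ as the desired common neighbor, and then verify the two required properties, namely (i) that it is indeed adjacent to both chosen variable nodes within $\mathscr{T}(c_j^2v_i^1)$, and (ii) that none of its other neighbors lies in $\mathscr{T}(\mathbf{h})\setminus \mathscr{T}(c_j^2v_i^1)$.

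For step (i), I would invoke the HP neighborhood formula \eqref{eq:hp_cz_neighbor} applied to $v_k^2c_l^1$. Because $v_k^2v_i^1\in\mathscr{T}(c_j^2v_i^1)$ forces $v_k^2\in\mathcal{N}^2_{c_j^2}$, and $c_j^2c_l^1\in\mathscr{T}(c_j^2v_i^1)$ forces $c_l^1\in\mathcal{N}^1_{v_i^1}$, the node $c_j^2c_l^1$ sits in $\mathcal{N}^2_{v_k^2}\times c_l^1$ and the node $v_k^2v_i^1$ sits in $v_k^2\times\mathcal{N}^1_{c_l^1}$, both of which are subsets of the neighborhood of $v_k^2c_l^1$. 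Part~\ref{CN_connectivity} of Lemma~\ref{lemma:structures_of_graph_induced_by_stabilizers} then ensures that, inside $\mathscr{T}(c_j^2v_i^1)$, the check $v_k^2c_l^1$ is adjacent to exactly one VV-type and one CC-type variable node, which must therefore be $v_k^2v_i^1$ and $c_j^2c_l^1$.

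For step (ii), I would argue by contradiction. Suppose $v_k^2c_l^1$ has another neighbor lying in $\mathscr{T}(c_{j'}^2v_{i'}^1)$ for some $(j',i')\in\mathcal{J}\times\mathcal{I}$ with $(j',i')\neq(j,i)$. Using \eqref{eq:hp_cz_neighbor} again, such an extra neighbor is either a CC-type $c_p^2c_l^1$ with $c_p^2\in\mathcal{N}^2_{v_k^2}$, or a VV-type $v_k^2v_q^1$ with $v_q^1\in\mathcal{N}^1_{c_l^1}$. In the CC-type case the fact that $c_p^2c_l^1\in\mathscr{T}(c_{j'}^2v_{i'}^1)$ forces $p=j'$, and comparing with the fact that $c_j^2c_l^1\in\mathscr{T}(c_j^2v_i^1)$ only forces $j'\neq j$, which in turn gives $v_k^2\in\mathcal{N}^2_{c_{j'}^2}$ with $j'\in\mathcal{J}\setminus\{j\}$; this contradicts $v_k^2v_i^1\in\Lambda_{(c_j^2v_i^1)}$. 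The VV-type case is symmetric: $q=i'\neq i$, so $c_l^1\in\mathcal{N}^1_{v_{i'}^1}$ with $i'\in\mathcal{I}\setminus\{i\}$, contradicting $c_j^2c_l^1\in\Gamma_{(c_j^2v_i^1)}$.

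The main technical friction I anticipate is purely bookkeeping: carefully translating the set-theoretic definitions of $\Lambda_{(c_j^2v_i^1)}$ and $\Gamma_{(c_j^2v_i^1)}$ into the concrete index conditions ``$v_k^2\notin\mathcal{N}^2_{c_{j'}^2}$ for all $j'\in\mathcal{J}\setminus\{j\}$'' and ``$c_l^1\notin\mathcal{N}^1_{v_{i'}^1}$ for all $i'\in\mathcal{I}\setminus\{i\}$,'' and then matching these against the two case branches above. No additional graph-theoretic input beyond Lemma~\ref{lemma:structures_of_graph_induced_by_stabilizers} and the HP neighborhood formulas \eqref{eq:hp_cx_neighbor}--\eqref{eq:hp_cz_neighbor} should be required.
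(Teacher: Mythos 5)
Your proposal is correct and follows essentially the same route as the paper's proof: you identify the product check node $v_k^2c_l^1$, use the HP neighborhood formulas together with Part~\ref{CN_connectivity} of Lemma~\ref{lemma:structures_of_graph_induced_by_stabilizers} to establish exclusivity inside $\mathscr{T}(c_j^2v_i^1)$, and rule out neighbors in other subgraphs by a contradiction that forces either $v_k^2\in\mathcal{N}^2_{c_{j'}^2}$ with $j'\in\mathcal{J}\setminus\{j\}$ (violating $\Lambda$-membership) or $c_l^1\in\mathcal{N}^1_{v_{i'}^1}$ with $i'\in\mathcal{I}\setminus\{i\}$ (violating $\Gamma$-membership), exactly as the paper does. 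The only cosmetic differences are that you verify adjacency from the check node's neighborhood formula rather than by intersecting the two variable nodes' neighborhoods, and you treat the CC-type and VV-type extra-neighbor cases explicitly where the paper disposes of one of them by a without-loss-of-generality appeal.
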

        \begin{proof}
    We start by establishing that for any pair of variable nodes, one chosen from set $\Lambda_{c_j^2v_i^1}$ and the other from set $\Gamma_{c_j^2v_i^1}$, there exists a unique check node within set $\mathscr{T}(c^2_jv^1_i)$ that exclusively connects to these particular variable nodes.
 From~\eqref{eq:hp_cx_neighbor} recall that the set of variable nodes linked to the $X$-stabilizer $c_j^2v_i^1$ is given by $$\mathcal{N}^{\mathrm{X}}_{c_j^2v_i^1}=\left(c_j^2 \times \mathcal{N}_{v_i^1}^1\right) \bigcup \left(\mathcal{N}^1_{v_i^1} \times c_j^2\right).$$ This implies that VV-type variable nodes associated with check node $c_j^2v_i^1$ conform to the form $v_{i'}^2v_i^1$ where $v_{i'}^2\in \mathcal{N}^2_{v_i}$, whereas CC-type variable nodes linked to check node $c_j^2v_i^1$ conform to the form $c_j^2c_{j'}^1$ where $c_{j'}^1\in \mathcal{N}_{v_i^1}^1$. 
 Consider a VV-type variable node, denoted as $v_{i'}^2v_i^1$, from set $\Lambda_{c_j^2v_i^1}$ and a CC-type variable node, denoted as $c_j^2c_{j'}^1$, from set $\Gamma_{c_j^2v_i^1}$.
    The set of $Z$ checks that are connected to both $v_{i'}^2v_i^1$ and $c_j^2c_{j'}^1$ is given by
    \begin{equation}
    \label{eq:lemma2_common_zcheck}
        \mathcal{N}_{v_{i'}^2v_i^1}^\mathrm{Z} \bigcap \mathcal{N}_{c_j^2c_{j'}^1}^{\mathrm{Z}} = \left(v_{i'}^2 \times \mathcal{N}_{v_{i}}^1\right) \bigcap \left(\mathcal{N}_{c_{j}}^2 \times c_{j'}^1\right).
    \end{equation}
    Given $v_{i'}^2 \in \mathcal{N}_{c^1_j}^1$ and $c_{j'}^1 \in \mathcal{N}_{v^1_i}^1$, it follows from~\eqref{eq:lemma2_common_zcheck} that $ \mathcal{N}_{v_{i'}^2v_i^1}^\mathrm{Z} \bigcap \mathcal{N}_{c_j^2c_{j'}^1}^{\mathrm{Z}}=v_{i'}^2c_{j'}^1$. 
    Furthermore, since the variable nodes $v_{i'}^2v_{i}^1$ and $c_{j}^1c_{j'}^2$ are neighbors of  the $X$ stabilizer $c_j^2v_i^1$, which subsequently induces subgraph $\mathscr{T}(c_j^2v_i^1)$, the $Z$-type check node $v_{i'}^2c_{j'}^1$ is present within subgraph $\mathscr{T}(c_j^2v_i^1)$. 
    According to Part~\ref{CN_connectivity} of Lemma~\ref{lemma:structures_of_graph_induced_by_stabilizers}, a $Z$ type check connects exclusively to a single VV type and a CC-type variable node within subgraph induced by any $X$ type stabilizer generators, which consequently indicates that the $Z$ type check $v_{i'}^2c_{j'}^1$ maintains no connections to variable nodes other than $v_{i'}^2v_{i}^1$ and $c_{j}^1c_{j'}^2$ within subgraph $\mathscr{T}(c_j^2v_i^1)$.
    
    To complete the proof, we still need to show that check node  $v_{i'}^2c_{j'}^1$ is not connected to any other variable other than these two variable nodes in subgraph $\mathscr{T}(\mathbf{h})$.
    To do so, we assume that check node  $v_{i'}^2c_{j'}^1$ is present in $\mathscr{T}(c_l^2v_k^1)$ for $k \in \mathcal{I}$ and $l \in \mathcal{J}$ and show a contradiction. 
    Since check node  $v_{i'}^2c_{j'}^1$ is present in $\mathscr{T}(c_l^2v_k^1)$ which is induced by the variable nodes connected to $X$-type check $c_l^2v_k^1$, there exists a variable node within $\mathscr{T}(c_l^2v_k^1)$ that connects to check node $v_{i'}^2c_{j'}^1$. 
    Without loss of generality, assume that the variable node connected to check node $v_{i'}^2c_{j'}^1$ within subgraph $\mathscr{T}(c_l^2v_k^1)$ is a VV-type variable node, say $v_f^2v_g^1$.
    Since the set of variable nodes connected to $c_l^2v_k^1$ is given by 
    \begin{equation}
    \label{eq:lemma2_clvk_neighbor}
    \mathcal{N}_{c_l^2v_k^1}^\mathrm{X}= \left(c_l^2 \times \mathcal{N}_{v_k^1}^1\right) \bigcup \left(\mathcal{N}^2_{c_j^2} \times v_k^1\right)
    \end{equation}
    and variable node $v_f^2v_g^1$ is from set $ \mathcal{N}_{c_l^2v_k^1}^\mathrm{X}$, it follows that $v_g^1=v_k^1$.
    Also, $v_f^2v_g^1$ is connected to $X$-check $v_{i'}^2c_{j'}^1$ and the set of variable nodes connected to $v_{i'}^2c_{j'}^1$ is given by 
    \begin{equation}
        \label{eq:lemma2_vicj_neighbor}
         \mathcal{N}_{v_{i'}^2c_{j'}^1}^\mathrm{Z}= \left(v_{i'}^2 \times \mathcal{N}_{c_{j'}^1}^1\right) \bigcup \left(\mathcal{N}^2_{v_{i'}^2} \times c_{j'}^1\right),
    \end{equation}
    it follows that $v_f^2=v_{i'}^2$. 
    From now on, we refer to the variable node $v_{f}^2v_{g}^1$ from subgraph $\mathscr{T}(c_l^2v_k^1)$ as $v_{i'}^2v_{k}^1$.
    Also, since $v_{i'}^2v_{k}^1 \in \mathcal{N}_{v_{i'}^2c_{j'}^1}^{\mathrm{Z}}$, from~\eqref{eq:lemma2_vicj_neighbor}, it follows that $v_{k}^1\in \mathcal{N}_{c_{j'}}^1$, which further implies that $c_{j'}^1 \in \mathcal{N}_{v_{k}^1}^1$.
    Now consider subgraph $\mathscr{T}(c_j^2v_k^1)$.  
    The CC-type variable nodes present in subgraph $\mathscr{T}(c_j^2v_k^1)$ is given by 
    $c_j^2 \times \mathcal{N}_{v_k^1}$. 
    Since $c_{j'}^1 \in \mathcal{N}_{v_{k}^1}^1$, it follows that $c_j^2c_{j'}^1 \in \mathcal{N}_{c_j^2v_k^1}$.
    We start with the assumption that variable node $c_{j}^2c_{j'}^1 \in \Gamma_{c_j^2v_i^1}$, which says that variable node $c_{j}^2c_{j'}^1$ is present in $\mathscr{T}(c_j^2v_i^1)$ and not in any other subgraph $\mathscr{T}(c_{j''}^2v_{i''}^1) \subset \mathscr{T}(\mathbf{h})$.
    However, we have shown that variable node $c_{j}^2c_{j'}^1$ is present in $\mathscr{T}(c_j^2v_k^1)$, which is a contradiction. 
    \end{proof}
    \textbf{Proof of Lemma~\ref{lemma:TS_compositions_condition}:}
    
     Consider the sets $\Lambda_{c_j^2v_i^1}$ and $\Gamma_{c_j^2v_i^1}$ as defined in Lemma~\ref{lemma:TS_compositions_condition_intermediate}.
    Define $$\Lambda(\mathbf{h})\coloneqq \bigcup_{a \in \mathcal{I}, b \in \mathcal{J}}\Lambda_{c_b^2v_a^1} \text{ and } \Gamma(\mathbf{h}) \coloneqq \bigcup_{a \in \mathcal{I}, b \in \mathcal{J}}\Gamma_{c_b^2v_a^1}.$$
     To show that the stabilizer-induced graph $\mathscr{T}(\mathbf{h})$ constitutes a TS, consider an error pattern whose support lies in $\mathscr{T}(\mathbf{h})$; further, assume that $\Lambda(\mathbf{h})$ is contained within the support of this error pattern and that $\Gamma(\mathbf{h})$ intersects trivially with the support of the error pattern.
     Next, we show that the bit-flipping decoder flips the error estimate on the variable nodes in $\Lambda_{c_j^2v_i^1}$.
     Based on Lemma~\ref{lemma:TS_compositions_condition_intermediate}, for a given $v^2v_i^1 \in \Lambda_{c_j^2v_i^1}$, there exists a check node in $\mathscr{T}(c_j^2v_i^1)$ that is only connected to the VV-type variable node $v_i^2v^1$ and CC-type variable node $c^2c^1 $, for $c^2_jc^1 \in \Gamma_{c^2_jc^1}$, within $\mathscr{T}(c_j^2v_i^1)$.
     This holds for every CC-type variable node in $\Gamma_{c^2_jv^1}$, suggesting that there are at least $\left|\Gamma_{c^2_jv^1}\right|$ degree two check nodes in the neighborhood of $v^2v_i^1$ that are connected to a CC-type variable node in $\Gamma_{c_j^2v_i^1}$.
    Since we have assumed that the variable nodes in $\Lambda_{c^2_jv_i^1}$ are in error and the variable nodes in $\Gamma_{c_j^2v_i^1}$ are not, variable node $v^2v_i^1$ is at least connected to $|\Gamma_{c^2_jv^1}|$ unsatisfied check nodes, regardless of the status of the variable nodes that are not in both $\Lambda_{c_j^2v^1_i}$ and $\Gamma_{c_j^2v^1_i}$.
    Since $|\Gamma_{c^2_jv^1}|\geq\lfloor\frac{d_v^1}{2}\rfloor+1$, variable node $v^2v_i^1$ is connected to at least $\lfloor\frac{d_v^1}{2}\rfloor+1$ unsatisfied checks, and as a result, the decoder flips the error estimate in $v^2v_i^1$.
    The above arguments hold for any $v^2v_i^1 \in \Lambda_{c_j^2v_i^1}$, indicating that the decoder flips the error estimate on every variable node in set $\Lambda_{c_j^2v_i^1}$.
    Furthermore, the arguments hold for $\Lambda_{c_j^2v_i^1}$, for any $i \in \mathcal{I}$ and $j \in \mathcal{J}$, indicating that the decoder flips the error estimate on all variable nodes in $\Lambda(\mathbf{h})$.

    Subsequently, we demonstrate that the decoder also flips the error estimates on the variable nodes within $\Gamma_{\mathbf{h}}$. To this end, we consider the CC-type variable node $c_j^2c^1$ in $\Gamma_{c_j^2v_i^1}$. 
    According to Lemma~\ref{lemma:TS_compositions_condition_intermediate}, there exists a degree-two check node within $\mathscr{T}(\mathbf{h})$ for each VV-type variable node in $\Lambda_{c_j^2v_i^1}$ that is connected to both $c_j^2c^1$ and the considered VV-type variable node from $\Lambda_{c_j^2v_i^1}$. 
    Given that all VV-type variable nodes in $\Lambda_{c_j^2v_i^1}$ are erroneous while $c_j^2c^1$ is correct, the variable node $c_j^2c^1$ is linked to at least $|\Lambda_{c_j^2v_i^1}|$ unsatisfied checks. Since $|\Lambda_{c_j^2v_i^1}|\geq \lfloor \frac{d_c^2}{2}\rfloor+1$, the variable node $c^2_jc^1$ is associated with a minimum of $\lfloor \frac{d_c^2}{2}\rfloor+1$ unsatisfied checks, leading the decoder to flip $c^2_jc^1$. 
    This reasoning applies to every CC-type variable node in $\Gamma(\mathbf{h})$, resulting in the decoder flipping all of them. 
    Consequently, the error estimate differs from the actual error on the variable nodes in $\Gamma_{c_j^2v_i^1}$. 
    By employing similar arguments, it can be shown that the decoder inverts the error estimates on all variable nodes in $\Gamma(\mathbf{h}) \cup \Lambda(\mathbf{h})$ in the subsequent iteration, thus indicating that $\mathscr{T}(\mathbf{h})$ forms a TS.
       
        
\section{Proof of properties of the labeling scheme}
\label{proof_of_properties_2_and_3}
To see that the labeling scheme satisfies the second property, consider two check nodes in the neighborhood of a VV-type variable node, say $c^2c^1$.
Denote these two check nodes in $\mathcal{N}_{c^2c^1}^\mathrm{Z}$ by $v_i^2c^1$ and $v_{i'}^2c^1$, respectively, where $v_i^2, v_{i'}^2 \in \mathcal{N}^2_{c^2}$.
According to the labeling scheme, the VV-type variable nodes connected to $v_{i}^2c^1$ have the same label as that of the variable node $v_{i}^2$ in the Tanner graph $\mathscr{G}_2$, while the VV-type variable nodes connected to $v_{i'}^2c^1$ have the same label as that of variable node $v_{i'}$.
Given that variable nodes $v_i^2$ and $v_{i'}^2$ are labeled differently in the Tanner graph $\mathscr{G}_2$, it follows that the VV-type variable nodes associated with check node $v_{i}^2c^1$ have distinct labels from those connected to the check node $v_{i'}^2c^1$.

To see that the labeling scheme satisfies the third property, consider the subgraph $\mathscr{T}(c^2v^1)$ induced by the $X$-type stabilizer generator or check $c^2v^1$.
The set of variable nodes connected to $X$ check is given by
$$\mathcal{N}^\mathrm{X}_{c^2v^1}= \left(\mathcal{N}^2_{c^2} \times v^1 \right) \bigcup \left(c^2 \times \mathcal{N}^\mathrm{1}_{v^1}\right).$$
Note that the VV-type variable nodes in $\mathcal{N}^\mathrm{X}_{c^2v^1}$ are of the form $v_{i}^2v^1$ for $v_i \in \mathcal{N}_{c^2}^2$.
Given that the VV-type variable node $v_{i}^2v^1$ is assigned the same label as variable node $v_i^2$ within the Tanner graph $\mathscr{G}_2$, and that any pair of variable nodes, specifically $v_i^2$ and $v_{i'}^2$ from $\mathcal{N}^2_{c^2}$, are assigned distinct labels, it follows that the VV-type variable nodes $v_{i}^2v^1$ and $v_{i'}^2v^1$ also possess distinct labels.
Following similar arguments, it can be shown that two CC-type variable nodes within subgraph $\mathscr{T}(c^2v^1)$ have distinct labels, as stated in the third property. 
\section{Proof of Lemma~\ref{lemma:vn_only_graphical_description}}
 \label{proof_of_lemma_vn_only_graphical_description} 
 \textbf{Proof of Part 1:} 
 
 Consider a check node, say $v^2c^1$, in $\mathcal{N}_{c^2c^1}^\mathrm{Z}$, for $c^2c^1 \in \mathscr{T}(\mathbf{h})$, that is connected to the VV-type variable nodes with label $\rho$.
Note that all the VV-type variable nodes connected to the check node $v^2c^1$ have label $\rho$, however, all VV-type variable nodes with label $\rho$ within the stabilizer-induced graph $\mathscr{T}(\mathbf{h})$ are not connected to $v^2c^1$.
To determine the set of VV-type variable nodes connected to check node $v^2c^1$, recall that $\mathscr{T}(\mathbf{h})$ is collectively induced by $X$ checks of the form $c_b^2v_a^1$ where $a \in \mathcal{I}$ and $b \in \mathcal{J}$.
Therefore, the CC-type variable node $c^2c^1$ must be present in one or several subgraphs induced by $X$ checks of the form $c_b^2v_a^1$ where $a \in \mathcal{I}$ and $b \in \mathcal{J}$.
The collection of $X$ checks that contains the CC-type variable node $c^2c^1$ in their support and has its induced subgraph located within $\mathscr{T}(\mathbf{h})$ is given by $$\mathcal{Q}_{c^2c^1}=\mathcal{N}_{c^2c^1}^{\mathrm{X}} \cap \mathscr{T}(\mathbf{h}).$$
For every X-type check node $c^1v^2 \in \mathcal{N}_{c^2c^1}^{\mathrm{X}} \cap \mathscr{T}(\mathbf{h})$, the induced subgraph $\mathscr{T}(c^2v^1)$ contains CC-type variable node $c^2c^1$ and check node $v^2c^1$.
Recall that Part~\ref{CN_connectivity} of Lemma~\ref{lemma:structures_of_graph_induced_by_stabilizers} states that every check node in the subgraph induced by a stabilizer generator is connected to exactly one CC-type variable node and one VV-type variable node.
Therefore, there exists a VV-type variable node within $\mathscr{T}(c^2v^1)$ that is connected to the check node $v^2c^1$.
As per the first property of the labeling scheme described in Section~\ref{sec:concise_rep_stabilizer_induced_TS},  all VV-type variable nodes associated with check node $v^2c^1$ have label $\rho$.
Consequently, this assists in identifying the VV-type variable node linked to check node $v^2c^1$ within subgraph $\mathscr{T}(c^2v^1)$. 
Similarly, the VV-type variable nodes associated with check node $v^2c^1$ within the stabilizer-induced subgraph $\mathscr{T}(\mathbf{h})$  corresponds to the VV-type variable nodes with label $\rho$ in $\cup_{S \in \mathcal{Q}_{c^2c^1}}\mathscr{T}(S)$.

\textbf{Proof of Part 2:}  
Consider a check node, say $v^2c^1$, in $\mathcal{N}^\mathrm{Z}_{c^2c^1}$, for the CC-type variable node $c^2c^1 \in \mathscr{T}(\mathbf{h})$.
Assume that the CC-type variable node $c^2c^1$ has label $\lambda$.
According to Property 1 of the labeling scheme described in Section~\ref{sec:concise_rep_stabilizer_induced_TS}, the CC-type variable nodes connected to a specific check node possess the same label. Consequently, this indicates that all CC-type variable nodes linked to check node $v^2c^1$ have label $\lambda$.
However, all CC-type variable nodes labeled $\lambda$ in subgraph $\mathscr{T}(\mathbf{h})$ are not connected to check node $v^2c^1$.
Assume that the VV-type variable nodes connected to the check node $v^2c^1$ have label $\rho$ and denote the set of such VV-type variable nodes within subgraph  $\mathscr{T}(\mathbf{h})$ by $\Omega_{v^2c^1}^\rho$.
Consider a VV-type variable node, say $v^2v^1$, from set $\Omega_{v^2c^1}^\rho$.
The collection of $X$ checks that contains the VV-type variable $v^2v^1$ and have their induced graphs located within stabilizer-induced graph $\mathscr{T}(\mathbf{h})$ is given by 
$$\mathcal{Q}_{v^2v^1}=\mathcal{N}^\mathrm{X}_{v^2v^1} \cap \mathscr{T}(\mathbf{h}).$$
Note that for every $S \in \mathcal{Q}_{v^2v^1}$, the induced graph $\mathscr{T}(S)$ contains the VV-type variable node $v^2v^1$ and check node $v^2c^1$.
Moreover, there exists exactly one CC-type variable node within graph $\mathscr{T}(S)$ that is connected to check node $v^2c^1$ according to Part~\ref{CN_connectivity} of Lemma~\ref{lemma:structures_of_graph_induced_by_stabilizers}.
In accordance with Property 1 of the labeling scheme, it is established that all CC-type variable nodes linked to check node $v^2c^1$ are assigned label $\lambda$. Furthermore, according to Property 3 of the labeling scheme, it is determined that there exists precisely one CC-type variable node bearing the label $\lambda$ within Subgraph $\mathscr{T}(S)$.
Therefore, the CC-type variable node within $\mathscr{T}(S)$ that is connected to check node $v^2c^1$ can be identified from its label $\lambda$.
The above discussion holds for any $X$ check $S \in \mathcal{Q}_{v^2v^1}^\rho$, where $v^2v^1\in \Omega_{v^2c^1}$.
This suggests that if we define 
$$\mathcal{R}_{v^2c^1}^\rho=\bigcup_{v^2v^1 \in \Omega_{v^2c^1}^\rho}\mathcal{Q}_{v^2v^1},$$
the CC-type variable node in subgraph $\mathscr{T}(S)$, for $S \in \mathcal{R}_{v^2c^1} $, with label $\lambda$ is connected to check node $v^2c^1$.

Subsequently, we show that each CC-type node associated with the check node $v^2c^1$ within $\mathscr{T}(\mathbf{h})$ corresponds to the CC-type variable node bearing label $\lambda$ in a stabilizer-induced graph $\mathscr{T}(S)$, for $S \in \mathcal{R}_{v^2c^1}^\rho$.
For this purpose, consider a CC-type variable node within $\mathscr{T}(\mathbf{h})$ that is connected to check node $v^2c^1$.
This CC-type variable node must be present in at least one subgraph of the form $\mathscr{T}(c_j^2v_i^1)$ for $i \in \mathcal{I}$ and $j \in \mathcal{J}$.
According to Part~\ref{CN_connectivity} of Lemma~\ref{lemma:structures_of_graph_induced_by_stabilizers} check node $v^2c^1$ is connected to a VV-type variable node within $\mathscr{T}(c_j^2v_i^1)$.
Recall that VV-type variable nodes connected to check node $v^2c^1$ have label $\rho.$
However, the VV-type variable nodes bearing label $\rho$ and connected to check node $v^2c^1$ within $\mathscr{T}(\mathbf{h})$ are in the set $\Omega_{v^2c^1}^\rho$.
This shows that every CC-type variable node connected to $\mathcal{N}_{v^2c^1}^\mathrm{Z}$ is present in $\mathscr{T}(S)$ for $S \in \mathcal{R}_{v^2c^1}^\rho$ and is labeled $\lambda$. 
\section{Proof of Theorem~\ref{theorem:TS_aware_bf_decoder}}
\label{appendinx_proof_TS_aware_decoder}
Before proving Theorem~\ref{theorem:TS_aware_bf_decoder}, we prove the following lemma, which is used to prove the theorem.
\begin{lemma}
   \label{lemma:TS_aware_bf_decoder_intermediate_results} 
    Let $\cup_{j \in \mathcal{J}}\mathscr{T}(c_j^2)$ denote a connected cycle-free subgraph of $\mathscr{G}_2$ induced by $\{c_j^2: j \in \mathcal{J}  \}$, and similarly let $\cup_{i \in \mathcal{I}}\mathscr{T}(v_i^2)$ denote a connected cycle-free subgraph of $\mathscr{G}_1$ induced by $\{v_i^1: i \in \mathcal{I}  \}$, where the index sets $\mathcal{I}$ and $\mathcal{J}$ are non-empty.
   Consider the subgraph $\mathscr{T}(\mathbf{h}) \coloneqq \cup_{i \in \mathcal{I},j \in \mathcal{J}}\mathscr{T}(c_j^2v_i^1)=\cup_{j \in \mathcal{J}}\mathscr{T}(c_j^2) \times \cup_{i \in \mathcal{I}}\mathscr{T}(v_i^2) $.
   \begin{enumerate}
       \item \label{lemma_TS_composition_bf_decoder:part1} There exists $l \in \mathcal{J}$ such that $\mathscr{T}(c_l^2v_i^1)$ and $ \mathscr{T}(\mathbf{h})\setminus\mathscr{T}(c_l^2v_i^1) $ for any $i \in \mathcal{I}$ have exactly one common VV-type variable node when $\left|\mathcal{J}\right|> 1.$
       Assuming the above statement is true and such a $l$ exists, define $\mathcal{J}'=\mathcal{J}\setminus l$.
       Then, the same statement is also true for $\mathscr{T}(\mathbf{h}') \coloneqq \cup_{i \in \mathcal{I},j \in \mathcal{J}' }\mathscr{T}(c_{j}^2v_{i}^1)$ when $\left|\mathcal{J}'\right|> 1.$
       \item \label{lemma_TS_composition_bf_decoder:part2} There exists  $k \in \mathcal{I}$  such that subgraph  $\mathscr{T}(c_j^2v_k^1)$, for any $j \in \mathcal{J}$, and $ \mathscr{T}(\mathbf{h})\setminus \mathscr{T}(c_j^2v_k^1)$  have exactly one common CC-type variable node when $\left|\mathcal{I}\right|> 1.$
       Assuming that the above statements are true, define recursively $\mathcal{I}'=\mathcal{I} \setminus k$. Then, the same statement is also true for $\mathscr{T}(\mathbf{h}') \coloneqq \cup_{i \in \mathcal{I}',j \in \mathcal{J}}\mathscr{T}(c_{j}^2v_{i}^1)$ when $\left|\mathcal{I}'\right|> 1.$
   \end{enumerate}
\end{lemma}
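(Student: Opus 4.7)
The plan is to reduce both parts to a standard ``leaf of a tree'' argument applied to an auxiliary tree built from the classical subgraphs. I will focus on Part~\ref{lemma_TS_composition_bf_decoder:part1}; Part~\ref{lemma_TS_composition_bf_decoder:part2} is symmetric under the exchange of the roles of $\mathscr{G}_1$ and $\mathscr{G}_2$ (swap VV-type with CC-type, $\mathcal{I}$ with $\mathcal{J}$, and $v_i^1$ with $c_j^2$).

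First I would unpack how two induced subgraphs $\mathscr{T}(c_j^2 v_i^1)$ and $\mathscr{T}(c_{j'}^2 v_{i'}^1)$ can share a VV-type variable node. By \eqref{eq:hp_cx_neighbor}, the VV-type nodes in $\mathscr{T}(c_j^2 v_i^1)$ are of the form $v_p^2 v_i^1$ with $v_p^2 \in \mathcal{N}^2_{c_j^2}$. Hence a shared VV-type node forces $i=i'$ and a shared classical variable node $v_p^2 \in \mathcal{N}^2_{c_j^2} \cap \mathcal{N}^2_{c_{j'}^2}$. Consequently, for a fixed $i \in \mathcal{I}$, the set of VV-type variable nodes in $\mathscr{T}(c_l^2 v_i^1)$ that also sit in $\mathscr{T}(\mathbf{h})\setminus \mathscr{T}(c_l^2 v_i^1)$ is exactly
\begin{equation*}
\Bigl\{\, v_p^2 v_i^1 \;:\; v_p^2 \in \mathcal{N}^2_{c_l^2} \cap \bigcup_{j \in \mathcal{J}\setminus\{l\}} \mathcal{N}^2_{c_j^2} \,\Bigr\}.
\end{equation*}
So the lemma reduces to finding $l\in\mathcal{J}$ for which this intersection in $\mathscr{G}_2$ is a singleton.

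Next I would introduce the \emph{check-projected} auxiliary graph $\mathcal{T}_c$ on vertex set $\{c_j^2 : j \in \mathcal{J}\}$, placing an edge between $c_j^2$ and $c_{j'}^2$ whenever $\mathcal{N}^2_{c_j^2}\cap \mathcal{N}^2_{c_{j'}^2}\neq \emptyset$. Using the hypothesis that $\cup_{j\in\mathcal{J}}\mathscr{T}(c_j^2)$ is a connected cycle-free subgraph of $\mathscr{G}_2$, I would verify two facts: (i) any two check nodes share at most one classical variable node (otherwise a $4$-cycle would appear in the bipartite tree), so each edge of $\mathcal{T}_c$ carries a unique ``witness'' variable node; and (ii) $\mathcal{T}_c$ is itself a tree (connectedness follows from the connectedness of the bipartite tree, and a cycle in $\mathcal{T}_c$ would lift to a cycle in the bipartite tree via the witness variables). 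Since $|\mathcal{J}|>1$, $\mathcal{T}_c$ has at least one leaf $c_l^2$, whose unique $\mathcal{T}_c$-neighbor $c_{l'}^2$ supplies a unique witness $v_k^2 \in \mathcal{N}^2_{c_l^2}$ lying in some other $\mathcal{N}^2_{c_j^2}$. This makes the intersection displayed above equal to $\{v_k^2 v_i^1\}$, proving the ``exactly one'' claim for every $i\in\mathcal{I}$.

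For the recursive part, I would delete $l$ from $\mathcal{J}$ and observe that $\cup_{j\in\mathcal{J}'}\mathscr{T}(c_j^2)$ is still a connected cycle-free subgraph of $\mathscr{G}_2$: removing a leaf of $\mathcal{T}_c$ together with the $\mathcal{N}^2_{c_l^2}$-variables that are pendant to $c_l^2$ in the bipartite tree corresponds to pruning a ``broom'' off the tree, which preserves both connectedness and acyclicity. Hence the hypothesis of the lemma is again satisfied by $\mathscr{T}(\mathbf{h}')$, and one may re-apply the leaf argument whenever $|\mathcal{J}'|>1$. Part~\ref{lemma_TS_composition_bf_decoder:part2} follows by the dual construction: use the \emph{variable-projected} auxiliary graph $\mathcal{T}_v$ on $\{v_i^1:i\in\mathcal{I}\}$ with $v_i^1\sim v_{i'}^1$ iff $\mathcal{N}^1_{v_i^1}\cap \mathcal{N}^1_{v_{i'}^1}\neq\emptyset$, take a leaf $v_k^1$, and repeat.

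The mildly subtle step I expect to be the main obstacle is the verification that $\mathcal{T}_c$ (resp.\ $\mathcal{T}_v$) is acyclic. A careful lifting argument is needed: given a hypothetical cycle $c_{j_1}^2, c_{j_2}^2, \ldots, c_{j_r}^2, c_{j_1}^2$ in $\mathcal{T}_c$ with witness variables $v_{k_1}^2,\ldots,v_{k_r}^2$, I would concatenate the length-two paths $c_{j_t}^2\,v_{k_t}^2\,c_{j_{t+1}}^2$ and argue, using the fact that the witnesses are distinct (a consequence of (i)) and that consecutive check nodes are distinct, that the resulting closed walk in the bipartite tree is a genuine cycle, contradicting the tree hypothesis. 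Once this is established, the rest of the argument is routine bookkeeping.
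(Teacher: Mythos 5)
Your reduction is correct and matches the paper's starting point: a shared VV-type node forces a common classical variable node, so everything hinges on finding $l \in \mathcal{J}$ for which $\mathcal{N}^2_{c_l^2} \cap \bigcup_{j \in \mathcal{J}\setminus\{l\}}\mathcal{N}^2_{c_j^2}$ is a singleton. The genuine gap is your step (ii): the check-projected graph $\mathcal{T}_c$ need \emph{not} be acyclic, and the lifting argument you flag as the main obstacle does in fact fail. Property (i) only says that a \emph{fixed pair} of checks shares at most one variable; it does not make the witnesses along a $\mathcal{T}_c$-cycle distinct, because different pairs of checks may share the \emph{same} variable. Concretely, let a single variable node $v^2$ be adjacent to three checks $c_1^2,c_2^2,c_3^2$ and let $\mathcal{J}$ index exactly these three checks: the induced bipartite subgraph is a star, hence connected and cycle-free, so the lemma's hypotheses hold; yet $\mathcal{T}_c$ is a triangle, every vertex has degree two, and there is no leaf from which to start your argument. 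The closed walk you would lift, $c_1^2\,v^2\,c_2^2\,v^2\,c_3^2\,v^2\,c_1^2$, is degenerate (it reuses $v^2$), so it yields no contradiction with the tree hypothesis.

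Note that the lemma's conclusion still holds in this configuration (each check meets the union of the other two neighborhoods exactly in $\{v^2\}$), so the defect lies in your choice of auxiliary graph, not in the statement. The repair is to stay inside the bipartite tree rather than project it: take the subtree induced by the check nodes together with the \emph{shared} variable nodes (those adjacent to at least two checks in $\cup_{j\in\mathcal{J}}\mathscr{T}(c_j^2)$). It is connected because the original subgraph is, it is acyclic as an induced subgraph of a tree, and every shared variable has degree at least two in it, so \emph{all} of its leaves are checks; a leaf check is adjacent to exactly one shared variable, which is exactly the singleton you need. This is a strictly weaker property than being a leaf of $\mathcal{T}_c$, and also weaker than the paper's own intermediate claim that some check ``shares a variable node with only one other check node'' (which likewise fails in the hub example above, even though the paper's conclusion survives): the robust invariant is ``adjacent to exactly one shared variable,'' and in the star example every check satisfies it. With this substitution your pruning step for the recursion and the symmetric treatment of Part~2 go through as you describe.
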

\begin{proof}
    \begin{enumerate}
        \item Note that subgraph $\mathscr{T}(\mathbf{h})$ of graph $\mathscr{G}_\mathrm{Z}$ is the graph product of connected subgraphs $\cup_{j \in J}\mathscr{T}(c_j^2)$ and $\cup_{i \in I}\mathscr{T}(v_i^1)$ of Tanner graphs $\mathscr{G}_2$ and $\mathscr{G}_1$, respectively. 
     Consider the case where $\left|\mathcal{J}\right|>1$.
      Since subgraph $\cup_{j \in J}\mathscr{T}(c_j^2)$ is connected and finite, there exists a $l \in \mathcal{J}$ such that check node $c_l^2$ shares a variable node with only one other check node in $\cup_{j \in J}\mathscr{T}(c_j^2)$, i.e., there exists a $l' \in \mathcal{J}$, $\mathcal{N}_{c_{l}} \cap \mathcal{N}_{c_{l'}} \neq \emptyset$ and for every $l'' \in \mathcal{J} \setminus \{l,l'\}$, $\mathcal{N}_{c_{l}} \cap \mathcal{N}_{c_{l''}} = \emptyset$.
      Furthermore, $\left|\mathcal{N}_{c_{l}} \cap \mathcal{N}_{c_{l'}}\right|=1$ since subgraph $\cup_{j \in J}\mathscr{T}(c_j^2)$ is cycle-free.
      Denote the variable node that is connected to both $c_l^2$ and $c_{l'}^2$ by $v^2$.
      From the above discussion, it follows that $\mathscr{T}(c_{l'}^2v_i^1)$ and $\mathscr{T}(c_{l}^2v_i^1)$, for any $i \in \mathcal{I}$, have one common variable node, which is given by
      \begin{align*}
          \left( \mathcal{N}_{c_l}^2 \bigcap  \mathcal{N}_{c_{l'}}^2\right) \times v_i^1
          =v^2v_i^1,
      \end{align*}
       proving the case where $\left|\mathcal{J}\right|>1$.
      The remainder of the proof follows by defining $\mathcal{J}=\mathcal{J}'\setminus l$ and subsequently following the above steps.
      \item The proof follows similar arguments as in the proof of part~\ref{lemma_TS_composition_bf_decoder:part1}.
    \end{enumerate}
\end{proof}

\textbf{Proof of Theorem~\ref{theorem:TS_aware_bf_decoder}}

    Consider subgraph $\mathscr{T}(c_l^2v_i^1)$, for $i \in \mathcal{I}$ and $l \in \mathcal{J}$ , which has  one variable node in common with the remainder of the graph, i.e., 
      \begin{equation}
       \label{eq:vn_share_condition}
      \left|\mathscr{T}(c_l^2v_i^1) \bigcap \left(\mathscr{T}(\mathbf{h})\setminus \mathscr{T}(c_l^2v_i^1)\right)\right|=1
      \end{equation}
      In part~\ref{lemma_TS_composition_bf_decoder:part1}, it has been shown that such subgraphs always exist.
     We start by studying the convergence of the decoder within subgraph $\mathscr{T}(c_l^2v_i^1)$,  independent of the decoder's behavior within $\mathscr{T}(\mathbf{h})\setminus \mathscr{T}(c_l^2v_i^1)$. 
     In particular, we demonstrate that the TS-aware decoder, as detailed in Algorithm ~\ref{alg:TS_aware_bit_flipping_decoder}, achieves convergence within the subgraph $\mathscr{T}(c_l^2v_i^1)$, with the exception of the variable node it shares with $\mathscr{T}(\mathbf{h})\setminus \mathscr{T}(c_l^2v_i^1)$. 
     Upon convergence of the decoder within subgraph $\mathscr{T}(c_l^2v_i^1)$, the errors from $\mathscr{T}(\mathbf{h})\setminus \mathscr{T}(c_l^2v_i^1)$ do not propagate into the subgraph $\mathscr{T}(c_l^2v_i^1)$ since the subgraph $\mathscr{T}(\mathbf{h})\setminus \mathscr{T}(c_l^2v_i^1)$ forms a TS. Consequently, these procedures may be iterated to study the decoder's convergence in $\bigcup_{i \in \mathcal{I}' j \in \mathcal{J}' }\mathscr{T}(c_j^2v_i^1)$, where $\mathcal{I}'= \mathcal{I}'\setminus k$ and $\mathcal{J}'=\mathcal{J} \setminus l$, provided it is established that the decoder converges after a finite number of iterations across all subgraphs of the form $\mathscr{T}(c_l^2v_i^1)$ for all $i \in \mathcal{I}$, and of the form $\mathscr{T}(c_{j}^2v_{k}^1)$ for all $j \in \mathcal{J}$, where they respectively share a VV-type and CC-type variable node with $\mathscr{T}(\mathbf{h})\setminus \mathscr{T}(c_l^2v_i^1)$ and  $\mathscr{T}(\mathbf{h})\setminus \mathscr{T}(c_j^2v_k^1)$.
    
     Subsequently, we demonstrate that under the conditions delineated in \eqref{eq:vn_share_condition}, the decoder described in Algorithm~\ref{alg:TS_aware_bit_flipping_decoder} converges to the accurate estimate or to the erroneous estimate on all variable nodes located exclusively within subgraph $\mathscr{T}(c_l^2v_i^1)$, after a predetermined number of iterations.
    Assume that subgraph $\mathscr{T}(c_l^2v_i^1)$ shares a VV-type variable node with graph $\mathscr{T}(\mathbf{h})\setminus \mathscr{T}(c_l^2v_i^1)$.
    Also, assume that among the variable nodes that are exclusively present within $\mathscr{T}(c_l^2v_i^1)$, $\beta$ CC-type variable nodes are erroneous and $\alpha$ VV-type variable nodes are erroneous.
    Recall that in the first iteration, the decoder flips the error estimate only on the VV-type variable nodes based on the number of unsatisfied checks connected to them. 
     Using similar arguments as in Lemma~\ref{lemma:unsatisfied_checks_in_TS}, it can be deduced that each of the erroneous VV-type variable nodes exclusive to subgraph $\mathscr{T}(c_l^2v_i^1)$ is connected to $d_v^1-\beta$ unsatisfied checks.
     The status of the check nodes connected to the VV-type variable node common to both $\mathscr{T}(c_l^2v_i^1)$ and $\mathscr{T}(\mathbf{h})\setminus \mathscr{T}(c_l^2v_i^1)$ is unknown since the status of the variable nodes within $\mathscr{T}(\mathbf{h})$ are unknown.
    However, these check nodes do not affect the number of unsatisfied checks connected to VV-type variable nodes that are exclusively present in $\mathscr{T}(c_l^2v_i^1)$, because according to Part~\ref{non_intersecting_sypport_vns} of Lemma~\ref{lemma:structures_of_graph_induced_by_stabilizers}, no two variable nodes present in $\mathscr{T}(c_l^2v_i^1)$ have common check nodes as neighbors.
    Therefore, according to Lemma~\ref{lemma:unsatisfied_checks_in_TS}, the VV-type variable nodes present in $\mathscr{T}(c_l^2v_i^1)$, and not in $\mathscr{T}(\mathbf{h})\setminus \mathscr{T}(c_l^2v_i^1)$, are connected to $d_v^1-\beta$ unsatisfied checks.
     Using analogous reasoning, it can be inferred that each of the non-erroneous VV-type variable nodes, excluding the common VV-type variable node,  is connected to $\beta$ a unsatisfied checks.
    We divide the proof into two cases: i) $\beta \leq \left\lfloor\frac{d_v^1}{2}\right\rfloor$, ii)$\beta > \left\lfloor\frac{d_v^1}{2}\right\rfloor$.

    Case i): Since $\beta \leq \left\lfloor\frac{d_v^1}{2}\right\rfloor$, the erroneous VV-type variable nodes are connected to 
    $$d_v^1-\beta > d_v^1-\left\lfloor\frac{d_v^1}{2}\right\rfloor\geq\frac{d_v^1}{2}+1$$
    unsatisfied checks, leading the decoder to flip these erroneous VV-type variable nodes.
    On the other hand, the non-erroneous VV-type variable nodes are connected to $\left\lfloor\frac{d_v^1}{2}\right\rfloor$ unsatisfied checks and therefore are not flipped.
    As a result, at the beginning of the next iteration,  there are now only $\beta$ erroneous CC-type variable nodes, and the state of the shared VV-type variable node is unknown.
    From Lemma~\ref{lemma:unsatisfied_checks_in_TS}, it can be concluded that each erroneous CC-type variable node is at least connected to the $d_c^2-1$ checks, which leads the decoder to flip the erroneous CC-type variable nodes.
    The correct CC-type variable nodes are connected to at most one unsatisfied checks, and therefore the decoder leaves them changed.
    The above arguments show that, in this case, the decoder converges to the correct estimate of the error on the variable nodes that are only present in the subgraph $\mathscr{T}(c_l^2v_i^1)$.
   
    Case ii): Since $\beta > \left\lfloor\frac{d_v^1}{2}\right\rfloor$, the erroneous VV-type variable nodes are connected to 
    $$d_v^1-\beta \leq d_v^1-\left\lfloor\frac{d_v^1}{2}\right\rfloor-1\leq\frac{d_v^1}{2}$$ unsatisfied checks.
    The non-erroneous VV-type variable nodes are connected to at least $\left\lfloor\frac{d_v^1}{2}\right\rfloor+1$ unsatisfied checks. Therefore, the decoder leaves the erroneous VV-type variable unchanged and flips the correct VV-type variable node.
     As a result, there are $d_c^2-1$ erroneous VV-type variable nodes and $\beta$ erroneous CC-type variable nodes at the beginning of the subsequent iteration. The state of the shared VV-type variable node is unknown.
     From Lemma~\ref{lemma:unsatisfied_checks_in_TS}, it can be concluded that each erroneous CC-type variable node is connected to at most one unsatisfied check nodes, and hence the decoder does not flip the erroneous CC-type variable nodes.
     The correct variable nodes are connected to $d_c^2-1$ unsatisfied checks, so the decoder flips the correct CC-type variable nodes.
     In subsequent iterations, the decoder does not change the state of the variable nodes that are only present in subgraph $\mathscr{T}(c_l^2v_i^1)$.
     
Now consider the case where $\mathscr{T}(c_j^2v_k^1)$ and $\mathscr{T}(\mathbf{h})\setminus \mathscr{T}(c_j^2v_k^1)$ have a common CC-type variable node.  The proof for this case is similar to that for the case in which $\mathscr{T}(c_l^2v_i^1)$ shares a VV-type variable node with $\mathscr{T}(\mathbf{h})\setminus \mathscr{T}(c_l^2v_i^1)$.
This case is analyzed from the second iteration.
Regardless of what happened in the first iteration, it can be assumed that among the variable nodes that are exclusively present within $\mathscr{T}(c_l^2v_i^1)$, $\beta$ CC-type variable nodes are erroneous and $\alpha$ VV-type variable nodes are erroneous.
Using similar arguments as in Lemma~\ref{lemma:unsatisfied_checks_in_TS}, it can be deduced that each of the erroneous CC-type variable nodes exclusive to subgraph $\mathscr{T}(c_l^2v_i^1)$ is connected to $d_c^2-\alpha$ unsatisfied checks.
It can also be inferred that each of the non-erroneous CC-type variable nodes, excluding the common CC-type variable node, is connected to $\alpha$  unsatisfied checks.
Like in the earlier case, the analysis can be divided into two cases: i) $\alpha \leq \left\lfloor\frac{d_c^2}{2}\right\rfloor$, ii)$\alpha > \left\lfloor\frac{d_c^2}{2}\right\rfloor$.
The decoder converges to the correct error estimate on all variable nodes, excluding the common CC-type variable nodes, when $\alpha \leq \left\lfloor\frac{d_c^2}{2}\right\rfloor$. In contrast, it converges to the wrong estimate on all variable nodes, excluding the common CC-type variable node, when $\alpha > \left\lfloor\frac{d_c^2}{2}\right\rfloor$.
Consequently, it has been shown that the decoder reaches convergence on all subgraphs of $\mathscr{T}(\mathbf{h})$ that shares only one variable node with rest of the graph.

  When the decoder converges to the actual error in each of the variable nodes of the subgraphs that share only one variable node with the rest of the graph, the convergence analysis can be extended to the rest of the graph by defining $\mathcal{I}'=\mathcal{I} \setminus k$ and $\mathcal{J}'=\mathcal{J}\setminus l$ .
 Now consider the scenario in which the decoder converges to the wrong error on each of the variable nodes of the subgraphs that share only one variable node with the rest of the graph. 
 Considering that these subgraphs are induced by a stabilizer, this scenario is equivalent to the decoder converging to a wrong estimate on the common variable node and a correct estimate on the remaining variable nodes.
Consequently, the analysis is similarly applicable to the remainder of the graph in this case.
With this, it has been shown that the TS-aware decoder converges to the correct error estimate up to a stabilizer.   

    \section{Proof of Lemma~\ref{lemma:stabilizer_induced_TSs_LP_codes}}
    \label{sec:appendix_proof_stabilizer_induced_TSs_LP_codes}
Let $\underline{v}^2_i$ denote the $i$-th variable node connected to the check node $\underline{c}^2$ within the base graph $\underline{\mathscr{G}}_2$, for $i \in [d_c^2]$. 
  The label corresponding to the edge between the check node $\underline{c}^2$ and the variable node $\underline{v}_i^2$ is denoted by $x^{\mu_i}$. 
  Similarly, let $\underline{c}^1_j$ denote the $j$-th check node connected to the variable node $\underline{v}^1$ within the base graph $\underline{\mathscr{G}}_1$, for $j \in [d_v^1]$. 
  The label corresponding to the edge between variable node $v^1$ and check node $c_j^1$ is denoted by $x^{\nu_j}$. 
  The neighbors of variable node $\underline{c}^2\underline{c}^1_j$ are given by $\mathcal{N}^2_{\underline{c}} \times \underline{c}_j^1$ and the label of the edge that connects $\underline{c}^2\underline{c}^j$ to $\underline{v}^2_i\underline{c}^j$ is the same as that connects $\underline{c}^2$ to $\underline{v}_i^2$ in Tanner graph $\mathscr{G}_2$.
  Observe that within subgraph $\mathscr{T}(\underline{c}_j^2\underline{v}_i^1)$, the paths originating from the CC-type variable node $\underline{c}^2\underline{c}_1^1$ end at the CC-type variable nodes of the form $\underline{c}^2\underline{c}_{j}^1$ for $j\in [d_v^1]\setminus1$. 
 Let the $k$-th copy of the CC-type variable node $\underline{c}^2\underline{c}_{j}^1$, the VV-type variable node $\underline{v}_i^2\underline{v}^1$, and check node $\underline{v}_i^2\underline{c}_j^1$ be represented by $\underline{c}^2\underline{c}_{j}^1(k)$, $\underline{v}_i^2\underline{v}^1(k)$, and $\underline{v}_i^2\underline{c}_j^1(k)$, respectively.
  Without loss of generality, after the lifting procedure, consider the paths emanating from the first copy of the CC-type variable node $\underline{c}^2\underline{c}_{1}^1(1)$ in the lifted version of $\mathscr{T}(\underline{c}^2\underline{v}^1)$. 
  From the construction of protograph LDPC codes, it becomes evident that among the paths emanating from the first copy of $\underline{c}^2\underline{c}_1^1$, there exist exactly $d_c^2$ paths that end at different copies of the CC-type variable node $\underline{c}^2\underline{c}_{j}^1$ for $j\in [d_v^1]\setminus 1$. 
  However, it remains unclear whether each of the $d_c^2$ paths, which should end at the copies of $\underline{c}^2\underline{c}_{j}^1$, ends at a single copy of the variable node $\underline{c}^2\underline{c}_{j}^1$ in the lifted version.
 In the subsequent analysis, we demonstrate the aforementioned statements, thus showing the existence of $\gamma$ isomorphic copies of subgraph $\mathscr{T}(\underline{c}^2\underline{v}^1)$ after the lifting process.

We start by traversing one step along the $d_c^2$ paths originating from CC-type variable node  $\underline{c}^2\underline{c}_1^1(1)$.
Given that $\underline{c}^2\underline{c}_1^1$ is connected to the check node $\underline{v}^2_i\underline{c}^1_j$ via an edge labeled $x^{\mu_i}$ within subgraph $\mathscr{T}(\underline{c}^2\underline{v}^1)$, Lemma~\ref{lemma:intermediate_result_copy_and_permute} implies that $\underline{c}^2\underline{c}_1^1(1)$ is connected to $\underline{v}^2_i\underline{c}^1_1(\mu_i+1)$, where $i \in [d_c^2]$.
Given that all the check nodes within subgraph $\mathscr{T}(\underline{c}^2\underline{v}^1)$ have degree two, the paths originating from $\underline{c}_j^2\underline{c}_1^1(1)$ do not branch into more paths at check node $\underline{v}^2_i\underline{c}^1_1(\mu_i+1)$ for any $i \in [d_c^2]$.
Observe that in subgraph $\mathscr{T}(\underline{c}^2\underline{v}^1)$ of the base graph, check node $\underline{v}^2_i\underline{c}^1_1$ is connected to $\underline{v}^2_i\underline{v}^1$ through an edge labeled $x^{\nu_1}$ for any $i \in [d_c^2]$.
Lemma~\ref{lemma:intermediate_result_copy_and_permute} implies that in the lifted version, check node $\underline{v}^2_i\underline{c}^1_1(\mu_i+1)$ is connected to variable node $\underline{v}^2_i\underline{v}^1(\tau_i)$, where $\tau_i= \mu_i+1-\nu_1-1 (\text{ mod } \gamma)+1$.
Thus, in the lifted version of subgraph $\mathscr{T}(\underline{c}^2\underline{v}^1)$, there exist $d_c^2$ paths of length two that originate from $\underline{c}^2\underline{c}^1_1(1)$ and end at $\underline{v}_i^2\underline{v}^1(\tau_i)$, for $i \in [d_c^2]$.
Each path among these $d_c^2$ paths further branches into $d_v^1-1$ paths since the VV-type variable nodes are of degree $d_v^1$ in subgraph $\mathscr{T}(\underline{c}^2\underline{v}^1)$.
It should be noted that a path of length two exists between the variable node $\underline{v}_i^2\underline{v}^1$ and check node $\underline{c}\underline{c}_j^1$, for every $i \in [d_c^2]$ and for any fixed $j \in [d_v^1]\setminus 1$, within subgraph $\mathscr{T}(\underline{c}^2\underline{v}^1)$ of the base graph.
Moreover, among these paths of length two, the edge connecting the variable node $\underline{v}_i^2\underline{v}$ to the check node $\underline{v}_i^2\underline{c}_j^1$ has label $\nu_j$, while the edge linking the check node $\underline{v}_i^2\underline{c}_j^1$ to the variable node $\underline{c}^2\underline{c}_j^1$ has label $\mu_i$.
Now consider the copies of these length two paths in the lifted version of subgraph $\mathscr{T}(\underline{c}^2\underline{v}^1)$ starting from variable node $\underline{v}_i^2\underline{v}(\tau_i)$.
From the application Lemma~\ref{lemma:intermediate_result_copy_and_permute} twice, it follows that
variable node  $\underline{v}_i^2\underline{v}(\tau_i)$ is connected to $\tau_i+\nu_j-\mu_i-1 (\text{ mod } \gamma )+1$-th copy of variable node $\underline{c}^2\underline{c}^1_j$, for any $i \in [d_c^2]$ and a fixed $j \in [d_v^1]\setminus 1 $.
Note that $\tau_i = \mu_i-\nu_1(\text{ mod } \gamma)+1$; therefore, $\underline{v}_i^2\underline{v}(\tau_i)$ is connected to $\nu_j-\nu_1-1 (\text{ mod } \gamma )+1$-th copy of $\underline{c}^2\underline{c}^1_j$ for any $i \in [d_v^2]$.
We have shown that there exist $d_c^2$ paths among those that start from a copy of $\underline{c}^1\underline{c}_1^1$ and end at the same copy of $\underline{c}^1\underline{c}_j^1$ for any $j \in [d_c^2]\setminus 1$ in the lifted version of subgraph $\mathscr{T}(\underline{c}^2\underline{v}^1)$, which implies that the lifted graph has $\gamma$ isomorphic copies of $\mathscr{T}(\underline{c}^2\underline{v}^1)$. 
The proof is shown graphically in Figure~\ref{fig:LP_TS_pictorial_proof} for a base graph which has degree-three VV-type variable nodes and degree-four CC-type variable nodes.

\section{Decoding Dynamics of TSs}
\label{appendix:decoding_dynamics}
The appendix illustrates the decoding dynamics of TSs induced by several linear combinations of stabilizer generators, in which VV-type and CC-type variable nodes have degrees three and four, respectively. The inner shape of the erroneous variable nodes and unsatisfied checks is depicted in black. The numbers above and below each variable node denote the number assigned to it according to the procedure described in Lemma~\ref{lemma:vn_only_graphical_description} and the number of unsatisfied checks connected to it, respectively.
Note that in each of these illustrations, the erroneous variable nodes in the first iteration are the same as those in the last iteration, concluding that each of these subgraphs constitutes a TS. 
\begin{figure*}
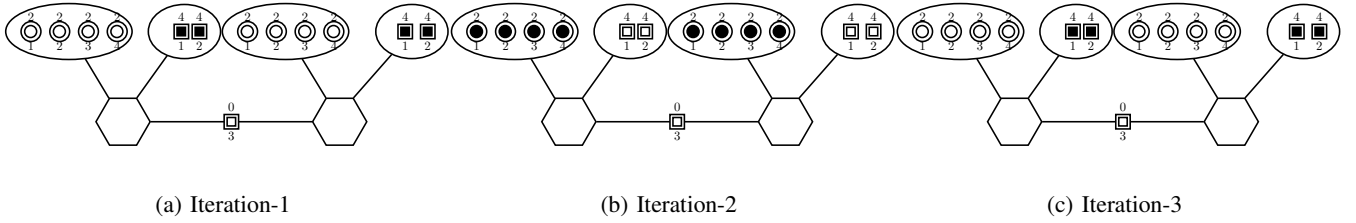

\begin{subfigure}{0.32\textwidth}
\input{Figures/Journal_figures/decoding_dynamic_TS1/iteration1}
    \caption{Iteration-1}
\end{subfigure}
\begin{subfigure}{0.32\textwidth}
\input{Figures/Journal_figures/decoding_dynamic_TS1/iteration2}
    \caption{Iteration-2}
\end{subfigure}
\begin{subfigure}{0.32\textwidth}
\input{Figures/Journal_figures/decoding_dynamic_TS1/iteration3}
    \caption{Iteration-3}
\end{subfigure}
    \caption{Trapping set~1}
    \label{fig:decoding_dynamics_of_TS1}
\end{figure*}
\begin{figure*}
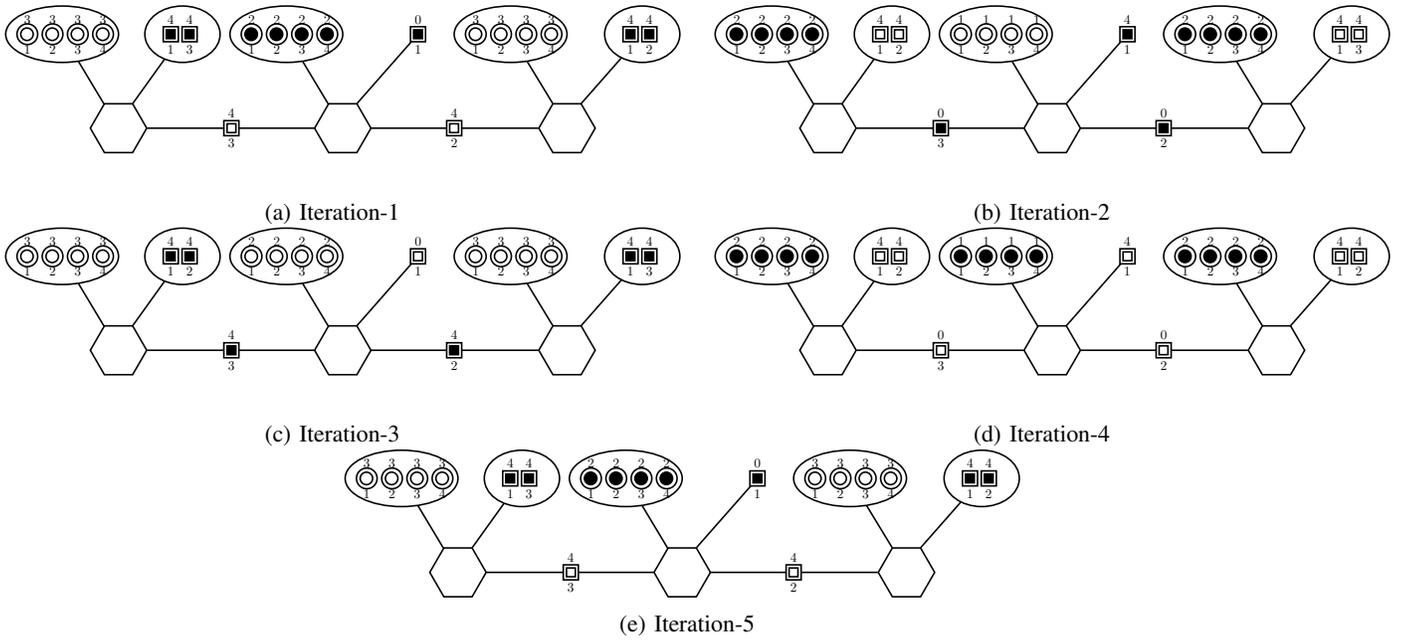

\begin{subfigure}{0.48\textwidth}
\centering
\input{Figures/Journal_figures/decoding_dynamics_TS4/iteration1}
    \caption{Iteration-1}
\end{subfigure}
\begin{subfigure}{0.48\textwidth}
\input{Figures/Journal_figures/decoding_dynamics_TS4/iteration2}
\centering
    \caption{Iteration-2}
\end{subfigure}
\begin{subfigure}{0.48\textwidth}
\centering
\input{Figures/Journal_figures/decoding_dynamics_TS4/iteration3}
    \caption{Iteration-3}
\end{subfigure}
\begin{subfigure}{0.48\textwidth}
\centering
\input{Figures/Journal_figures/decoding_dynamics_TS4/iteration4}
    \caption{Iteration-4}
\end{subfigure}
\begin{subfigure}{\textwidth}
\centering
\input{Figures/Journal_figures/decoding_dynamics_TS4/iteration5}
    \caption{Iteration-5}
\end{subfigure}
    \caption{Trapping set~2}
    \label{fig:decoding_dynamics_of_TS2}
\end{figure*}
\begin{figure*}
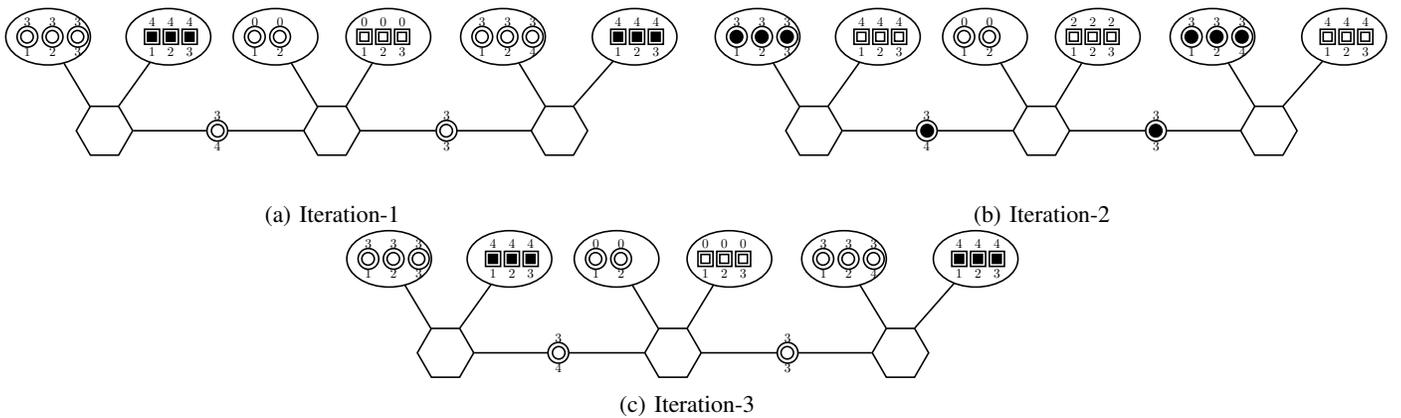

\begin{subfigure}{0.48\textwidth}
\centering
\input{Figures/Journal_figures/decoding_dynamics_TS5/iteration5}
    \caption{Iteration-1}
\end{subfigure}
\begin{subfigure}{0.48\textwidth}
\input{Figures/Journal_figures/decoding_dynamics_TS5/iteration6}
\centering
    \caption{Iteration-2}
\end{subfigure}
\begin{subfigure}{\textwidth}
\centering
\input{Figures/Journal_figures/decoding_dynamics_TS5/iteration7}
    \caption{Iteration-3}
\end{subfigure}
    \caption{Trapping set~3}
    \label{fig:decoding_dynamics_of_TS3}
\end{figure*}
\begin{figure*}
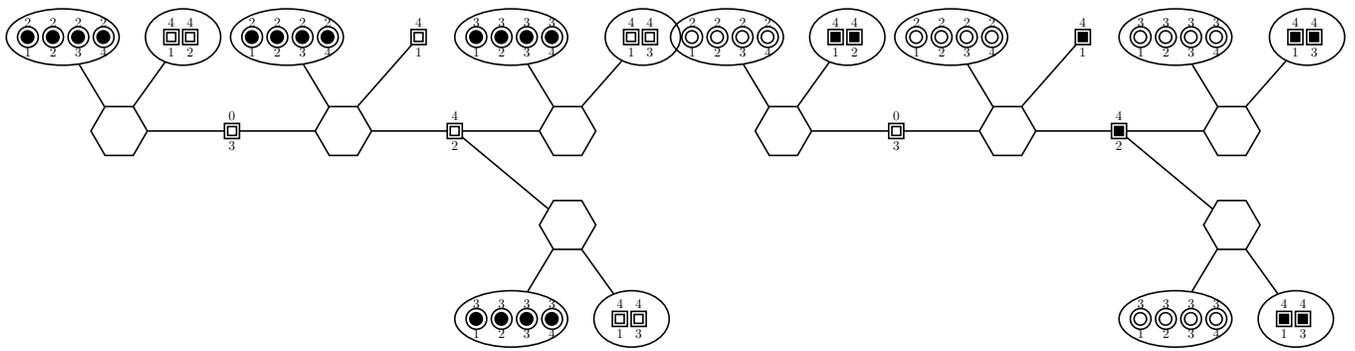
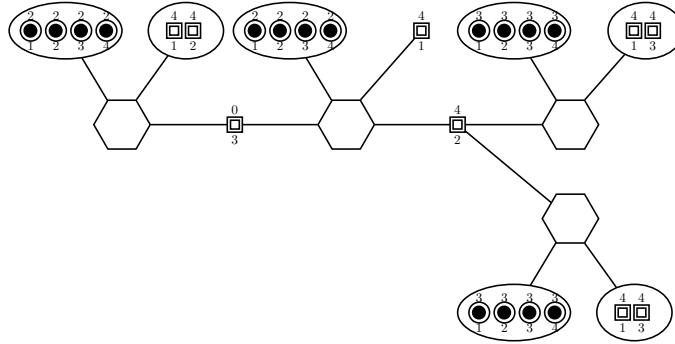

\begin{subfigure}{0.48\textwidth}
\centering
\input{Figures/Journal_figures/decoding_dynamics_TS2/iteration1}
    \caption{Iteration-1}
\end{subfigure}
\begin{subfigure}{0.48\textwidth}
\input{Figures/Journal_figures/decoding_dynamics_TS2/iteration2}
\centering
    \caption{Iteration-2}
\end{subfigure}
\centering
\begin{subfigure}{\textwidth}
\centering
\input{Figures/Journal_figures/decoding_dynamics_TS2/iteration3}
    \caption{Iteration-3}
\end{subfigure}

    \caption{Trapping set~4}
    \label{fig:decoding_dynamics_of_TS4}
\end{figure*}
\begin{figure*}
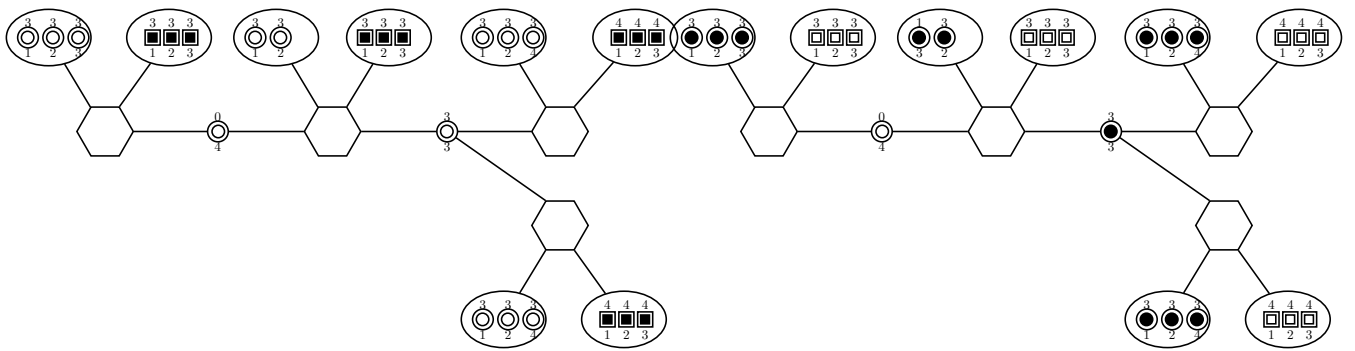
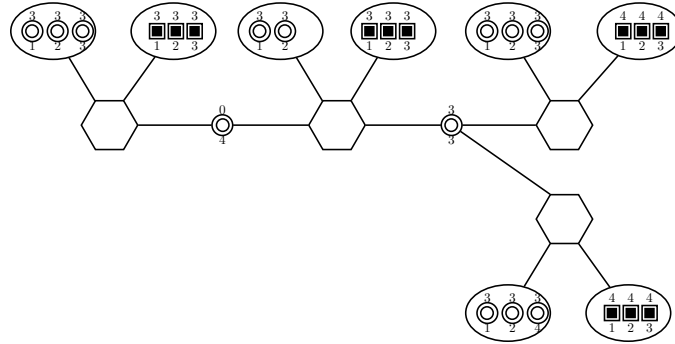

\begin{subfigure}{0.48\textwidth}
\centering
\input{Figures/Journal_figures/decoding_dynamics_TS3/iteration1}
    \caption{Iteration-1}
\end{subfigure}
\begin{subfigure}{0.48\textwidth}
\input{Figures/Journal_figures/decoding_dynamics_TS3/iteration2}
\centering
    \caption{Iteration-2}
\end{subfigure}
\centering
\begin{subfigure}{\textwidth}
\centering
\input{Figures/Journal_figures/decoding_dynamics_TS3/iteration3}
    \caption{Iteration-3}
\end{subfigure}

    \caption{Trapping set~5}
    \label{fig:decoding_dynamics_of_TS5}
\end{figure*}
\begin{figure*}
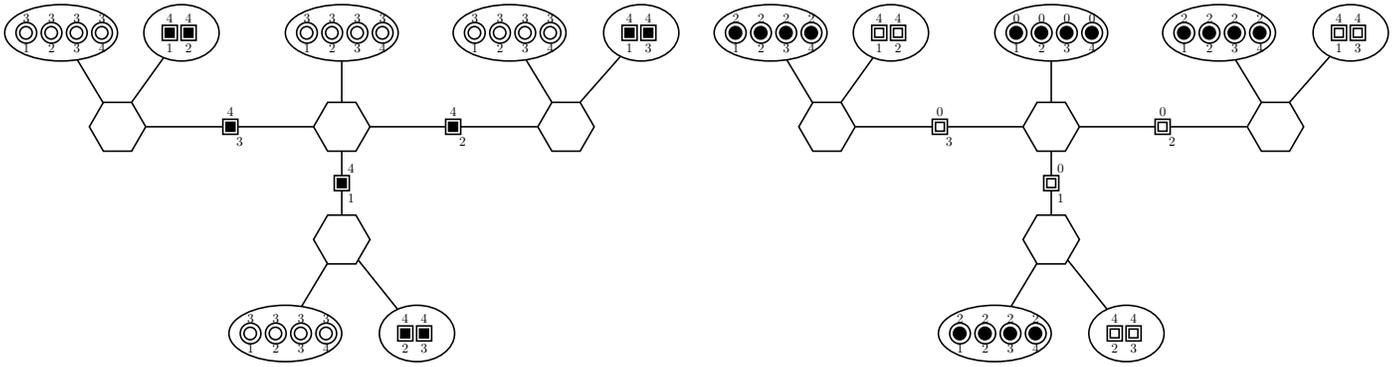
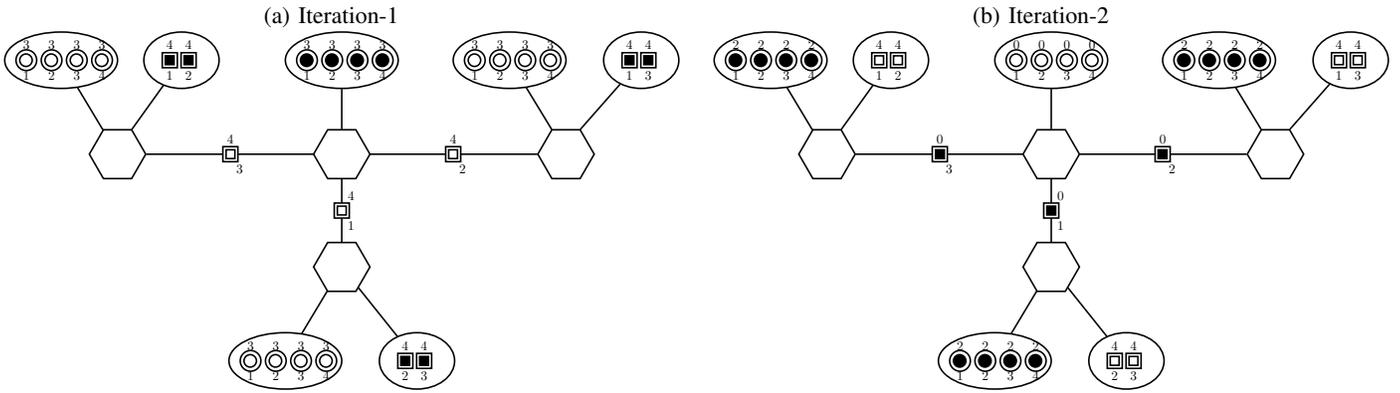
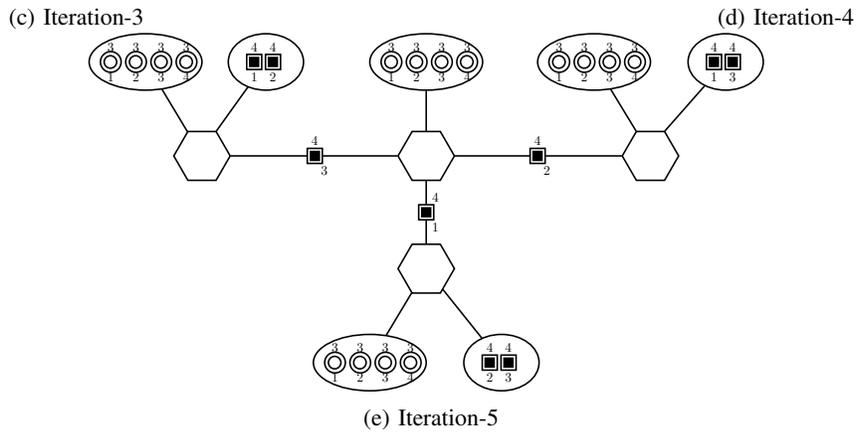

\begin{subfigure}{0.48\textwidth}
\centering
\input{Figures/Journal_figures/decoding_dynamics_TS6/iteration1}
    \caption{Iteration-1}
\end{subfigure}
\begin{subfigure}{0.48\textwidth}
\input{Figures/Journal_figures/decoding_dynamics_TS6/iteration2}
\centering
    \caption{Iteration-2}
\end{subfigure}
\begin{subfigure}{0.48\textwidth}
\centering
\input{Figures/Journal_figures/decoding_dynamics_TS6/iteration3}
    \caption{Iteration-3}
\end{subfigure}
\begin{subfigure}{0.48\textwidth}
\centering
\input{Figures/Journal_figures/decoding_dynamics_TS6/iteration4}
    \caption{Iteration-4}
\end{subfigure}
\begin{subfigure}{\textwidth}
\centering
\input{Figures/Journal_figures/decoding_dynamics_TS6/iteration5}
    \caption{Iteration-5}
\end{subfigure}
    \caption{Trapping set~6}
    \label{fig:decoding_dynamics_of_TS6}
\end{figure*}
\begin{figure*}
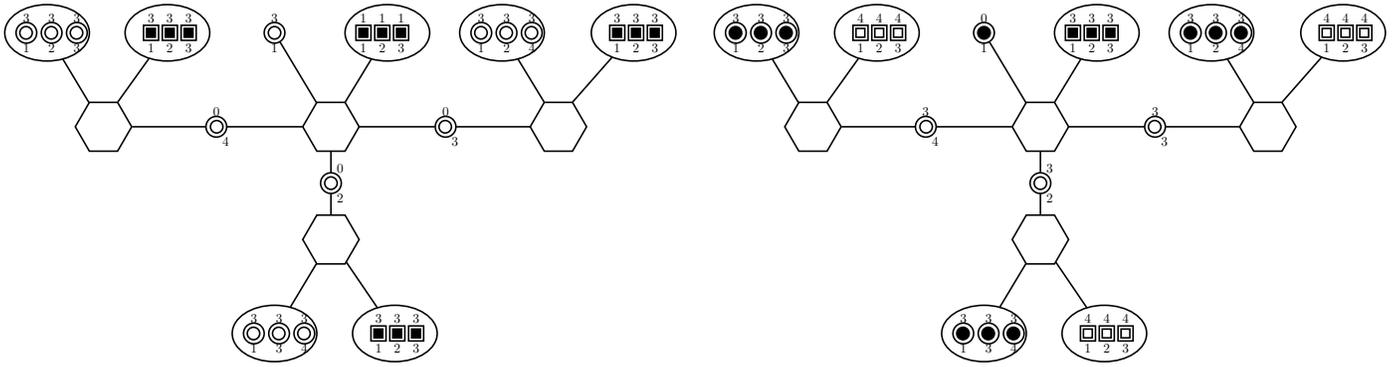
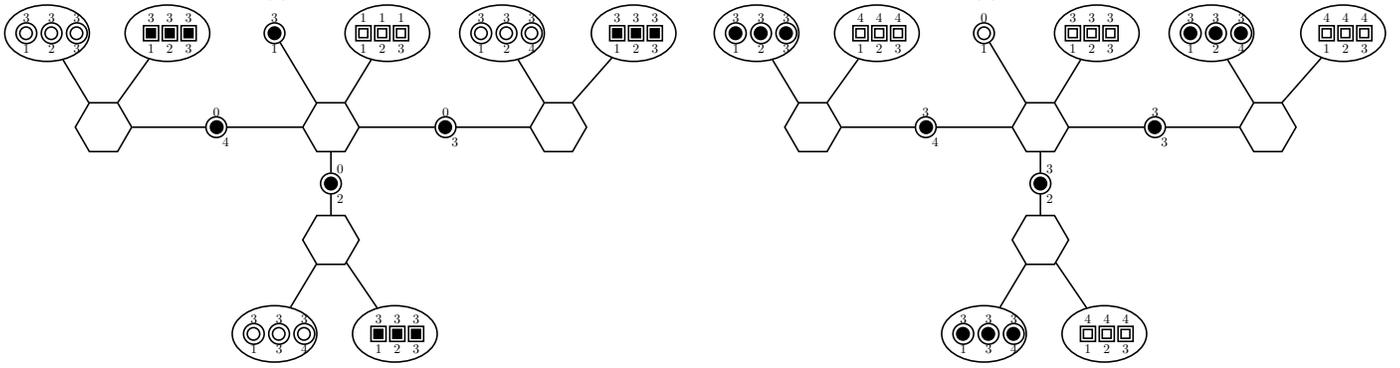
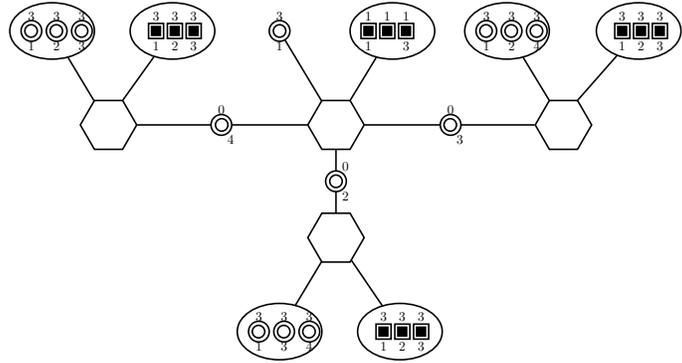

\begin{subfigure}{0.48\textwidth}
\centering
\input{Figures/Journal_figures/decoding_dynamics_TS7/iteration1}
    \caption{Iteration-1}
\end{subfigure}
\begin{subfigure}{0.48\textwidth}
\input{Figures/Journal_figures/decoding_dynamics_TS7/iteration2}
\centering
    \caption{Iteration-2}
\end{subfigure}
\begin{subfigure}{0.48\textwidth}
\centering
\input{Figures/Journal_figures/decoding_dynamics_TS7/iteration3}
    \caption{Iteration-3}
\end{subfigure}
\begin{subfigure}{0.48\textwidth}
\centering
\input{Figures/Journal_figures/decoding_dynamics_TS7/iteration4}
    \caption{Iteration-4}
\end{subfigure}
\begin{subfigure}{\textwidth}
\centering
\input{Figures/Journal_figures/decoding_dynamics_TS7/iteration5}
    \caption{Iteration-5}
\end{subfigure}
    \caption{Trapping set~7}
    \label{fig:decoding_dynamics_of_TS7}
\end{figure*}
\end{appendices}
\clearpage
\enlargethispage{-1.2cm} 
\bibliographystyle{IEEEtran}
\bibliography{ref_quantum_decoders.bib}

\begin{thebibliography}{10}
\providecommand{\url}[1]{#1}
\csname url@samestyle\endcsname
\providecommand{\newblock}{\relax}
\providecommand{\bibinfo}[2]{#2}
\providecommand{\BIBentrySTDinterwordspacing}{\spaceskip=0pt\relax}
\providecommand{\BIBentryALTinterwordstretchfactor}{4}
\providecommand{\BIBentryALTinterwordspacing}{\spaceskip=\fontdimen2\font plus
\BIBentryALTinterwordstretchfactor\fontdimen3\font minus
  \fontdimen4\font\relax}
\providecommand{\BIBforeignlanguage}[2]{{%
\expandafter\ifx\csname l@#1\endcsname\relax
\typeout{** WARNING: IEEEtran.bst: No hyphenation pattern has been}%
\typeout{** loaded for the language `#1'. Using the pattern for}%
\typeout{** the default language instead.}%
\else
\language=\csname l@#1\endcsname
\fi
#2}}
\providecommand{\BIBdecl}{\relax}
\BIBdecl

\bibitem{gottesman_fault_tolerant_ldpc}
D.~Gottesman, ``Fault-tolerant quantum computation with constant overhead,''
  \emph{Quantum Information and Computation}, vol.~14, no. 15--16, pp.
  1338--1372, Nov. 2014.

\bibitem{pryadko_fault_tolerant_ldpc}
A.~A. Kovalev and L.~P. Pryadko, ``Fault tolerance of quantum low-density
  parity check codes with sublinear distance scaling,'' \emph{Phys. Rev. A},
  vol.~87, p. 020304, Feb. 2013.

\bibitem{LP_codes}
P.~Panteleev and G.~Kalachev, ``Quantum {LDPC} codes with almost linear minimum
  distance,'' \emph{IEEE Transactions on Information Theory}, vol.~68, no.~1,
  pp. 213--229, 2022.

\bibitem{mackay_quantum}
D.~MacKay, G.~Mitchison, and P.~McFadden, ``Sparse-graph codes for quantum
  error correction,'' \emph{IEEE Transactions on Information Theory}, vol.~50,
  no.~10, pp. 2315--2330, Oct. 2004.

\bibitem{Preskill_megaquop}
\BIBentryALTinterwordspacing
J.~Preskill, ``Beyond nisq: The megaquop machine,'' \emph{ACM Transactions on
  Quantum Computing}, Mar. 2025. [Online]. Available:
  \url{https://doi.org/10.1145/3723153}
\BIBentrySTDinterwordspacing

\bibitem{delfosse2023choosedecoderfaulttolerantquantum}
\BIBentryALTinterwordspacing
N.~Delfosse, A.~Paz, A.~Vaschillo, and K.~M. Svore, ``How to choose a decoder
  for a fault-tolerant quantum computer? the speed vs accuracy trade-off,''
  2023. [Online]. Available: \url{https://arxiv.org/abs/2310.15313}
\BIBentrySTDinterwordspacing

\bibitem{xu2024constant}
Q.~Xu, J.~P. Bonilla~Ataides, C.~A. Pattison, N.~Raveendran, D.~Bluvstein,
  J.~Wurtz, B.~Vasi{\'c}, M.~D. Lukin, L.~Jiang, and H.~Zhou,
  ``Constant-overhead fault-tolerant quantum computation with reconfigurable
  atom arrays,'' \emph{Nature Physics}, vol.~20, no.~7, pp. 1084--1090, 2024.

\bibitem{Bravyi_2024}
\BIBentryALTinterwordspacing
S.~Bravyi, A.~W. Cross, J.~M. Gambetta, D.~Maslov, P.~Rall, and T.~J. Yoder,
  ``High-threshold and low-overhead fault-tolerant quantum memory,''
  \emph{Nature}, vol. 627, no. 8005, p. 778–782, Mar. 2024. [Online].
  Available: \url{http://dx.doi.org/10.1038/s41586-024-07107-7}
\BIBentrySTDinterwordspacing

\bibitem{patra2024targetedcliffordlogicalgates}
\BIBentryALTinterwordspacing
A.~Patra and A.~Barg, ``Targeted clifford logical gates for hypergraph product
  codes,'' 2024. [Online]. Available: \url{https://arxiv.org/abs/2411.17050}
\BIBentrySTDinterwordspacing

\bibitem{QTS}
N.~Raveendran and B.~Vasi{\'{c}}, ``Trapping sets of quantum {LDPC} codes,''
  \emph{{Quantum}}, vol.~5, p. 562, Oct. 2021.

\bibitem{Poulin_2008}
D.~Poulin and Y.~Chung, ``On the iterative decoding of sparse quantum codes,''
  \emph{Quantum Information and Computation}, vol.~8, no.~10, pp. 987--1000,
  Nov. 2008.

\bibitem{Nithin_stochastic_GallagerB_Quantum}
N.~Raveendran, P.~J. Nadkarni, S.~S. Garani, and B.~Vasić, ``Stochastic
  resonance decoding for quantum {LDPC} codes,'' in \emph{2017 IEEE Intl. Conf.
  on Commun. (ICC)}, May 2017, pp. 1--6.

\bibitem{inactivation_decoder}
J.~D. Crest, M.~Mhalla, and V.~Savin, ``Stabilizer inactivation for
  message-passing decoding of quantum {LDPC} codes,'' in \emph{2022 IEEE
  Information Theory Workshop (ITW)}, 2022, pp. 488--493.

\bibitem{Refined_BP}
K.-Y. Kuo and C.-Y. Lai, ``Refined belief propagation decoding of sparse-graph
  quantum codes,'' \emph{IEEE Journal on Selected Areas in Information Theory},
  vol.~1, no.~2, pp. 487--498, 2020.

\bibitem{pruned_neural_BP}
S.~Miao, A.~Schnerring, H.~Li, and L.~Schmalen, ``Neural belief propagation
  decoding of quantum {LDPC} codes using overcomplete check matrices,'' 2023.

\bibitem{Dimitri_two_bit_flipping}
D.~Chytas, N.~Raveendran, and B.~Vasić, ``Collective bit flipping-based
  decoding of quantum {LDPC} codes,'' \emph{IEEE Transactions on
  Communications}, pp. 1--1, 2025.

\bibitem{Dimitri2025enhancedminsumdecodingquantum}
\BIBentryALTinterwordspacing
D.~Chytas, N.~Raveendran, and B.~Vasic, ``Enhanced min-sum decoding of quantum
  codes using previous iteration dynamics,'' 2025. [Online]. Available:
  \url{https://arxiv.org/abs/2501.05021}
\BIBentrySTDinterwordspacing

\bibitem{pradhanISTC}
A.~K. Pradhan, N.~Raveendran, N.~Rengaswamy, X.~Xiao, and B.~Vasić, ``Learning
  to decode trapping sets in {QLDPC} codes,'' in \emph{2023 12th International
  Symposium on Topics in Coding (ISTC)}, 2023, pp. 1--5.

\bibitem{iolius2024almostlineartimedecodingalgorithm}
\BIBentryALTinterwordspacing
A.~deMarti iOlius, I.~E. Martinez, J.~Roffe, and J.~E. Martinez, ``An
  almost-linear time decoding algorithm for quantum {LDPC} codes under
  circuit-level noise,'' 2024. [Online]. Available:
  \url{https://arxiv.org/abs/2409.01440}
\BIBentrySTDinterwordspacing

\bibitem{yao2024beliefpropagationdecodingquantum}
\BIBentryALTinterwordspacing
H.~Yao, W.~A. Laban, C.~Häger, A.~G. i~Amat, and H.~D. Pfister, ``Belief
  propagation decoding of quantum {LDPC} codes with guided decimation,'' 2024.
  [Online]. Available: \url{https://arxiv.org/abs/2312.10950}
\BIBentrySTDinterwordspacing

\bibitem{BP_OSD}
J.~Roffe, D.~R. White, S.~Burton, and E.~Campbell, ``Decoding across the
  quantum low-density parity-check code landscape,'' \emph{Phys. Rev. Res.},
  vol.~2, Dec 2020.

\bibitem{henery_neural}
M.~Lian, F.~Carpi, C.~Häger, and H.~D. Pfister, ``Learned belief-propagation
  decoding with simple scaling and snr adaptation,'' in \emph{2019 IEEE
  International Symposium on Information Theory (ISIT)}, 2019, pp. 161--165.

\bibitem{FAID_diversity}
D.~Declercq, B.~Vasic, S.~K. Planjery, and E.~Li, ``Finite alphabet iterative
  decoders—part ii: Towards guaranteed error correction of {LDPC} codes via
  iterative decoder diversity,'' \emph{IEEE Transactions on Communications},
  vol.~61, no.~10, pp. 4046--4057, 2013.

\bibitem{Expansion_contraction}
N.~Raveendran, D.~Declercq, and B.~Vasić, ``A sub-graph expansion-contraction
  method for error floor computation,'' \emph{IEEE Transactions on
  Communications}, vol.~68, no.~7, pp. 3984--3995, 2020.

\bibitem{mct}
T.~Richardson and R.~Urbanke, \emph{Modern Coding Theory}.\hskip 1em plus 0.5em
  minus 0.4em\relax Cambridge University Press, 2008.

\bibitem{Gottesman96}
D.~Gottesman, ``Class of quantum error-correcting codes saturating the quantum
  {H}amming bound,'' \emph{Phys. Rev. A}, vol.~54, no.~3, pp. 1862--1868, Sept.
  1996.

\bibitem{Nielsen}
M.~A. Nielsen and I.~L. Chuang, \emph{Quantum Computation and Quantum
  Information: 10th Anniversary Edition}, 10th~ed.\hskip 1em plus 0.5em minus
  0.4em\relax New York, NY, USA: Cambridge University Press, 2011.

\bibitem{Brooks_1941}
R.~Brook, ``On colouring the nodes of a network,'' \emph{Mathematical
  Proceedings of the Cambridge Philosophical Society}, vol.~37, no.~2, p.
  194–197, 1941.

\end{thebibliography}

%
%
%
%
%
%
%

\end{document}